\numberwithin{equation}{section}
\def\bme{{\bm e}}
\def\bmzero{{\bm 0}}
\def\bmone{{\bm 1}}
\def\bmA{{\bm A}}
\def\bmB{{\bm B}}
\def\bmD{{\bm D}}
\def\bmQ{{\bm Q}}
\def\bmX{{\bm X}}
\def\bmZ{{\bm Z}}
\def\bmeta{{\bm \eta}}
\def\bmsigma{{\bm \sigma}}
\def\bmpartial{{\bm \partial}}
\def\scri{{\mathscr I}}
\newcounter{mnotecount}%[section]
\newcommand{\mnotex}[1]%{}
{\protect{\stepcounter{mnotecount}}$^{\mbox{\footnotesize $\bullet$\themnotecount}}$ 
\marginpar{%\color{red}%
\raggedright\tiny\em
$\!\!\!\!\!\!\,\bullet$\themnotecount: #1} }
\renewcommand{\leq}{\leqslant}
\renewcommand{\geq}{\geqslant}
\renewcommand{\d}{\mathrm{d}}
\theoremstyle{plain}
\newtheorem{proposition}{Proposition}
\newtheorem{lemma}{Lemma}
\newtheorem{theorem}{Theorem}
\newtheorem*{main}{Main Theorem}
\newtheorem{remark}{Remark}
\numberwithin{proposition}{section}
\numberwithin{lemma}{section}
\numberwithin{theorem}{section}
\numberwithin{assumption}{section}
\numberwithin{corollary}{section}
\numberwithin{remark}{section}
\begin{document}

\title[Regularity at future null infinity from past asymptotic data]{Controlled regularity at future null infinity from past asymptotic initial data: the wave equation}

\author[J. Marajh]{Jordan Marajh$^1$}
\address[1]{School of Mathematical Sciences, Queen Mary, University of London, Mile End Road, London E1 4NS, UK.}
\email{j.marajh@qmul.ac.uk}

\author[G. Taujanskas]{Grigalius Taujanskas$^2$}
\address[2]{Department of Pure Mathematics and Mathematical Statistics, University of Cambridge, CB3 0WB, UK.}
\email{taujanskas@dpmms.cam.ac.uk}

\author[J. A. Valiente Kroon]{Juan A. Valiente Kroon$^1$}
\email{j.a.valiente-kroon@qmul.ac.uk}

% \author[,1]{J. Marajh\footnote{E-mail address: {\tt
%       j.marajh@qmul.ac.uk}}} 
% \author[,2]{G. Taujanskas\footnote{E-mail address: {\tt
%       taujanskas@dpmms.cam.ac.uk}}}
% \author[,1]{J. A. Valiente
%   Kroon\footnote{E-mail address: {\tt j.a.valiente-kroon@qmul.ac.uk}}}

% \affil[1]{School of Mathematical Sciences,
%   Queen Mary, University of London, Mile End Road, London E1 4NS,
%   United Kingdom.}
% \affil[2]{Department of Pure Mathematics and Mathematical Statistics, University of Cambridge, CB3 0WB, United Kingdom.}

\begin{abstract}
We study the relationship between asymptotic characteristic initial data for the wave equation at past null infinity and the regularity of the solution at future null infinity on the Minkowski spacetime. By constructing estimates on a causal rectangle reaching the conformal boundary, we prove that the solution admits an asymptotic expansion near null and spatial infinity whose regularity is controlled quantitatively in terms of the regularity of the data at past null infinity. In particular, our method gives rise to solutions to the wave equation in a neighbourhood of spatial infinity satisfying the peeling behaviour, for data on past null infinity with non-compact support.  Our approach makes use of Friedrich's conformal representation of spatial infinity in which we prove delicate non-degenerate Gr\"onwall estimates. We describe the relationship between the solution and the data both in terms of Friedrich's conformal coordinates and the usual physical coordinates on Minkowski space.
\end{abstract}

\maketitle

%\textbf{Keywords:} 

%\medskip
%\textbf{PACS:} 

\setcounter{tocdepth}{1}
\tableofcontents

\section{Introduction}

This paper is a companion paper to \cite{TauVal23}, in which the second and third author studied the regularity at future null infinity of linear massless fields of spin $s > 0$ on Minkowski space. In particular, the analogous study for the case of the wave equation has remained open, which we address here. A more precise description of our result is provided in \Cref{sec:main_result}.

\medskip
Building on the founding works of Cotton, Schouten and Weyl \cite{Sch21,Cot99,Wey18}, in the early and mid-sixties Penrose introduced \cite{Pen63,Pen65a} what is now known as the \emph{conformal method} to tackle asymptotic questions in general relativity. The key idea was the observation that conformal transformations allow one to study the asymptotic aspects of massless fields by instead understanding the local behaviour of appropriately rescaled---unphysical---fields near a \emph{conformal boundary} representing physical infinity.
Penrose termed spacetimes admitting such a conformal compactification \emph{asymptotically simple}, a notion which has been instrumental in the development of the present understanding of gravitational radiation \cite{Ger76,PenRin86,Fri04}.

The introduction of a conformal boundary arises as a natural framework for the study of \emph{massless scattering problems} \cite{LaxPhillips1964}, i.e. the development of massless fields with prescribed asymptotic initial data \cite{Friedlander1962,Friedlander1964,Friedlander1967}, and in particular the precise relationship between past and future asymptotic data; see e.g. \cite{Tau19,NicTau24,Nicolas2024} for a more complete literature review of the history of conformal formulations of scattering. We also point out the recent construction of Masaood \cite{Mas22,Mas24} of a scattering theory for linearised gravity in the exterior of a Schwarzschild black hole, as well as the recent work \cite{KadKeh25} on the scattering of quasilinear waves in a setting which treats polyhomogeneous asymptotics. In the case of an asymptotically flat background, the conformal boundary---called \emph{null infinity} and denoted $\scri$---is a null hypersurface, and a scattering problem, phrased in the conformally rescaled spacetime, takes the form of a characteristic initial value problem with data on null infinity \cite{Penrose1980,Fri80}. An important question in such scattering problems, intimately related to the structure of gravitational radiation emitted by real physical objects and the regularity of null infinity, is to understand how initial data prescribed at past null infinity $\scri^-$ affects the structure of solutions at future null infinity $\scri^+$. The literature on this subject is very large and a complete list of citations is not possible; we mention here the important early works of Christodoulou--Klainerman, Klainerman--Nicol\'o and Chru\'sciel--Delay \cite{ChristodoulouKlainerman93,KlainermanNicolo1999,KlainermanNicolo2003,ChruscielDelay2002,ChruscielDelay2003}, and the more recent works of Kehrberger et al. and the authors  \cite{TaujanskasValienteKroon2024,Kehrberger2021a,Kehrberger2021b,Kehrberger2021c,KehrbergerGajic2022,Kehrberger2024,KehrbergerMasaood2024,KadKeh25}.

It is by now well-known that generic asymptotic expansions of a weak gravitational field are polyhomogeneous near spatial and null infinity, i.e. involve terms of the form $x^m (\log x)^k$, where $x$ is the boundary defining function at null or spatial infinity \cite{Fri98b,ChristodoulouKlainerman93,LindbladRodnianski2010,HinVas17,Lindblad2017}. At least two distinct mechanisms are responsible for the appearance of the logarithmic singularities, namely the nonlinearities in Einstein's equations, and the caustic nature of spatial infinity \cite{Fri98b,Val04a,Val04b,GasVal17b,DouFra16,BeyDouFraWha12}. The latter survives at the linearized level, and may be seen in the usual conformal picture of Penrose as a consequence of the degeneracy of the generators of $\scri^+$ and $\scri^-$ as they intersect at $i^0$. A typical approach to study the structure of this degeneracy employs a geometric blow-up of $i^0$. Such a construction has appeared in the literature in at least three related incarnations going back to the work of Ashtekar--Hansen \cite{AshHan78}, Friedrich \cite{Fri98b} (cf. \cite{MagVal21}), and more recently of Hintz--Vasy \cite{HinVas17} and collaborators (cf. \cite{Mel95}). We use in this paper the conformal blow-up of Friedrich, the so-called cylinder\footnote{Friedrich's cylinder at spatial infinity, i.e. the specific conformal scale and the coordinates $(\tau, \rho)$ (cf. \Cref{sec:F_gauge}), is also called the F-gauge.} at spatial infinity.

Constructed from conformal geodesics, the Friedrich gauge is particularly useful in the formulation of a regular initial value problem in a neighbourhood of spatial infinity for the conformal Einstein field equations \cite{Fri98a}. In the case of the Minkowski spacetime, the F-gauge can be obtained by employing an explicit \emph{ad hoc} coordinate transformation and conformal rescaling, which we review briefly in \Cref{Background}. In this representation the topology of spatial infinity, now denoted $\mathcal{I}$, is blown up to\footnote{In fact, to avoid the usual coordinate singularity on the $2$-sphere, one also later lifts the topology to $(-1,1)\times \mathbb{S}^3$ via a Hopf fibration.} $(-1,1)\times \mathbb{S}^2$, with the so-called \emph{critical sets} at the endpoints $\{\pm 1\} \times \mathbb{S}^2$, where $\scri^\pm$ meets $\mathcal{I}$, denoted by $\mathcal{I}^\pm$. The intrinsic (inverse) metric on $\mathcal{I}$ induced by the conformally rescaled (inverse) metric is regular for $\tau \in (-1,1)$, but degenerates at $\tau = \pm 1$.

In this article we show how to construct solutions to the scattering problem for the wave equation in a neighbourhood of spatial infinity of the Minkowski spacetime that have suitably regular behaviour (in the conformal picture) towards both past and future null infinity.
Specifically, by writing the wave equation as a first order symmetric hyperbolic system, we adapt the methods in \cite{Fri03b} and \cite{TauVal23} to derive estimates for the (conformal) wave equation in a full neighbourhood of $i^0$ reaching $\mathscr{I}^-$ and $\mathscr{I}^+$ which are insensitive to the degeneracies at $\mathcal{I}^\pm$. We emphasise that our entire analysis is carried out in a neighbourhood of spatial infinity, and that here we treat the \emph{semiglobal} scattering problem (like \cite{KadKeh25,TauVal23}, but unlike \cite{NicTau24} or \cite{Mas22,Mas24}, for example) in the sense that data is prescribed on a portion of past null infinity and a finite incoming null hypersurface, which we call $\underline{\mathcal{B}}_\varepsilon$, in contrast to the global problem in which data is prescribed on the whole of $\scri^-$. Due to the total characteristic nature of the cylinder $\mathcal{I}$, the wave equation reduces to a system of transport equations on $\mathcal{I}$.
Potential logarithmic divergences in the solutions then arise as a direct consequence of solving Jacobi differential equations on the cylinder. In turn, these explicit solutions allow to identify conditions on the past radiation field (the freely specifiable data at past null infinity) which preclude the presence of logarithmic terms. We point out here that our results show an important distinction between the behaviour of the wave equation near $i^0$ and the behaviour of higher spin-$s$ fields: while both possess analogous hierarchical structures at spatial infinity, non-zero spin fields (and in particular linearised gravity) can in general exhibit asymmetric regularity at $\scri^-$ and $\scri^+$, which moreover becomes more pronounced for higher values of $s$. For the wave equation, there is exact symmetry in a certain sense (see \Cref{Appendix:InteriorEquations}).

We finally note in passing that the problem of the smooth matching of solutions to the wave equation at spatial infinity is not only of relevance in the analysis of scattering problems but also for the identification of asymptotic BMS charges at past and future null infinity as discussed in, for example, \cite{FueHen24}.

\subsection{Main result}
\label{sec:main_result}

The main result of the paper is as follows. 

\begin{main}[Rough Version]
    Solutions to the wave equation near  spatial infinity in the Minkowski spacetime with sufficiently regular asymptotic characteristic initial data at past null infinity $\mathscr{I}^-$ {and a short incoming null hypersurface $\underline{\mathcal{B}}_\varepsilon$} possess suitably regular asymptotic expansions in a neighbourhood of spatial infinity $i^0$, and in particular exhibit peeling at future null infinity $\mathscr{I}^+$. 
\end{main}

\begin{figure}[h]
\begin{center}
\includegraphics[width=0.475\textwidth]{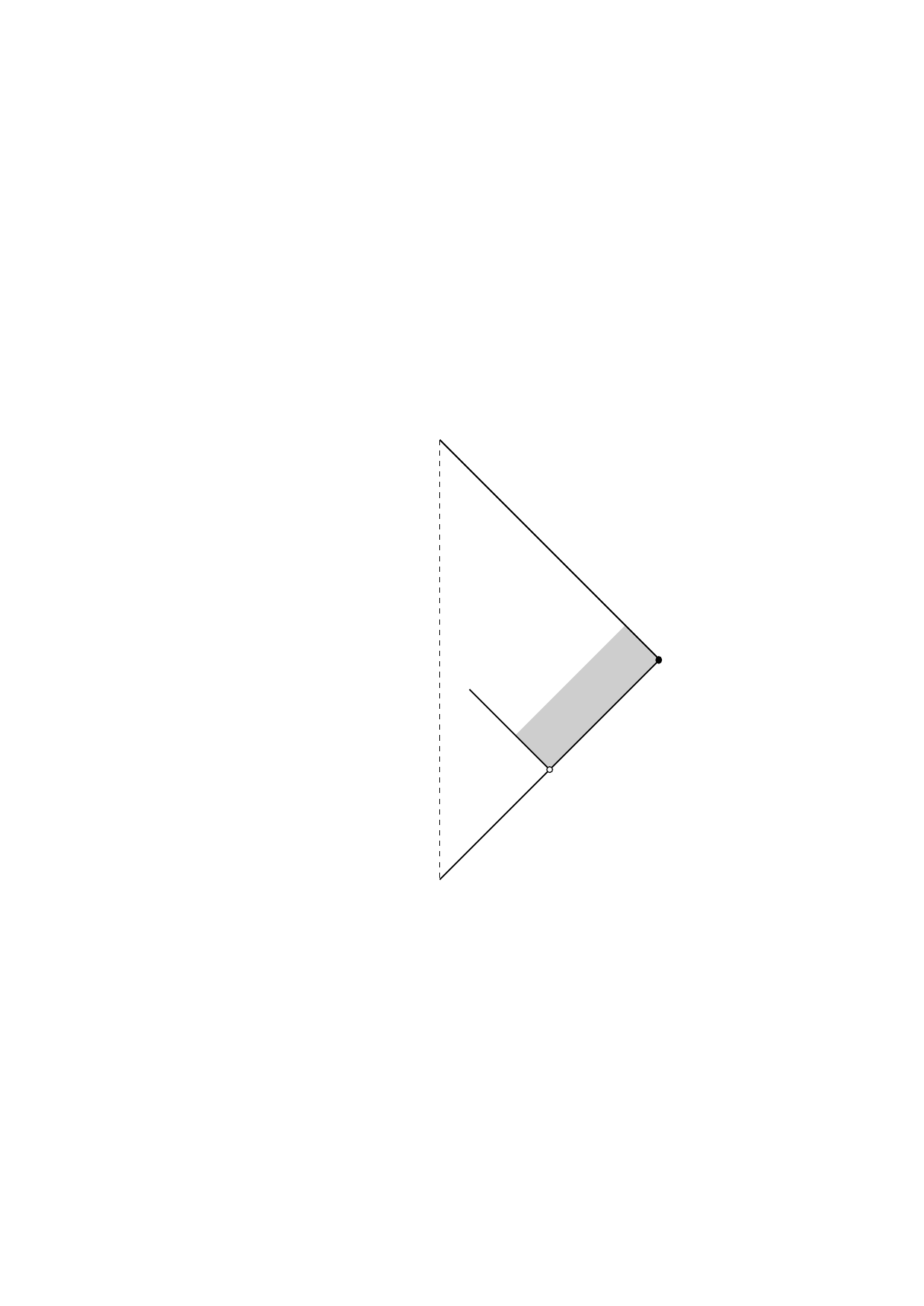}
\put(-50,143){$\mathscr{I}^+$}
\put(-50,56){$\mathscr{I}^-$}
\put(-10,103){$i^0$}
\put(-90,55){$\underline{\mathcal{B}}_\varepsilon$}
\put(-54,85){$\mathscr{D}$}
\end{center}
\caption{The Penrose diagram of the Minkowski spacetime showing the domain of existence $\mathscr{D}$ obtained in \Cref{mainResult}.}
\label{fig:DOE}
\end{figure}

A detailed formulation of the main theorem can be found in \Cref{Subsection:MainResult}, \Cref{mainResult}.

\begin{remark}
    {\em The regularity of the asymptotic characteristic initial data is expressed in terms of Sobolev spaces, in conformal coordinates, on $\scri^-$, see \eqref{main_thm_regularity_requirement}. On the other hand, the regularity at future null infinity is obtained as an explicit asymptotic expansion and a remainder whose regularity is controlled quantitatively in terms of the regularity at $\scri^-$.}
\end{remark}

\begin{remark}
    {\em As usual in the study of asymptotics of fields through conformal methods, the regularity at the conformal boundary in the rescaled spacetime implies a corresponding asymptotic behaviour in the physical spacetime. {Here we show that under the regularity assumptions on the data at past null infinity the scalar field  satisfies the peeling behaviour, i.e. that the leading order in its asymptotic expansion towards $\scri^+$ does not contain logarithms. This is shown for data that is \emph{not} compactly supported. In particular, our method constructs solutions $\tilde{\phi}$ which for a fixed value of the retarded time $u \gg 1 $, near spatial infinity, have an expansion of the form}

    \begin{equation}
    \label{peeling_asymptotic_expansion}
    \tilde{\phi} = \frac{1}{r}\left( \tilde{\varphi}^{(0)} + \sum_{p=0}^{m+4}\frac{1}{p!} \left(\frac{-1}{2u} \right)^{p}\varphi^{(p)} + \mathcal{O}\left( \frac{1}{u^{m+5}} \right) \right) + \mathcal{O}\left( \frac{1}{r^2}\right),
    \end{equation}
    where the coefficients $\tilde{\varphi}^{(0)}$ and $\varphi^{(p)}$ are $u$-independent smooth functions on $\mathbb{S}^2$, and the index $m$ measures the regularity on $\underline{\mathcal{B}}_\varepsilon \cup \scri^-$. Here, while $\tilde{\varphi}^{(0)}$ can be made of any combination of spherical harmonics, the coefficient $\varphi^{(p)}$ is restricted to harmonics with $\ell\geq p$. A more detailed version of this asymptotic expansion can be found in \Cref{Subsection:Peeling} (see equation \eqref{ExpansionFinal}). In the above expression, $r$ is the physical radial coordinate on Minkowski space.}
\end{remark}

\begin{remark}
    {\em It should be stressed that the utility of our main result is to give a construction of solutions to the wave equation on the Minkowski spacetime with regular asymptotics. In particular, here we do not aim to obtain any control of polyhomogeneous asymptotics, as in e.g. \cite{KadKeh25}.}
\end{remark}

\begin{remark}
   {\em In terms of physical coordinates, the regularity of the data required on $\underline{\mathcal{B}}_\varepsilon$ sufficient to ensure that there exist solutions with the expansion \eqref{peeling_asymptotic_expansion} is measured with respect to $r^2 \bmpartial_r$. On $\scri^-$, the regularity is measured with respect to $v^2 \bmpartial_v$. Polyhomogeneous data and asymptotics, on the other hand, correspond to regularity with respect to $r \bmpartial_r$ and/or $v\bmpartial_v$.}
\end{remark}

\subsection{Strategy of proof}
Our strategy here is similar to that of \cite{TauVal23}. For convenience, we briefly recall the outline of the argument here. The domain $\mathscr{D}$ shown in \Cref{fig:DOE}, in the ``Penrose gauge" corresponds to the grey area in \Cref{fig:FDOE}, in the ``Friedrich gauge". The domain in \Cref{fig:FDOE} has been split into two subdomains, the \emph{lower domain} $\mathcal{\underline{N}}_\varepsilon$ in light grey, and the \emph{upper domain} $\mathcal{N}_1$ in dark grey, which are separated by a spacelike hypersurface $\mathcal{S}_{-1+\varepsilon}$ terminating at the cylinder at
spatial infinity $\mathcal{I}$. \textbf{On the upper domain} $\mathcal{N}_1$ we look for solutions which can be written as a formal asymptotic expansion around the cylinder plus a remainder term. The remainder is controlled by adapting the estimates of Friedrich \cite{Fri03b} all the way up to $\mathscr{I}^+$ in terms of the Cauchy data given on $\mathcal{S}_{-1+\varepsilon}$. \textbf{On the lower domain} $\mathcal{\underline{N}}_\varepsilon$, we again look for solutions in the form of a regular asymptotic expansion in powers of the boundary defining function plus a remainder. Here we derive new Gr\"onwall-type estimates for the wave equation which are uniform all the way down to $\scri^-$, similar to the ones for spin-$s$ fields in \cite{TauVal23}. \textbf{Finally}, we stitch together the solutions on the lower and upper domains by ensuring that the lower solution has enough control at the spatial hypersurface $\mathcal{S}_{-1+\varepsilon}$ to be able to apply our estimates in the upper domain. We thus obtain a statement controlling the solution up
to future null infinity in terms of the asymptotic characteristic initial data on past null infinity. 

\begin{figure}[h]
\begin{center}
\includegraphics[width=0.507\textwidth]{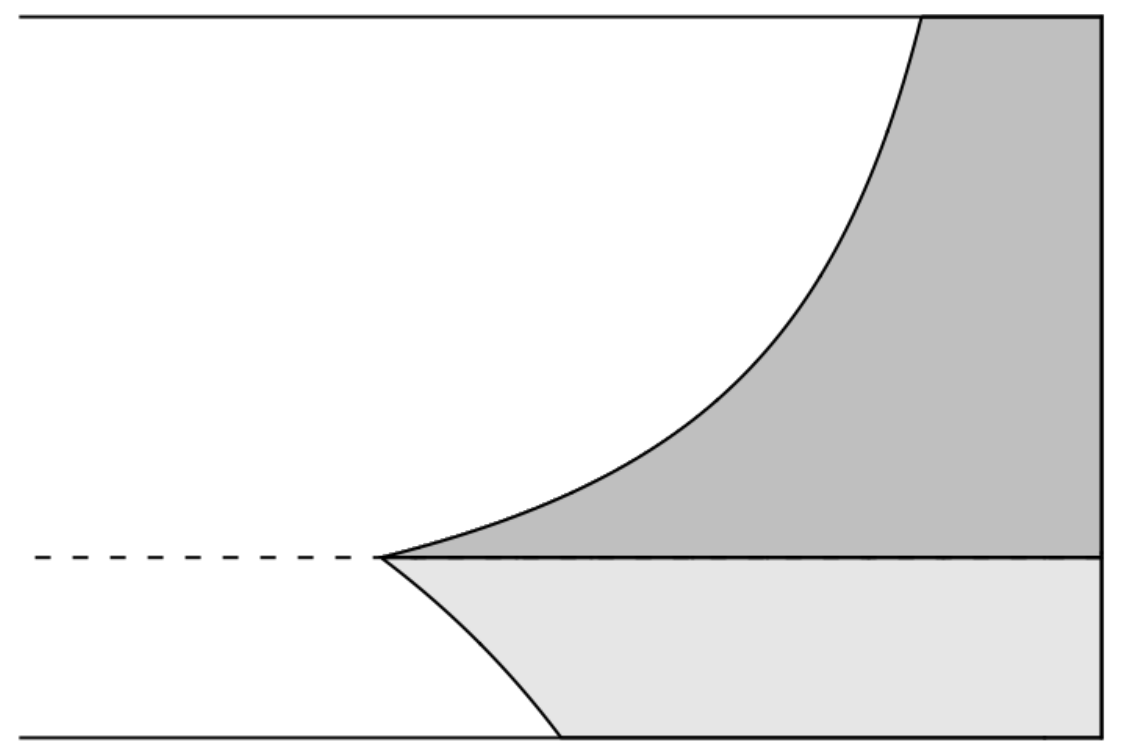}
\put(-100,157){$\mathscr{I}^+$}
\put(-100,-12){$\mathscr{I}^-$}
\put(3,75){$\mathcal{I}$}
\put(0,155){$\mathcal{I}^+$}
\put(-7,148){$\bullet$}
\put(0,-10){$\mathcal{I}^-$}
\put(-7,0.5){$\bullet$}
\put(-190,45){$\mathcal{S}_{-1+\varepsilon}$}
\put(-145,13){$\underline{\mathcal{B}}_\varepsilon$}
\put(-70,15){$\underline{\mathcal{N}}_\varepsilon$}
\put(-45,70){$\mathcal{N}_1$}
\put(-90,85){$\mathcal{B}_1$}

\end{center}
\caption{A depiction of the existence domain $\mathscr{D} = \underline{\mathcal{N}}_\varepsilon \cup \mathcal{N}_1$ in the F-gauge.}
\label{fig:FDOE}
\end{figure}

\subsection{Outline of the paper}
In \Cref{Background} we provide a brief summary of the geometric set-up used in the analysis: the F-gauge representation of spatial infinity in the Minkowski spacetime, and the reduction of the wave equation to a first order symmetric hyperbolic system using the space-spinor formalism. In Section \ref{Section3} we discuss the construction of estimates on the upper domain and the construction of asymptotic expansions near the cylinder at spatial infinity. In \Cref{Section:AsymptoticCharactreistic} we provide a brief discussion of general properties of the asymptotic characteristic initial value problem, including the construction of initial data from a reduced set. \Cref{Section:EstimatesLowerDomain} provides the construction of estimates for the solutions to the wave equation in the lower domain. In \Cref{Section:Stitching} it is shown how to combine the estimates in the lower and upper domain to construct a solution to the wave equation in a neighbourhood of the cylinder at spatial infinity extending from past null infinity to future null infinity with controlled regularity. \Cref{Section:Interpretation} discusses the behaviour of solutions to the wave equation towards the critical sets, and frames our main results in the physical spacetime. We provide a brief conclusion in \Cref{Section:Conclusions}. The article has an appendix which contains two sections. \Cref{Appendix:FLemma} contains a technical lemma due to H. Friedrich used in the construction of estimates.  \Cref{Appendix:InteriorEquations} gives a more complete analysis of  the interior equations on the cylinder, in particular with no restrictions on $\ell$-modes.

\subsection{Conventions and notation}

Our conventions follow those in \cite{CFEBook,TauVal23}. In particular, our metric signature is $(+,-,-,-)$, and the Riemann curvature tensor associated with the Levi-Civita connection of a metric $g_{ab}$ is defined by $\left[\nabla_{a}, \nabla_{b}\right]u^{d} = R\indices{^d_{cab}} u^{c}$. For a given spin dyad $\{o^A, \iota^A\}$ with $o_A\iota^A = 1$, we write $\epsilon\indices{_\bmA^A}, \bmA \in \{0,1\}$, to denote $\epsilon\indices{_0^A} = o^A$ and $\epsilon\indices{_1^A} = \iota^A$. Spinorial indices are raised and lowered using the antisymmetric $\epsilon$-spinor $\epsilon_{AB} = o_A \iota_B - \iota_A o_B$ (with inverse $\epsilon^{AB} = o^A \iota^B - \iota^A o^B$), e.g. $\xi_B = \xi^A \epsilon_{AB}$, using the convention that contracted indices should be “adjacent, descending to the right”. As usual, the spacetime metric $g_{ab}$ decomposes as $g_{ab} = \epsilon_{AB}\epsilon_{A'B'}$, where $\epsilon_{A'B'} = \overline{\epsilon_{AB}}$. The spin dyad $\epsilon\indices{_\bmA^A}$ gives rise to a tetrad of null vectors $\bme_{\bmA\bmA'} = e\indices{_{\bmA\bmA'}^{AA'}}\bmpartial_{AA'} = \epsilon\indices{_\bmA^A}\epsilon\indices{_{\bmA'}^{A'}}\bmpartial_{AA'}$.  When integrating over $\text{SU}(2)$, $\mu$ denotes the normalized Haar measure on $\text{SU}(2)$. Throughout the majority of the paper, we will be working on re-scaled Minkowski space (in the F-gauge), and we shall denote objects (fields, connections, spin dyad) on this spacetime plainly. When working on physical Minkowski space, we will denote objects with a tilde, e.g. $\Tilde{\phi}$. For a function space $X$ and functions $f$, $g$ (which may or may not be in $X$), we will use the notation $f = g + X$ to mean $f - g \in X$, i.e. that there exists $h \in X$ such that $f = g + h$.

\section{Preliminaries}\label{Background}

\subsection{Geometric setup}
\label{sec:F_gauge}

We start by briefly sketching the representation of spatial infinity to be considered and compare it to the standard Penrose compactification.

\subsubsection{Cylinder representation of spatial infinity: the F-gauge}
\label{Subsection:FGauge}

In the standard Penrose compactification of Minkowski space, the endpoint $i^0$ of all spacelike geodesics is a point. An alternative description, more suitable for our purposes, is obtained by blowing up $i^0$ to a cylinder. For $(t,r, \theta, \varphi)$ the standard radial coordinates on Minkowski space, we define new coordinates $(\rho,\tau)$ in the exterior of the lightcone of at the origin, which we denote by $\mathcal{N}$, by the relations
\[ \rho = \frac{r}{r^2 - t^2} \quad \text{and} \quad \tau = \frac{t}{r}. \]
Then defining the conformal factor
\[ 
\Theta \equiv \frac{1}{r} = \rho \left( 1 - \tau^2 \right), 
\]
we obtain an unphysical rescaled metric
\begin{align*}
    \bmeta & \equiv \Theta^2 \tilde{\bmeta} \\
    & = \mathbf{d} \tau\otimes \mathbf{d}\tau + \frac{\tau}{\rho} (\mathbf{d} \tau \otimes \mathbf{d} \rho + \mathbf{d} \rho \otimes \mathbf{d} \tau )- \frac{(1-\tau^2)}{\rho^2} \mathbf{d} \rho\otimes \mathbf{d} \rho - \bmsigma
\end{align*}
with inverse
\[ \bmeta^\sharp = (1-\tau^2) \bmpartial_\tau \otimes \bmpartial_\tau +  \tau \! \rho \, (\bmpartial_\tau \otimes \bmpartial_\rho+ \bmpartial_\rho \otimes \bmpartial_\tau) - \rho^2 \bmpartial_\rho \otimes \bmpartial_\rho - \bmsigma^\sharp,
\]
where $\bmsigma$ and $\bmsigma^\sharp$ are the standard covariant and inverse metric on the unit $2$-sphere, $\bmsigma = \d \theta^2 + (\sin^2 \! \theta )\d \varphi^2$. We call the coordinates $\tau$ and $\rho$ the \emph{F-coordinates} on the Minkowski spacetime. In terms of $\tau$ and $\rho$, the exterior region of the lightcone at the origin, $\mathcal{N}$, is given by
\[
\mathcal{N} =\left\{ (\tau, \rho) \times \mathbb{S}^2 \, | \, -1 <\tau < 1, \; \rho>0 \right\} \simeq (-1, 1)_\tau \times (0, \infty)_\rho \times \mathbb{S}^2
\]
and past and future null infinities (in $\mathcal{N}$) are given by
\[
\mathscr{I}^\pm = \left\{ (\tau, \rho) \times \mathbb{S}^2 \, | \, \tau = \pm 1, \; \rho>0 \right\} \simeq (0, \infty)_\rho \times \mathbb{S}^2. 
\]
In these coordinates spatial infinity $i^0$ is blown up to the set $i^0 = \{ \rho = 0, \, \tau \in (-1, 1)\}$. We then introduce the subsets
\begin{align*}
&\mathcal{I} \equiv \left\{ (\tau, \rho) \times \mathbb{S}^2 \, | \,  |\tau| <1, \; \rho=0  \right\} \simeq (-1,1)_\tau \times \mathbb{S}^2, \\
&\mathcal{I}^\pm \equiv \left\{(\tau, \rho) \times \mathbb{S}^2 \, | \, \tau =\pm 1, \; \rho=0  \right\} \simeq \mathbb{S}^2
\end{align*}
of the blow-up of $i^0$, and call $\mathcal{I}$ the \emph{cylinder at spatial infinity} and $\mathcal{I}^\pm$ the \emph{critical sets}. Note that although $\mathcal{I}$ is at a finite $\rho$-coordinate, it is still at infinity with respect to the rescaled metric $\bmeta$.

We define the conformal completion \cite{Fri98a, Fri98b} of $\mathcal{N}$ to be the union
\begin{align*}
    \overline{\mathcal{N}} \equiv \mathcal{N} \cup \mathscr{I}^+\cup
\mathscr{I}^-\cup \mathcal{I} \cup \mathcal{I}^+ \cup \mathcal{I}^- \simeq [-1,1]_\tau \times [0, \infty)_\rho \times \mathbb{S}^2_\bmsigma.
\end{align*}
The vector fields $\bm\partial_\tau$ and $\bm\partial_\rho$ have norms with respect to the rescaled metric
\begin{align*}
    \bmeta(\bm\partial_\tau, \bm\partial_\tau)=1 \quad \text{and} \quad \bmeta(\bm\partial_\rho,\bm\partial_\rho) = - \frac{(1-\tau^2)}{\rho^2}, 
\end{align*}
so that $\bm\partial_\tau$ is uniformly timelike on all of $\overline{\mathcal{N}}$ with respect to $\bmeta$, and $\bm\partial_\rho$ is spacelike in the interior of $\overline{\mathcal{N}}$, null on $\scri$, and degenerates at $\mathcal{I}$. Note that the metric $\bmeta$ degenerates at $\rho = 0$, but the inverse metric $\bmeta^\sharp$ extends smoothly to all points of $\overline{\mathcal{N}}$.

\subsubsection{Null frame near $\mathcal{I}$}
\label{Section:NullFrameNearI0}

In order to perform estimates, we will write the field equations in terms of a null frame 
\[
 \bme_{\bmA\bmA'} = \bme_{\bmA\bmA'}^\mu \bmpartial_\mu,
\]
satisfying
\[
\bmeta(\bme_{\bmA\bmA'},\bme_{\bmB\bmB'}) =
\epsilon_{\bmA\bmB}\epsilon_{\bmA'\bmB'}.
\]
Specifically, we choose
\begin{equation}
\bme_{\bmzero\bmzero'} = \frac{1}{\sqrt{2}}\left(
(1-\tau)\bmpartial_\tau + \rho \bmpartial_\rho  \right), \qquad \bme_{\bmone\bmone'} =\frac{1}{\sqrt{2}}\left(
(1+\tau)\bmpartial_\tau - \rho \bmpartial_\rho  \right)
\label{FramesMinkowski} 
\end{equation}
and complex vector fields $\bme_{\bmzero\bmone'}$ and $\bme_{\bmone\bmzero'}$ which are tangent to the spheres $\mathbb{S}^2_{\tau, \rho}$ of constant $\tau$ and $\rho$. To avoid the usual degeneracy of smooth vector fields on $\mathbb{S}^2$, for each pair $(\tau, \rho)$ we consider the Hopf map $\pi:\text{SU}(2)\cong \mathbb{S}^3 \mapsto \mathbb{S}^2 \cong\text{SU}(2)/\text{U}(1)$, which defines a principal $\text{U}(1)$-bundle over $\mathbb{S}^2$. By lifting each $2$-sphere to a copy of $\mathrm{SU}(2)$ (with coordinates denoted $t\indices{^{\bmA}_{\bmB}}$) along the Hopf map, we thus obtain a 5 dimensional manifold which we still denote 
\begin{align}
    \overline{\mathcal{N}} \equiv [-1,1]_\tau \times [0,\infty)_\rho \times \text{SU}(2)_{t\indices{^{\bmA}_{\bmB}}}.
    \label{Definition:OverlineN}
\end{align}
In this setting we now have
\[
\mathcal{I} \simeq (-1,1)_\tau \times \mathrm{SU}(2), \quad
\mathcal{I}^\pm \simeq
\mathrm{SU}(2), \quad \text {and} \quad \mathscr{I}^\pm \simeq (0, \infty)_\rho \times
\mathrm{SU}(2)
\]
as subsets of $\overline{\mathcal{N}}$. To complete this construction, it remains to choose a frame of complex vector fields on $\mathrm{SU}(2)$.  We set
\[
\bme_{\bmzero\bmone'} = - \frac{1}{\sqrt{2}} \bmX_+ \quad \text{and} \quad
\bme_{\bmone\bmzero'} =- \frac{1}{\sqrt{2}} \bmX_-,
\]
where $\bmX_\pm$ are given in terms of two left-invariant vector fields $\bmZ_1, \bmZ_2$ on $\mathrm{SU}(2)$ by
\begin{align*}
    \bmX_+ \equiv -(\bmZ_2+i\bmZ_1), \qquad \bmX_- \equiv -(\bmZ_2-i\bmZ_1),
\end{align*}
the third left-invariant vector field $\bmZ_3$ chosen to point along the $\mathrm{U}(1)$-fibres. Specifically, the vector fields $\bmZ_i$ are generated by $\sqrt{-1}\sigma_i \in \mathfrak{su}(2)$, where $\sigma_i$ are the standard Pauli matrices, and obey the commutation relations $[\bmZ_i, \bmZ_j] = \epsilon_{ijk} \bmZ_k$. For sufficiently smooth (e.g. $C^1$) functions $f:\text{SU}(2)\to \mathbb{C}$, the vector fields $\bmX_+$ and $\bmX_-$ are complex conjugate in the sense that $\overline{\bmX_+f}=\bmX_-\overline{f}$, and the set
\[ \{ \bmX_+, \bmX_-, \bmZ_3 \}
\]
forms a basis (see Appendix A.3 of \cite{TauVal23} for details).

\subsection{Sobolev spaces}
We shall use Sobolev spaces on the domains and hypersurfaces referred to in \Cref{fig:FDOE} defined with respect to the conformal coordinates $(\tau, \rho, t^\bmA{}_\bmB)$.

Specifically, on surfaces $\mathcal{S}_\tau$ of constant $\tau \in [-1,1]$, we define the $H^m$ norms
\[ \| f \|^2_{H^m(\mathcal{S}_\tau)} \equiv \int_{\mathcal{S}_\tau} \sum_{p' + |\alpha| \leq m} |\partial_\rho^{p'} \bmZ^\alpha f|^2 \, \d \rho \, \d \mu. \]
Observe that this in particular includes $\scri^\pm = \mathcal{S}_{\pm 1}$, since $\bmpartial_\rho$ is spacelike for $\tau \in (-1,1)$ and becomes a null generator of $\scri^\pm$ at $\tau = \pm 1$. On $\underline{\mathcal{B}}_\varepsilon$ and $\mathcal{B}_1$, we define Sobolev spaces with respect to the intrinsic null generators $\bme_{\bmzero \bmzero'}$ and $\bme_{\bmone \bmone'}$. For instance,
\[ \| f \|^2_{H^m(\underline{\mathcal{B}}_\varepsilon)} \equiv \int_{\underline{\mathcal{B}}_\varepsilon} \sum_{q' + |\alpha| \leq m} |(\bme_{\bmzero \bmzero'})^{q'} \bmZ^\alpha f|^2 \, \d \underline{\mathcal{B}}, \]
where $\d \underline{\mathcal{B}} = \bme_{\bmone \bmone'} \lrcorner \, \mathrm{dv}$ is the induced measure on $\underline{\mathcal{B}}_\varepsilon$, where $\mathrm{dv} = \d \tau \wedge \d \rho \wedge \d \mu$. On subdomains $\mathcal{N}$ of the full spacetime, we define
\[ \| f \|^2_{H^m(\mathcal{N})} \equiv \int_{\mathcal{N}} \sum_{p' + q' + |\alpha| \leq m} |\partial_\tau^{q'} \partial_\rho^{p'} \bmZ^\alpha f|^2 \, \mathrm{dv}. \]

\subsection{Translation between conformal and physical coordinates}
\label{Section:TransformationCoordinates}

Recall that the F-coordinates $(\tau, \rho)$ are given in terms of the standard physical coordinates $(t,r)$ by
\[ \rho = \frac{r}{r^2-t^2}, \qquad \tau = \frac{t}{r}. \]
The standard double null coordinates $(u,v) \equiv (t-r, t+r)$ are then given in terms of $(\tau, \rho)$ by the expressions
\[
u = - \frac{1}{\rho(1+\tau)}, \quad v = \frac{1}{\rho(1-\tau)} \iff \tau = \frac{v+u}{v-u}, \quad \rho = \frac{u-v}{2uv}. 
\]
In order to translate our regularity assumptions on the past asymptotic initial data in terms of the more familiar physical coordinates, we shall use the Bondi coordinate $v = t + r$ (on past null infinity) and the inverted radial coordinate $R \equiv 1/r$. In terms of $(v,R)$, we have
\[ 
v = \frac{1}{\rho(1-\tau)}, \quad R = \rho (1-\tau^2) \iff \tau = -1 + Rv, \quad \rho = \frac{1}{v(2-Rv)}. 
\]
For coordinates near $\scri^+$, in terms of $(u,R)$, one similarly has
\[ u = \frac{-1}{\rho(1+\tau)}, \quad R = \rho(1-\tau^2) \iff \tau = 1+Ru, \quad \rho = \frac{-1}{u(2+Ru)}. \]

The vector fields $\bmpartial_\tau$ and $\bmpartial_\rho$ are then given in terms of $\bmpartial_v$ and $\bmpartial_R$ by
\[
\bmpartial_\tau = \frac{v}{2-Rv}\bmpartial_v + \frac{2(1-Rv)}{v(2-Rv)}\bmpartial_R, \qquad \bmpartial_\rho = v(2-Rv)(-v\bmpartial_v + R \bmpartial_R).
\]
Analogously, using $(u,R)$ instead of $(v,R)$,
\[ \bmpartial_\tau = -\left( \frac{u}{2+Ru} \right) \bmpartial_u + \frac{2(1+Ru)}{u(2+Ru)} \bmpartial_R, \qquad \bmpartial_\rho = - u(2+Ru)(-u\bmpartial_u + R \bmpartial_R). \]

In particular, on $\scri^-$ one has that 
\begin{equation} \label{Friedrich_vectors_past_scri}
\bmpartial_\tau |_{\scri^-} = \frac{v}{2} \bmpartial_v + \frac{1}{v} \bmpartial_R , \qquad \bmpartial_\rho |_{\scri^-} = -2v^2 \bmpartial_v ,
\end{equation}
and on $\scri^+$
\begin{equation} \label{Friedrich_vectors_future_scri}
\bmpartial_\tau|_{\scri^+} = - \frac{u}{2} \bmpartial_u + \frac{1}{u} \bmpartial_R, \qquad \bmpartial_\rho|_{\scri^+} = 2u^2 \bmpartial_u.
\end{equation}
Moreover,
\begin{equation} \label{Friedrich_form_past_scri}
\mathbf{d} \rho |_{\scri^-} = - \frac{1}{2v^2} \mathbf{d} v.
\end{equation}
For future use we also record that
\begin{equation}
\bmpartial_v = \rho(1-\tau^2) \bmpartial_\tau + 2\rho^2\tau \bmpartial_\rho.
\label{Partial_v}
\end{equation}

\subsection{The wave equation near spatial infinity}
With reference to the conformal wave equation
\begin{align}\label{waveEqn}
    \square \phi + \frac{1}{6} R \phi = 0,
\end{align}
we note that the scalar curvature of \emph{both} $\bmeta$ and $\tilde{\bmeta}$ vanishes, $R(\bmeta) = 0 = R(\tilde{\bmeta})$. The physical wave equation $\square_{\tilde{\bmeta}} \tilde{\phi} = 0$ is therefore equivalent to $\square \phi \equiv \square_\bmeta \phi = 0$, where $\phi = \Theta^{-1} \tilde{\phi} = r \tilde{\phi}$. Thus, the wave equation in F-coordinates $(\tau, \rho, \theta, \varphi)$, is given by
\begin{align}\label{spatInfWaveEqn}
    \square \phi = (1-\tau^2)\partial_\tau^2 \phi +2\tau \! \rho \partial_\tau \partial_\rho \phi -\rho^2 \partial_\rho^2 \phi-2\tau\partial_\tau \phi-\slashed{\Delta}_{\mathbb{S}^2}\phi=0, 
\end{align}
where $\slashed{\Delta}_{\mathbb{S}^2}$ denotes the Laplacian on $\mathbb{S}^2$. This expression is clearly degenerate at $\tau = \pm 1$. In fact, this degeneracy is a coordinate singularity at null infinity $\scri^\pm = \{ \tau = \pm 1, ~ \rho > 0\}$, but a true degeneracy in the hyperbolicity at the critical sets $\mathcal{I}^\pm = \{ \tau = \pm 1, ~ \rho = 0 \}$. Deriving estimates which are robust in light of this degeneracy is a key part of this work.

\subsubsection{First order reduction}

In order to adapt the methods of \cite{TauVal23} we will consider a first-order hyperbolic reduction of the wave equation \eqref{spatInfWaveEqn}. This formulation follows the discussion in \cite{CFEBook} and is elaborated below for the sake of completeness.

Starting from the wave equation in spinorial form
\begin{align}\label{spinWaveEqn}
    \square \phi\equiv\nabla^{A A'} \nabla_{A A'} \phi = 0,
\end{align}
define the auxiliary variable $\phi_{A A'} \equiv \nabla_{A A'} \phi$. By contracting this with the Hermitian spinor
\[ \tau^a = \tau^{A A'} = \sqrt{2} (\bmpartial_\tau)^a, \]
one obtains an evolution equation for $\phi$ reading
\begin{align}\label{defPsi}
    \mathcal{D} \phi = \psi,
\end{align}
where $\psi \equiv \tau^{AA'}\phi_{AA'}$ and $\mathcal{D} \equiv \tau^{AA'}\nabla_{AA'}$ (the Fermi derivative). Next, we decompose the auxiliary variable $\phi_{AA'}$. Defining $\psi_{AB} \equiv \tau\indices{_{(B}^{A'}}\phi_{A)A'}$ gives 
\begin{align}\label{decomp}
    \phi_{AA'} = \frac{1}{2}\psi\tau_{AA'}-\tau\indices{^Q_{A'}}\psi_{AQ}.
\end{align}
An appropriate field equation for the auxiliary field $\phi_{AA'}$ is then given by the no-torsion condition
\begin{align*}
    \nabla_{AA'}\nabla_{BB'} \phi - \nabla_{BB'}\nabla_{AA'}\phi=0,
\end{align*}
which can be recast as $\nabla_{AA'}\phi_{BB'} - \nabla_{BB'}\phi_{AA'}=0$.  Contracting the primed indices and symmetrising results in
\begin{align*}
    \nabla\indices{_{(A}^{Q'}}\phi\indices{_{B)Q'}}=0.
\end{align*}
This is equivalent to the no-torsion property as a consequence of the Hermiticity of the 2-spinor $\phi_{AA'}$. Finally, observing the identity
\begin{align*}
    \nabla\indices{_{A}^{Q'}}\phi\indices{_{BQ'}} \equiv \nabla\indices{_{(A}^{Q'}}\phi\indices{_{B)Q'}} -\frac{1}{2}\epsilon_{AB}\nabla\indices{^{QQ'}}\phi_{QQ'}
\end{align*}
one concludes, using \eqref{spinWaveEqn}, that
\begin{align*}
    \nabla\indices{_{A}^{Q'}}\phi\indices{_{BQ'}} = 0 .
\end{align*}

Using the decomposition of $\phi_{AA'}$ (\ref{decomp}), one can then perform a space spinor split of the equation above \cite[\S4]{CFEBook}. This yields, after a calculation and taking into account the decomposition of the connection into timelike and spatial parts
\begin{align*}
    \nabla_{AB} = \frac{1}{2}\epsilon_{AB}\mathcal{D}+\mathcal{D}_{AB},
\end{align*}
the pair of equations
\begin{align*}
    &\mathcal{D}\psi +2 \mathcal{D}^{AB}\psi_{AB} = 0, \\
    &\mathcal{D}\psi_{AB} - \mathcal{D}_{AB}\psi+2\mathcal{D}\indices{_{(A}^{Q}}\psi\indices{_{B)Q}} =0.
\end{align*}
The derivative $\mathcal{D}_{AB}\equiv \tau\indices{_{(A}^{A'}}\nabla\indices{_{B)A'}}$ is the Sen connection. These equations, together with (\ref{defPsi}), expressed in a specific adapted spin dyad (see below), give the system   
\begin{subequations}
\label{waveSHS}
\begin{align}
    &\mathcal{D}\phi=\psi, \label{waveSHS1}\\
    &\mathcal{D}\psi +2 \mathcal{D}^{\bmA\bmB}\psi_{\bmA\bmB} = 0, \label{waveSHS2}\\
    &\frac{4}{(\bmA+\bmB)!(2-\bmA-\bmB)!}\left(\mathcal{D}\psi_{\bmA\bmB} -\mathcal{D}_{\bmA\bmB}\psi+2\mathcal{D}\indices{_{(\bmA}^{\bmQ}}\psi\indices{_{\bmB)\bmQ}}\right) = 0, \label{waveSHS3}
\end{align}
\end{subequations}
where $\bmA,\,\bmB$ take the values $0$ and $1$,
which can be checked to be symmetric hyperbolic by a direct computation \cite{CFEBook}.

From now on, let $\psi_k$, $k \in \{0, 1, 2\}$, denote the components of the spinor $\psi_{AB}$ with respect to a spinor dyad $\{ o^A,\, \iota^A \}$ adapted to $\sqrt{2} (\bmpartial_\tau)^a = \tau^{AA'}$. That is, one requires that
\[
\tau^{AA'} =o^A\bar{o}^{A'}+\iota^A \bar{\iota}^{A'}.
\]
In terms of contractions with respect to the spin dyad one has
\begin{align*}
    \psi_0 \equiv \psi_{AB}o^A o^B , \qquad
    \psi_1 \equiv \psi_{AB}\iota^A o^B, \qquad
    \psi_2 \equiv \psi_{AB}\iota^A \iota^B.
\end{align*}
Explicitly in terms of the vector fields $\{\partial_\tau, \partial_\rho, \bmX_\pm \}$,
\[
\psi_0 =\frac{1}{\sqrt{2}}\bmX_-\phi, \qquad \psi_1 =-\frac{1}{\sqrt{2}} (\tau\partial_\tau\phi+\rho \partial_\rho \phi), \qquad \psi_2 = -\frac{1}{\sqrt{2}}\bmX_+\phi, \label{phi_Components}
\]
together with
\[ \psi = \sqrt{2} \partial_\tau \phi. \]

\begin{remark}
   {\em The decomposition \eqref{waveSHS} is completely general and makes no assumptions about the curvature of the background spacetime. It requires only the existence of a timelike congruence of vectors.}
\end{remark}

In the coordinates described in \Cref{Subsection:FGauge}, the system \eqref{waveSHS} takes the form
\begin{subequations}
\label{waveEqnDecomposed}
\begin{align}
    \psi &= \sqrt{2}\partial_\tau \phi, \label{waveEqnDecomposed1} \\
    E_0&\equiv (1-\tau) \partial_\tau\psi_0+\rho \partial_\rho \psi_0+\frac{1}{2} \bmX_+\psi-\bmX_+\psi_1 = 0, \label{waveEqnDecomposed2}\\
    E_2&\equiv (1+\tau) \partial_\tau\psi_2-\rho \partial_\rho \psi_2-\frac{1}{2} \bmX_-\psi-\bmX_-\psi_1 = 0, \label{waveEqnDecomposed3}\\
    C_a&\equiv \tau \partial_\tau\psi_1 - \rho \partial_\rho\psi_1 -\frac{1}{2}\partial_\tau \psi+\frac{1}{2}\bmX_+\psi_2-\frac{1}{2}\bmX_-\psi_0+ \psi_1 = 0, \label{waveEqnDecomposed4} \\
    C_b&\equiv \tau \partial_\tau\psi- \rho \partial_\rho\psi-2\partial_\tau\psi_1+ \bmX_+\psi_2+ \bmX_-\psi_0+\psi = 0,\label{waveEqnDecomposed5}
\end{align}
\end{subequations}
equations $E_0 = 0$ and $E_2 = 0$ being evolution equations (the indices agreeing with the indices of the components which they transport), and $C_a = 0$ and $C_b = 0$ being constraints. A hierarchical system of transport equations is then given by certain linear combinations of the system \eqref{waveEqnDecomposed}, namely by defining
\begin{align*}
    A_0 \equiv E_2, \qquad
    B_0 \equiv -C_a-\frac{1}{2}C_b, \qquad
    A_1 \equiv C_a-\frac{1}{2}C_b, \qquad
    B_1 \equiv E_0.
\end{align*}
Explicitly, we then have
\begin{subequations}
\label{transport_hierarchy}
\begin{align}
    A_0 &\equiv (1+\tau) \partial_\tau\psi_2-\rho \partial_\rho \psi_2-\frac{1}{2} \bmX_-\psi-\bmX_-\psi_1 = 0, \label{transport1} \\
    B_0 &\equiv (1-\tau)\partial_\tau\psi_1+\rho \partial_\rho\psi_1+\frac{1}{2}\left((1-\tau)\partial_\tau\psi+\rho \partial_\rho\psi\right)-\bmX_+\psi_2-\frac{1}{2}\psi-\psi_1 = 0, \label{transport2}\\
    A_1 &\equiv (1+\tau)\partial_\tau\psi_1-\rho \partial_\rho\psi_1-\frac{1}{2}\left((1+\tau)\partial_\tau\psi-\rho \partial_\rho\psi\right)-\bmX_-\psi_0-\frac{1}{2}\psi+\psi_1 = 0, \label{transport3} \\
    B_1 &\equiv (1-\tau) \partial_\tau\psi_0+\rho \partial_\rho \psi_0+\frac{1}{2} \bmX_+\psi-\bmX_+\psi_1 = 0.  \label{transport4}
\end{align}
\end{subequations}

\begin{remark}
   {\em Equations \eqref{transport_hierarchy} exhibit a structure similar to that of the spin-$s$ equations, as in \cite{TauVal23}. A difference, however,  is the coupling between $\psi$ and $\psi_1$. These components appear as a ``double" transport equation. In fact, \eqref{transport_hierarchy} may be better thought of as transport equations for the linear combinations $\mathbf{\Psi} = \psi_1 + \frac{1}{2} \psi$ and $\mathbf{\Phi} = \psi_1 - \frac{1}{2} \psi$, which will also play primary roles in the estimates. This hierarchical rewriting is needed in order to exploit the structures of the $(1\pm \tau)$ and $\rho$ factors. In terms of $\phi$, the combinations $\mathbf{\Psi}$ and $\mathbf{\Phi}$ are nothing but $\frac{1}{\sqrt{2}} \left( (1-\tau) \partial_\tau \phi - \rho \partial_\rho \phi \right)$ and $\frac{-1}{\sqrt{2}} \left( (1+\tau) \partial_\tau \phi + \rho \partial_\rho \phi \right)$ respectively.}
\end{remark}

\section{Estimates near $\scri^+$}\label{Section3}

In this section we show how to adapt the original construction of Friedrich's estimates \cite{Fri03b} for the system \eqref{transport_hierarchy} to control the solution to the wave equation in the upper domain $\mathcal{N}_1$ in \Cref{fig:FDOE}. Given $\varepsilon \in (0,1)$, $t \in (-1,1]$, $t>\tau_\star \in (-1,1)$ and $\rho_\star >0$, consider the following hypersurfaces:
\begin{align*}
    \mathcal{N}_t &\equiv \left\{ (\tau,\rho, t^\bmA{}_{\bmB}) \, | \, \tau_\star \leq \tau \leq t, ~ 0 \leq \rho \leq \left(\frac{2\varepsilon}{2-\varepsilon}\right) \frac{\rho_\star}{1+\tau}, ~ t^\bmA{}_\bmB \in \mathrm{SU}(2) \right\} \subset [-1,1]_\tau \times [0,\infty)_\rho \times \mathrm{SU}(2), \\
    \mathcal{S}_t &\equiv \left\{ (\tau,\rho, t^\bmA{}_{\bmB}) \, | \, \tau = t, ~ 0 \leq \rho \leq \left(\frac{2\varepsilon}{2-\varepsilon}\right) \frac{\rho_\star}{1+\tau}, ~ t^\bmA{}_\bmB \in \mathrm{SU}(2) \right\} , \\
    \mathcal{B}_t &\equiv \left\{ (\tau,\rho, t^\bmA{}_{\bmB}) \, | \, \tau_\star \leq \tau \leq t, ~ \rho = \left(\frac{2\varepsilon}{2-\varepsilon}\right) \frac{\rho_\star}{1+\tau}, ~ t^\bmA{}_\bmB \in \mathrm{SU}(2) \right\} , \\
    \mathcal{I}_t &\equiv \left\{ (\tau,\rho, t^\bmA{}_{\bmB}) \, | \, \tau_\star \leq \tau \leq t, ~ \rho = 0, ~ t^\bmA{}_\bmB \in \mathrm{SU}(2) \right\} ,
\end{align*}
and from now on denote by $\mathcal{S}_\star \equiv \mathcal{S}_{\tau_\star}$ the Cauchy hypersurface on which we consider initial data.

% \begin{remark}
%     The factor $\left(\frac{2\varepsilon}{2-\varepsilon}\right)$ above is chosen so that the geometry of the sets defined above matches up with \Cref{fig:FDOE} when we assume the null hypersurface $\underline{\mathcal{B}}_\varepsilon$ intersects past null infinity at some fixed $\rho_\star > 0$. This choice simplifies the discussion regarding the construction of initial data from a reduced set in Section \ref{Section:AsymptoticCharactreistic}.   
% \end{remark}

 Further, given non-negative integers $p$, $p'$, $q$, $q'$ and a multi-index $\alpha \equiv (\alpha_1, \alpha_2, \alpha_3)$, we consider the differential operators 
\begin{align*}
    D \equiv D^{p,q,\alpha} \equiv \partial_{\rho}^{p} \partial_{\tau}^{q} \bmZ^{\alpha} \text{ and } D' \equiv D^{p',q',\alpha} \equiv \partial_{\rho}^{p'} \partial_{\tau}^{q'} \bmZ^{\alpha},
\end{align*}
where $\bmZ^{\alpha}$ denotes the left-invariant vector fields $\{ \bmZ_1, \bmZ_2, \bmZ_3\}$ on $\mathrm{SU}(2)$. We then have the following estimates controlling the solutions to \eqref{transport_hierarchy} up to $\scri^+$.

\begin{proposition}\label{Proposition1}
    Let $\tau_\star \in (-1,1), t>\tau_\star$, and consider the field $\psi$ and the components $\psi_k$, $k \in \{ 0, 1, 2 \}$, satisfying equations \eqref{transport_hierarchy} in the region $\mathcal{N}_t$. Let $(p,m) \in \mathbb{N} \times \mathbb{N}$ be such that
    \begin{align*}
        p > m + 1,
    \end{align*}
    and suppose that
    \begin{align*}
        \int_{\mathcal{S_\star}}\sum_{p'+q'+\lvert \alpha \rvert \leq m} \left( \sum_{k=0}^2 \lvert D'(\partial_{\rho}^{p}\psi_k) \rvert^2 + |D'(\partial_\rho^p \psi)|^2 \right) \d\rho \, \d\mu  < \infty.
    \end{align*}
    Then there exists a constant $C_{p,m}>0$ which is independent of t such that
    \begin{align}\label{upperResult}
        \sum_{k=0}^2 \lVert \partial_{\rho}^{p} \psi_k \rVert_{H^m(\mathcal{N}_t)} ^2 &+ \lVert \partial_{\rho}^{p} \psi \rVert_{H^m(\mathcal{N}_t)} ^2 \nonumber \\
        &\leq C_{p,m} \int_{\mathcal{S_\star}}\sum_{p'+q'+\lvert \alpha \rvert \leq m} \left( \sum_{k=0}^2 \lvert D'(\partial_{\rho}^{p}\psi_k) \rvert^2 + |D'(\partial_\rho^p \psi)|^2 \right) \d\rho \, \d\mu,
    \end{align}
    where the norm $H^m(\mathcal{N}_t)$ is given by
    \begin{align*}
        \lVert \partial_{\rho}^{p} f \rVert_{H^m(\mathcal{N}_t)} ^2 \equiv \int_{\mathcal{N}_t} \sum_{p'+q'+\lvert \alpha \rvert \leq m} \lvert D'(\partial_{\rho}^{p} f ) \rvert^2 \, \d \rho \, \d \mu.
    \end{align*}
\end{proposition}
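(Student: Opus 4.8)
The plan is to adapt Friedrich's energy estimate technique for total-characteristic transport systems near the critical set, carried out in detail in \cite{Fri03b} (and \cite{TauVal23}), to the hierarchical system \eqref{transport_hierarchy}. First I would fix the integer $p$ with $p > m+1$, apply $\partial_\rho^p$ to each of the four transport equations in \eqref{transport_hierarchy}, and commute the $\rho$-derivatives through the principal part. The key structural point is that commuting $\partial_\rho^p$ through the factors $(1\pm\tau)\partial_\tau$, $\rho\partial_\rho$ produces a term $-p\,\partial_\rho^p(\cdot)$ from the commutator $[\partial_\rho^p, \rho\partial_\rho] = p\,\partial_\rho^p$, which combines with the undifferentiated zeroth-order terms ($\pm\tfrac12\psi \pm \psi_1$, etc.) to yield an \emph{effective} zeroth-order coefficient of the form $c_{p} = (\text{const}) - p$, which is strictly negative once $p$ is large compared to $m$. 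This is precisely the mechanism that makes the estimate non-degenerate at $\mathcal{I}^\pm$ and supplies the good sign in the Gr\"onwall argument; the condition $p > m+1$ is exactly what is needed to keep this sign after further commuting with $D' = \partial_\rho^{p'}\partial_\tau^{q'}\bmZ^\alpha$ for $p' + q' + |\alpha| \le m$, since each additional $\partial_\rho$ or $\partial_\tau$ shifts the effective coefficient by a bounded amount.

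Next I would set up the energy. For each fixed $D'$ with $p'+q'+|\alpha|\le m$, multiply the $\partial_\rho^p$-differentiated, then $D'$-differentiated, equation for $\psi_0$ by $\overline{D'\partial_\rho^p\psi_0}$ (and its conjugate), and similarly for $\psi_2$, and treat the coupled pair $\mathbf{\Psi} = \psi_1 + \tfrac12\psi$, $\mathbf{\Phi} = \psi_1 - \tfrac12\psi$ as the appropriate evolution variables (as flagged in the Remark following \eqref{transport_hierarchy}), so that the ``double transport'' structure becomes genuine transport equations with the same $(1\pm\tau)$, $\rho$ factors. Summing over $k$ and over all $D'$ with $p'+q'+|\alpha|\le m$, integrating over the slab $\mathcal{N}_\tau$ (for $\tau_\star \le \tau \le t$) and integrating by parts, the principal part $(1\pm\tau)\partial_\tau + \rho\partial_\rho$ produces: (i) a boundary term on $\mathcal{S}_\tau$ with positive coefficient $(1\mp\tau) \geq 0$ — crucially non-negative and in fact uniformly positive away from the relevant critical set, which lets one bound the energy on $\mathcal{S}_t$; (ii) a boundary term on $\mathcal{B}_\tau$ whose sign is controlled by the outward-pointing direction along the cone $\rho = \tfrac{2\varepsilon}{2-\varepsilon}\tfrac{\rho_\star}{1+\tau}$, chosen precisely so this term has the right sign (this is the role of the specific slope of $\mathcal{B}_t$); and (iii) the bulk term $-\int (c_p + \text{bdd}) |D'\partial_\rho^p f|^2$, where the effective coefficient $c_p$ is negative by the choice $p>m+1$. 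The angular operators $\bmX_\pm$ on the right-hand sides are handled by commuting with $\bmZ^\alpha$ and absorbing them into the energy at the same differential order, using that $\{\bmX_+,\bmX_-,\bmZ_3\}$ is a frame and that $\bmZ^\alpha$ commutes with $\partial_\tau,\partial_\rho$; the resulting terms are lower-order in the $\partial_\rho$-count and are absorbed either into the good bulk term or via Cauchy--Schwarz into the Gr\"onwall error.

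Assembling these pieces gives a differential inequality of the form $\tfrac{\d}{\d\tau}\mathcal{E}(\tau) \le C\,\mathcal{E}(\tau)$ (modulo the good negative bulk term, which only helps), where $\mathcal{E}(\tau) = \sum_{k}\|\partial_\rho^p\psi_k\|_{H^m(\mathcal{S}_\tau)}^2 + \|\partial_\rho^p\psi\|_{H^m(\mathcal{S}_\tau)}^2$; Gr\"onwall's lemma then yields $\mathcal{E}(t) \le e^{C(t-\tau_\star)}\mathcal{E}(\tau_\star)$, and integrating once more in $\tau$ over $[\tau_\star, t]$ gives the $H^m(\mathcal{N}_t)$ bound \eqref{upperResult} with a constant $C_{p,m}$ that depends only on $p$, $m$, $\varepsilon$ (through the slope of $\mathcal{B}_t$) but \emph{not} on $t \in (\tau_\star, 1]$, since all the coefficients $(1\pm\tau)$, $\tau$, $\rho$ are bounded on $\overline{\mathcal{N}}$ and the only $\tau$-dependence in $C$ enters through $e^{C(t-\tau_\star)} \le e^{2C}$. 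The finiteness hypothesis on $\mathcal{S}_\star$ guarantees $\mathcal{E}(\tau_\star) < \infty$ so the whole scheme is justified (strictly, one first runs the argument on $\{\rho \ge \delta\}$ where the system is symmetric hyperbolic with smooth coefficients and then lets $\delta \to 0$, the estimates being uniform). The main obstacle I anticipate is bookkeeping the coupling between $\psi$ and $\psi_1$: one must verify that the change of variables to $\mathbf{\Psi},\mathbf{\Phi}$ genuinely decouples the principal parts into the standard $(1\pm\tau)\partial_\tau + \rho\partial_\rho$ form with the correct effective zeroth-order coefficients, so that the sign-of-$c_p$ argument applies to each variable separately; a secondary technical point is ensuring the $\mathcal{B}_t$-boundary term has a favourable sign simultaneously for all four hierarchy variables, which is what dictates the precise coefficient $\tfrac{2\varepsilon}{2-\varepsilon}$ in the definition of $\mathcal{B}_t$.
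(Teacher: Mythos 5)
Your overall strategy is the same as the paper's: form the higher-order currents from the hierarchy \eqref{transport_hierarchy}, pass to the variables $\mathbf{\Psi} = \psi_1 + \tfrac12\psi$ and $\mathbf{\Phi} = \psi_1 - \tfrac12\psi$ so the cross terms become perfect squares, kill the angular contributions with Friedrich's lemma (\Cref{FTL}), use the commutator $[\partial_\rho^p,\rho\partial_\rho]=p\,\partial_\rho^p$ together with $p>m+1$ to fix the signs of the bulk terms, observe that the $\mathcal{I}_t$ flux vanishes by the factor of $\rho$ and the $\mathcal{B}_t$ flux is non-negative, and close with Gr\"onwall. All of these ingredients match the paper's proof.

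However, there is a genuine gap in your final assembly. You define the slice energy $\mathcal{E}(\tau)$ as the \emph{unweighted} sum $\sum_k\|\partial_\rho^p\psi_k\|^2_{H^m(\mathcal{S}_\tau)}+\|\partial_\rho^p\psi\|^2_{H^m(\mathcal{S}_\tau)}$, claim $\tfrac{\d}{\d\tau}\mathcal{E}\leq C\mathcal{E}$, and recover the $H^m(\mathcal{N}_t)$ bound by integrating in $\tau$, discarding the good bulk term as something that ``only helps.'' But the flux identities do not control this energy: the $\mathcal{S}_t$ boundary terms are $(1+t)\lvert D\psi_2\rvert^2+(1-t)\lvert D\mathbf{\Psi}\rvert^2$ and $(1+t)\lvert D\mathbf{\Phi}\rvert^2+(1-t)\lvert D\psi_0\rvert^2$, so the slice norms of $\mathbf{\Psi}$ and $\psi_0$ carry a weight $(1-t)$ that degenerates at $\scri^+$. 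Your scheme therefore yields only a $(1-\tau)$-weighted spacetime integral for $\mathbf{\Psi}$ (hence for $\psi$ and $\psi_1$) and for $\psi_0$; converting this to the unweighted $H^m(\mathcal{N}_t)$ norm costs a factor $(1-t)^{-1}$, destroying the claimed $t$-independence of $C_{p,m}$. The paper's proof avoids this precisely by \emph{retaining} the positive bulk terms $2(p-m-1)\int_{\mathcal{N}_t}\lvert D'(\partial_\rho^p\mathbf{\Psi})\rvert^2$ and $2(p-m-1)\int_{\mathcal{N}_t}\lvert D'(\partial_\rho^p\psi_0)\rvert^2$ on the left-hand side: these, not the slice fluxes, are the source of the spacetime bounds for $\mathbf{\Psi}$ and $\psi_0$. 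Gr\"onwall is applied only to $\psi_2$ and $\mathbf{\Phi}$, whose $\mathcal{S}_t$ fluxes carry the non-degenerate weight $(1+t)$, and the four components are then recombined. So the bulk term with coefficient $2(p-m-1)>0$ is not a convenience to be dropped; for half of the components it is the entire estimate, and this is the real content of the hypothesis $p>m+1$.
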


\begin{proof}
Applying the operator $D$ to each equation in \eqref{transport_hierarchy} and multiplying by $\overline{D\psi_{k}}$ and $\overline{D\psi}$ in an appropriate combination yields the following higher-order currents,
\begin{subequations}
\begin{align}
    0 &= 2\operatorname{Re}\left(\overline{D\psi_{2}} DA_{0}+\overline{D\psi_{1}} DB_{0}\right) + \operatorname{Re}\left(\overline{D\psi}DB_{0}\right)  \label{current0}, \\
    0 &= 2\operatorname{Re}\left(\overline{D\psi_{1}} DA_{1}+\overline{D\psi_{0}} DB_{1}\right) - \operatorname{Re}\left(\overline{D\psi}DA_{1}\right)\label{current2} .
\end{align}
\end{subequations}
When expanded, equation \eqref{current0} takes the form
\begin{align}\label{expandedEstimate0}
    0 &= \binom{\partial_\tau}{\partial_\rho}\cdot\binom{(1+\tau)\lvert D\psi_{2} \rvert ^2+(1-\tau) \lvert D\psi_{1} \rvert^2+\frac{1}{4}(1-\tau)\lvert D\psi \rvert^2+(1-\tau)\operatorname{Re}\left( \overline{D\psi} D\psi_1 \right)}{-\rho \lvert D\psi_{2} \vert^2+\rho \lvert D\psi_{1}\rvert^2+\frac{1}{4}\rho\lvert D\psi \rvert ^2+\rho\operatorname{Re}\left( \overline{D\psi} D\psi_1 \right)} \nonumber\\
    &-\bmZ^{\alpha}\bmX_{+}(\partial_{\rho}^{p}\partial_{\tau}^{q}\psi_{2})\bmZ^{\alpha}(\partial_{\rho}^{p}\partial_{\tau}^{q}\overline{\psi}_{1}) -\bmZ^{\alpha}(\partial_{\rho}^{p}\partial_{\tau}^{q}\psi_{2})\bmZ^{\alpha}\bmX_{+}(\partial_{\rho}^{p}\partial_{\tau}^{q}\overline{\psi}_{1}) \nonumber \\
    &-\bmZ^{\alpha}\bmX_{-}(\partial_{\rho}^{p}\partial_{\tau}^{q}\psi_{1})\bmZ^{\alpha}(\partial_{\rho}^{p}\partial_{\tau}^{q}\overline{\psi}_{2}) -\bmZ^{\alpha}(\partial_{\rho}^{p}\partial_{\tau}^{q}\psi_{1})\bmZ^{\alpha}\bmX_{-}(\partial_{\rho}^{p}\partial_{\tau}^{q}\overline{\psi}_{2}) \\
    &-\frac{1}{2}\left( \bmZ^{\alpha}\bmX_{+}(\partial_{\rho}^{p}\partial_{\tau}^{q}\psi_{2})\bmZ^{\alpha}(\partial_{\rho}^{p}\partial_{\tau}^{q}\overline{\psi}) +\bmZ^{\alpha}(\partial_{\rho}^{p}\partial_{\tau}^{q}\psi_{2})\bmZ^{\alpha}\bmX_{+}(\partial_{\rho}^{p}\partial_{\tau}^{q}\overline{\psi}) \right) \nonumber \\
    &-\frac{1}{2}\left( \bmZ^{\alpha}\bmX_{-}(\partial_{\rho}^{p}\partial_{\tau}^{q}\psi)\bmZ^{\alpha}(\partial_{\rho}^{p}\partial_{\tau}^{q}\overline{\psi}_{2}) +\bmZ^{\alpha}(\partial_{\rho}^{p}\partial_{\tau}^{q}\psi)\bmZ^{\alpha}\bmX_{-}(\partial_{\rho}^{p}\partial_{\tau}^{q}\overline{\psi}_{2}) \right) \nonumber \\
    &-2(p-q)\lvert D\psi_{2} \rvert ^2 +2(p-q-1)\lvert D\psi_{1} \rvert ^2 +\frac{1}{2}(p-q-1)\lvert D\psi \rvert ^2 +2(p-q-1)\operatorname{Re}\left( \overline{D\psi} D\psi_1 \right)\nonumber , 
\end{align}
while equation \eqref{current2} yields
\begin{align}\label{expandedEstimate1}
    0 &= \binom{\partial_\tau}{\partial_\rho}\cdot\binom{(1+\tau)\lvert D\psi_{1} \rvert ^2+(1-\tau) \lvert D\psi_{0} \rvert^2+\frac{1}{4}(1+\tau)\lvert D\psi \rvert^2-(1+\tau)\operatorname{Re}\left( \overline{D\psi} D\psi_1 \right)}{-\rho \lvert D\psi_{1} \vert^2+\rho \lvert D\psi_{0}\rvert^2-\frac{1}{4}\rho\lvert D\psi \rvert ^2+\rho\operatorname{Re}\left( \overline{D\psi} D\psi_1 \right)} \nonumber\\
    &-\bmZ^{\alpha}\bmX_{+}(\partial_{\rho}^{p}\partial_{\tau}^{q}\psi_{1})\bmZ^{\alpha}(\partial_{\rho}^{p}\partial_{\tau}^{q}\overline{\psi}_{0}) -\bmZ^{\alpha}(\partial_{\rho}^{p}\partial_{\tau}^{q}\psi_{1})\bmZ^{\alpha}\bmX_{+}(\partial_{\rho}^{p}\partial_{\tau}^{q}\overline{\psi}_{0}) \nonumber \\
    &-\bmZ^{\alpha}\bmX_{-}(\partial_{\rho}^{p}\partial_{\tau}^{q}\psi_{0})\bmZ^{\alpha}(\partial_{\rho}^{p}\partial_{\tau}^{q}\overline{\psi}_{1}) -\bmZ^{\alpha}(\partial_{\rho}^{p}\partial_{\tau}^{q}\psi_{0})\bmZ^{\alpha}\bmX_{-}(\partial_{\rho}^{p}\partial_{\tau}^{q}\overline{\psi}_{1}) \\
    &+\frac{1}{2}\left( \bmZ^{\alpha}\bmX_{+}(\partial_{\rho}^{p}\partial_{\tau}^{q}\psi)\bmZ^{\alpha}(\partial_{\rho}^{p}\partial_{\tau}^{q}\overline{\psi}_{0}) +\bmZ^{\alpha}(\partial_{\rho}^{p}\partial_{\tau}^{q}\psi)\bmZ^{\alpha}\bmX_{+}(\partial_{\rho}^{p}\partial_{\tau}^{q}\overline{\psi}_{0}) \right) \nonumber \\
    &+\frac{1}{2}\left( \bmZ^{\alpha}\bmX_{-}(\partial_{\rho}^{p}\partial_{\tau}^{q}\psi_{0})\bmZ^{\alpha}(\partial_{\rho}^{p}\partial_{\tau}^{q}\overline{\psi}) +\bmZ^{\alpha}(\partial_{\rho}^{p}\partial_{\tau}^{q}\psi_{0})\bmZ^{\alpha}\bmX_{-}(\partial_{\rho}^{p}\partial_{\tau}^{q}\overline{\psi}) \right) \nonumber \\
    &-2(p-q-1)\lvert D\psi_{1} \rvert ^2 +2(p-q)\lvert D\psi_{0} \rvert ^2 -\frac{1}{2}(p-q-1)\lvert D\psi \rvert ^2 +2(p-q-1)\operatorname{Re}\left( \overline{D\psi} D\psi_1 \right)\nonumber.
\end{align}
We now integrate equations \eqref{expandedEstimate0} and \eqref{expandedEstimate1} over $\mathcal{N}_t$ with respect to the volume element $\mathrm{dv}=\d\tau \wedge \d\rho \wedge \d\mu$, where $\d\mu$ is the Haar measure on $\text{SU}(2)$. Note at this stage that any terms involving the angular derivatives on $\text{SU}(2)$ are in the right form to apply Friedrich's technical lemma (\Cref{FTL}). This allows us to see that summing the integrated angular derivatives will give zero for any order $\lvert\alpha\rvert\leq m'$. We will make use of this property shortly.
In light of this, for integrals of the angular derivatives we will simply write angular($\alpha$). Hence we have
\begin{align*}
    0 &= \int_{\mathcal{N}_t}\binom{\partial_\tau}{\partial_\rho}\cdot\binom{(1+\tau)\lvert D\psi_{2} \rvert ^2+(1-\tau) \lvert D\psi_{1} \rvert^2+\frac{1}{4}(1-\tau)\lvert D\psi \rvert^2+(1-\tau)\operatorname{Re}\left( \overline{D\psi} D\psi_1 \right)}{-\rho \lvert D\psi_{2} \vert^2+\rho \lvert D\psi_{1}\rvert^2+\frac{1}{4}\rho\lvert D\psi \rvert ^2+\rho\operatorname{Re}\left( \overline{D\psi} D\psi_1 \right)} \mathrm{dv}  \nonumber \\
    &-2(p-q)\int_{\mathcal{N}_t}\lvert D\psi_{2} \rvert ^2\mathrm{dv}  +2(p-q-1)\int_{\mathcal{N}_t}\lvert D\psi_{1} \rvert ^2\mathrm{dv}  \\
    &+\frac{1}{2}(p-q-1)\int_{\mathcal{N}_t}\lvert D\psi \rvert ^2\mathrm{dv}  +(p-q-1)\int_{\mathcal{N}_t}2\operatorname{Re}\left( \overline{D\psi} D\psi_1 \right)\mathrm{dv}  \nonumber\\
    &-\text{angular}(\alpha), \nonumber
\end{align*}
and
\begin{align*}
    0 &= \int_{\mathcal{N}_t}\binom{\partial_\tau}{\partial_\rho}\cdot\binom{(1+\tau)\lvert D\psi_{1} \rvert ^2+(1-\tau) \lvert D\psi_{0} \rvert^2+\frac{1}{4}(1+\tau)\lvert D\psi \rvert^2-(1+\tau)\operatorname{Re}\left( \overline{D\psi} D\psi_1 \right)}{-\rho \lvert D\psi_{1} \vert^2+\rho \lvert D\psi_{0}\rvert^2-\frac{1}{4}\rho\lvert D\psi \rvert ^2+\rho\operatorname{Re}\left( \overline{D\psi} D\psi_1 \right)}\mathrm{dv}  \nonumber \\
    &-2(p-q-1)\int_{\mathcal{N}_t}\lvert D\psi_{1} \rvert ^2\mathrm{dv}  +2(p-q)\int_{\mathcal{N}_t}\lvert D\psi_{0} \rvert ^2\mathrm{dv}  \\
    &-\frac{1}{2}(p-q-1)\int_{\mathcal{N}_t}\lvert D\psi \rvert ^2\mathrm{dv}  +(p-q-1)\int_{\mathcal{N}_t}2\operatorname{Re}\left( \overline{D\psi} D\psi_1 \right)\mathrm{dv}  \nonumber \\
    &-\text{angular}(\alpha). \nonumber
\end{align*}
Applying the Euclidean divergence theorem on the first integral in each equation yields
\begin{align}\label{intermediateEstimate0}
    0 &= \int_{\mathcal{S}_t} \left( (1+t)\lvert D\psi_{2} \rvert ^2 +(1-t)\lvert D\psi_{1} \rvert^2 +\frac{1}{4}(1-t)\lvert D\psi \rvert^2 +(1-t)\operatorname{Re}\left( \overline{D\psi} D\psi_1 \right) \right) \d\rho \, \d\mu \nonumber \\
    &- \int_{\mathcal{S}_*} \left( (1+\tau_*)\lvert D\psi_{2} \rvert ^2 +(1-\tau_*)\lvert D\psi_{1} \rvert^2 +\frac{1}{4}(1-\tau_*)\lvert D\psi \rvert^2 +(1-\tau_*)\operatorname{Re}\left( \overline{D\psi} D\psi_1 \right) \right) \d\rho \, \d\mu \nonumber \\
    &+ \int_{\mathcal{I}_t} \rho\left( - \lvert D\psi_{2} \vert^2+ \lvert D\psi_{1}\rvert^2+\frac{1}{4}\lvert D\psi \rvert ^2+\operatorname{Re}\left( \overline{D\psi} D\psi_1 \right) \right) \d\tau \, \d\mu \\
    &+ \int_{\mathcal{B}_t} \binom{(1+\tau)\lvert D\psi_{2} \rvert ^2+(1-\tau) \lvert D\psi_{1} \rvert^2+\frac{1}{4}(1-\tau)\lvert D\psi \rvert^2+(1-\tau)\operatorname{Re}\left( \overline{D\psi} D\psi_1 \right)}{-\rho \lvert D\psi_{2} \vert^2+\rho \lvert D\psi_{1}\rvert^2+\frac{1}{4}\rho\lvert D\psi \rvert ^2+\rho\operatorname{Re}\left( \overline{D\psi} D\psi_1 \right)}\cdot\nu\binom{\rho}{1+\tau} \d\mathcal{B} \nonumber \\
    &-2(p-q)\int_{\mathcal{N}_t}\lvert D\psi_{2} \rvert ^2\mathrm{dv}  +2(p-q-1)\int_{\mathcal{N}_t}\lvert D\psi_{1} \rvert ^2\mathrm{dv}  \nonumber \\
    &+\frac{1}{2}(p-q-1)\int_{\mathcal{N}_t}\lvert D\psi \rvert ^2\mathrm{dv}  +(p-q-1)\int_{\mathcal{N}_t}2\operatorname{Re}\left( \overline{D\psi} D\psi_1 \right)\mathrm{dv}  \nonumber \\
    &-\text{angular}(\alpha), \nonumber
\end{align}
and
\begin{align}\label{intermediateEstimate1}
    0 &= \int_{\mathcal{S}_t} \left( (1+t)\lvert D\psi_{1} \rvert ^2+(1-t) \lvert D\psi_{0} \rvert^2+\frac{1}{4}(1+t)\lvert D\psi \rvert^2-(1+t)\operatorname{Re}\left( \overline{D\psi} D\psi_1 \right) \right) \d\rho \, \d\mu \nonumber \\
    &- \int_{\mathcal{S}_*} \left( (1+\tau_*)\lvert D\psi_{1} \rvert ^2+(1-\tau_*) \lvert D\psi_{0} \rvert^2+\frac{1}{4}(1+\tau_*)\lvert D\psi \rvert^2-(1+\tau_*)\operatorname{Re}\left( \overline{D\psi} D\psi_1 \right) \right) \d\rho \, \d\mu \nonumber \\
    &+ \int_{\mathcal{I}_t} \rho\left( - \lvert D\psi_{1} \vert^2+ \lvert D\psi_{0}\rvert^2-\frac{1}{4}\lvert D\psi \rvert ^2+\operatorname{Re}\left( \overline{D\psi} D\psi_1 \right) \right) \d\tau \, \d\mu \\
    &+ \int_{\mathcal{B}_t} \binom{(1+\tau)\lvert D\psi_{1} \rvert ^2+(1-\tau) \lvert D\psi_{0} \rvert^2+\frac{1}{4}(1+\tau)\lvert D\psi \rvert^2-(1+\tau)\operatorname{Re}\left( \overline{D\psi} D\psi_1 \right)}{-\rho \lvert D\psi_{1} \vert^2+\rho \lvert D\psi_{0}\rvert^2-\frac{1}{4}\rho\lvert D\psi \rvert ^2+\rho\operatorname{Re}\left( \overline{D\psi} D\psi_1 \right)}\cdot\nu\binom{\rho}{1+\tau} \d\mathcal{B} \nonumber \\
    &-2(p-q-1)\int_{\mathcal{N}_t}\lvert D\psi_{1} \rvert ^2\mathrm{dv}  +2(p-q)\int_{\mathcal{N}_t}\lvert D\psi_{0} \rvert ^2\mathrm{dv}  \nonumber \\
    &-\frac{1}{2}(p-q-1)\int_{\mathcal{N}_t}\lvert D\psi \rvert ^2\mathrm{dv}  +(p-q-1)\int_{\mathcal{N}_t}2\operatorname{Re}\left( \overline{D\psi} D\psi_1 \right)\mathrm{dv}  \nonumber \\
    &-\text{angular}(\alpha), \nonumber
\end{align}
where $\d \mathcal{B}$ is the induced measure on $\mathcal{B}_t$ and $\nu\equiv(\rho^2+(1+\tau)^2)^{-\frac{1}{2}}$ is a normalisation factor for the outward unit normal to $\mathcal{B}_t$. In the above expressions, we have that both integrals on the cylinder $\mathcal{I}_t$ vanish due to the factor of $\rho$.

\subsection{Estimates for the $\psi_2$ component}

To obtain the estimate for $\psi_2$, we return to equation (\ref{intermediateEstimate0}) and observe that the integral over $\mathcal{B}_t$ simplifies. One has that 
\begin{align}\label{step2}
    &\int_{\mathcal{S}_t} \left( (1+t)\lvert D\psi_{2} \rvert ^2 +(1-t)\lvert D\psi_{1} \rvert^2 +\frac{1}{4}(1-t)\lvert D\psi \rvert^2 +(1-t)\operatorname{Re}\left( \overline{D\psi} D\psi_1 \right) \right) \d\rho \, \d\mu \nonumber \\
    &+ \int_{\mathcal{B}_t} 2\nu\rho\left( \lvert D\psi_{1} \rvert^2 +\frac{1}{4}\lvert D\psi \rvert^2 +\operatorname{Re}\left( \overline{D\psi} D\psi_1 \right) \right) \d\mathcal{B} \nonumber \\
    &+2(p-q-1)\int_{\mathcal{N}_t}\left(\lvert D\psi_{1} \rvert ^2   +\frac{1}{4}\lvert D\psi \rvert ^2\operatorname{Re}\left( \overline{D\psi} D\psi_1 \right) \right)\mathrm{dv}  \\
    &= \int_{\mathcal{S}_\star} \left( (1+\tau_\star)\lvert D\psi_{2} \rvert ^2 +(1-\tau_\star)\lvert D\psi_{1} \rvert^2 +\frac{1}{4}(1-\tau_\star)\lvert D\psi \rvert^2 +(1-\tau_\star)\operatorname{Re}\left( \overline{D\psi} D\psi_1 \right) \right) \d\rho \, \d\mu \nonumber \\
    &+2(p-q)\int_{\mathcal{N}_t}\lvert D\psi_{2} \rvert ^2 \, \mathrm{dv}  \nonumber \\
    &+\text{angular}(\alpha). \nonumber
\end{align}
Defining
\begin{align}\label{PSI}
    \mathbf{\Psi} = \psi_1 + \frac{1}{2} \psi,
\end{align}
so that $\lvert D \bm{\Psi} \rvert ^2 \equiv \lvert D\psi_{1} \rvert^2 +\frac{1}{4}\lvert D\psi \rvert^2 +\operatorname{Re}\left( \overline{D\psi} D\psi_1 \right) \geq 0$, we obtain the inequality
\begin{align*}
    &\int_{\mathcal{S}_t} (1+t)\lvert D\psi_{2} \rvert ^2 \, \d\rho \, \d\mu +2(p-q-1)\int_{\mathcal{N}_t}\lvert D \bm{\Psi} \rvert ^2 \, \mathrm{dv}  \\
    &\leq \int_{\mathcal{S}_\star} \left( (1+\tau_\star)\lvert D\psi_{2} \rvert ^2 + (1-\tau_\star)\lvert D \bm{\Psi} \rvert ^2\right) \d\rho \, \d\mu \\
    & +2(p-q)\int_{\mathcal{N}_t}\lvert D\psi_{2} \rvert^2 \, \mathrm{dv} \\
    & + \text{angular}(\alpha).
\end{align*}

We now perform a re-labelling of the indices
\[
p \longrightarrow p'+p, \quad q \longrightarrow q',
\]
so that the operator $D$ transforms as
\begin{align*}
    D \longrightarrow D'\partial_{\rho}^{p},
\end{align*}
where $D' \equiv D^{(p',q',\alpha)}$. Summing both sides of \eqref{step1} over
$\mho\equiv \{ (p',q',\alpha) \, | \, p'+q'+\lvert\alpha\rvert \leq m \}$ and using \Cref{FTL} to see that $\sum_{\mho}\text{angular}(\alpha) = 0$, we obtain
\begin{align*}
    &\sum_{\mho}\int_{\mathcal{S}_t} (1+t)\lvert D'(\partial_{\rho}^{p}\psi_{2}) \rvert ^2 \, \d\rho \, \d\mu +2\sum_{\mho}(p'+p-q'-1)\int_{\mathcal{N}_t}\lvert D'(\partial_{\rho}^{p}\bm{\Psi}) \rvert^2 \, \mathrm{dv}  \\
    &\leq \sum_{\mho}\int_{\mathcal{S}_\star} \left( (1+\tau_\star)\lvert D'(\partial_{\rho}^{p}\psi_{2}) \rvert ^2 +(1-\tau_\star)\lvert D'(\partial_{\rho}^{p}\bm{\Psi}) \rvert ^2 \right) \d\rho \, \d\mu \\
    &+2\sum_{\mho}(p'+p-q')\int_{\mathcal{N}_t}\lvert D'(\partial_{\rho}^{p}\psi_{2}) \rvert^2 \, \mathrm{dv} .
\end{align*}
We now choose $p$ and $m$ such that
\begin{align*}
    p > m + 1,
\end{align*}
and commute $\Sigma_{\mho}$ through terms involving factors of $(p'-q')$ to obtain
\begin{align}\label{recoverPSI}
    (1+t) & \sum_{\mho}\int_{\mathcal{S}_t} \lvert D'(\partial_{\rho}^{p}\psi_{2}) \rvert^2 \, \d\rho \, \d\mu +2(p-m-1)\sum_{\mho}\int_{\mathcal{N}_t}\lvert D'(\partial_{\rho}^{p}\bm{\Psi}) \rvert^2 \, \mathrm{dv}  \nonumber \\
    &\leq (1+\tau_\star)\sum_{\mho}\int_{\mathcal{S}_*} \lvert D'(\partial_{\rho}^{p}\psi_{2}) \rvert ^2 \, \d\rho \, \d\mu +(1-\tau_*)\sum_{\mho}\int_{\mathcal{S}_\star} \lvert D'(\partial_{\rho}^{p}\bm{\Psi}) \rvert ^2 \, \d\rho \, \d\mu \\
    &+2(p+m+1)\sum_{\mho}\int_{\mathcal{N}_t}\lvert D'(\partial_{\rho}^{p}\psi_{2}) \rvert^2 \, \mathrm{dv} . \nonumber
\end{align}
The bulk integral on the left-hand side is then positive. Dropping this term gives
\begin{align*}
    \int_{\mathcal{S}_t}\sum_{\mho} \lvert D'(\partial_{\rho}^{p}\psi_{2}) \rvert ^2 \, \d\rho \, \d\mu & \lesssim \int_{\mathcal{S}_\star}\sum_{\mho} \left( \lvert D'(\partial_{\rho}^{p}\psi_{2}) \rvert ^2 +\lvert D'(\partial_{\rho}^{p}\bm{\Psi}) \rvert ^2\right) \d\rho \, \d\mu \\
    &+2(p+m+1) \int_{\tau_\star}^t \left( \int_{\mathcal{S}_\tau}\sum_{\mho}\lvert D'(\partial_{\rho}^{p}\psi_{2}) \rvert ^2 \d\rho \, \d\mu \right) \d\tau,
\end{align*}
to which we apply a Gr\"onwall argument to obtain
\begin{align*}
    \int_{\mathcal{S}_t}\sum_{\mho} \lvert D'(\partial_{\rho}^{p}\psi_{2}) \rvert ^2 \, \d\rho \, \d\mu \lesssim_{p,m} \int_{\mathcal{S}_\star}\sum_{\mho} \left( \lvert D'(\partial_{\rho}^{p}\psi_{2}) \rvert ^2+\lvert D'(\partial_{\rho}^{p}\bm{\Psi}) \rvert ^2 \right) \d\rho \, \d\mu.
\end{align*}
Upon integration, we are left with
\begin{align}\label{psi_2_spacetime_estimate_upper}
    \begin{split}
     \lVert \partial_{\rho}^{p} \psi_2 \rVert^2_{H^m(\mathcal{N}_t)}  & \lesssim_{p,m} \int_{\mathcal{S}_\star}\sum_{\mho} \left( \lvert D'(\partial_{\rho}^{p}\psi_{2}) \rvert ^2 + \lvert D'(\partial_{\rho}^{p}\bm{\Psi}) \rvert ^2\right) \d\rho \, \d\mu \\
     & \lesssim_{p,m} \int_{\mathcal{S}_\star} \sum_{\mho} \left( \sum_{i \in \{1,2\}} |D'(\partial_\rho^p \psi_i)|^2 + |D'(\partial_\rho^p \psi)|^2 \right) \d \rho \, \d \mu .
     \end{split}
\end{align}

\subsection{Estimates for the remaining components}
We now proceed to derive estimates for the remaining components. Returning to equation \eqref{intermediateEstimate1}, as before, the integral on the cylinder $\mathcal{I}_t$ vanishes due to the factor of $\rho$, while the integral over $\mathcal{B}_t$ simplifies to $\int_{\mathcal{B}_t} 2\nu\rho\lvert D\psi_{0} \rvert ^2\d\mathcal{B}$. As the latter is non-negative, we obtain the inequality
\begin{align*}
    \int_{\mathcal{S}_t} &\left( (1+t)\lvert D\psi_{1} \rvert ^2+(1-t) \lvert D\psi_{0} \rvert^2+\frac{1}{4}(1+t)\lvert D\psi \rvert^2-(1+t)\operatorname{Re}\left( \overline{D\psi} D\psi_1 \right) \right) \d\rho \, \d\mu \\
    &+2(p-q)\int_{\mathcal{N}_t}\lvert D\psi_{0} \rvert^2 \, \mathrm{dv}  +(p-q-1)\int_{\mathcal{N}_t}2\operatorname{Re}\left( \overline{D\psi} D\psi_1 \right)\mathrm{dv}  \\
    &\leq \int_{\mathcal{S}_\star} \left( (1+\tau_\star)\lvert D\psi_{1} \rvert ^2+(1-\tau_\star) \lvert D\psi_{0} \rvert^2+\frac{1}{4}(1+\tau_\star)\lvert D\psi \rvert^2-(1+\tau_\star)\operatorname{Re}\left( \overline{D\psi} D\psi_1 \right) \right) \d\rho \, \d\mu \\
    &+2(p-q-1)\int_{\mathcal{N}_t}\lvert D\psi_{1} \rvert^2 \, \mathrm{dv}  +\frac{1}{2}(p-q-1)\int_{\mathcal{N}_t}\lvert D\psi \rvert^2 \, \mathrm{dv}  \\
    &+\text{angular}(\alpha).
\end{align*}
Defining
\[
\bm{\Phi} \equiv \psi_1 - \frac{1}{2}\psi
\]
so that $\lvert D \bm{\Phi} \rvert ^2 \equiv \lvert D\psi_{1} \rvert^2 +\frac{1}{4}\lvert D\psi \rvert^2 -\operatorname{Re}\left( \overline{D\psi} D\psi_1 \right) \geq 0$, we obtain
\begin{align}
\label{step1}
\begin{split}
    \int_{\mathcal{S}_t} & \left( (1+t)  \lvert D\bm{\Phi}\rvert^2 +(1-t) \lvert D\psi_{0} \rvert^2 \right) \d\rho \, \d\mu +2(p-q)\int_{\mathcal{N}_t}\lvert D\psi_{0} \rvert^2 \, \mathrm{dv}  \\
    & \leq \int_{\mathcal{S}_\star} \left((1+\tau_\star) \lvert D\bm{\Phi}\rvert ^2+(1-\tau_\star) \lvert D\psi_{0} \rvert^2 \right) \d\rho \, \d\mu +2(p-q-1)\int_{\mathcal{N}_t}\lvert D\bm{\Phi}\rvert^2 \, \mathrm{dv}  \\
    & +\text{angular}(\alpha).
\end{split}
\end{align}
Now, as before, we perform the shifts 
\[
p \longrightarrow p'+p, \qquad q \longrightarrow q',
\]
and sum both sides of inequality \eqref{step1} over $\mho$ to get
\begin{align*}
    \sum_{\mho}\int_{\mathcal{S}_t} & \left( (1+t)\lvert D'(\partial_{\rho}^{p}\bm{\Phi})\rvert ^2+(1-t) \lvert D'(\partial_{\rho}^{p}\psi_{0}) \rvert^2 \right) \d\rho \, \d\mu +2 \sum_{\mho}(p'+p-q')\int_{\mathcal{N}_t}\lvert D'(\partial_{\rho}^{p}\psi_{0}) \rvert ^2\mathrm{dv} \\
    &\leq  \sum_{\mho}\int_{\mathcal{S}_\star} \left((1+\tau_\star) \lvert D'(\partial_{\rho}^{p}\bm{\Phi})\rvert ^2+(1-\tau_\star) \lvert D'(\partial_{\rho}^{p}\psi_{0}) \rvert^2 \right) \d\rho \, \d\mu \\
    & + 2 \sum_{\mho}(p'+p-q'-1)\int_{\mathcal{N}_t}\lvert D'(\partial_{\rho}^{p}\bm{\Phi})\rvert ^2\mathrm{dv} . 
\end{align*}
Using that $p > m+1$, we further conclude that 
\begin{align}\label{forPsi2}
\begin{split}
    \sum_{\mho}\int_{\mathcal{S}_t} & \left( (1+t)\lvert D'(\partial_{\rho}^{p}\bm{\Phi})\rvert ^2+(1-t) \lvert D'(\partial_{\rho}^{p}\psi_{0}) \rvert^2 \right) \d\rho \, \d\mu \\
    & +2(p-m-1)\sum_{\mho}\int_{\mathcal{N}_t}\lvert D'(\partial_{\rho}^{p}\psi_{0}) \rvert^2 \, \mathrm{dv} \\
    & \leq  \sum_{\mho}\int_{\mathcal{S}_\star} \left((1+\tau_\star) \lvert D'(\partial_{\rho}^{p}\bm{\Phi})\rvert ^2+(1-\tau_\star) \lvert D'(\partial_{\rho}^{p}\psi_{0}) \rvert^2 \right) \d\rho \, \d\mu \\
    &+2(p+m) \sum_{\mho}\int_{\mathcal{N}_t}\lvert D'(\partial_{\rho}^{p}\bm{\Phi})\rvert^2 \, \mathrm{dv} . 
\end{split}
\end{align}
Dropping the positive bulk term on the left-hand side, we obtain 
\[
\int_{\mathcal{S}_t}\sum_{\mho} \lvert D'(\partial_{\rho}^{p}\bm{\Phi})\rvert^2  \, \d\rho \, \d\mu \lesssim \int_{\mathcal{S}_\star}\sum_{\mho} \lvert D'(\partial_{\rho}^{p}\bm{\Phi})\rvert^2 \, \d\rho \, \d\mu +3(p+m)\int_{\tau_\star}^{t}\left( \int_{\mathcal{S}_\tau}\sum_{\mho}\lvert D'(\partial_{\rho}^{p}\bm{\Phi})\rvert^2 \, \d\rho \, \d\mu \right) \! \d\tau.
\]
Using Gr\"onwall's inequality, we obtain for all $t \in [\tau_*, 1]$ the estimate 
\[ \int_{\mathcal{S}_t} \sum_{\mho} |D'(\partial_\rho^p \mathbf{\Phi})|^2 \, \d\rho \, \d \mu \lesssim_{p,m} \int_{\mathcal{S}_{\star}} \sum_{\mho} |D'(\partial_\rho^p \mathbf{\Phi})|^2 \, \d\rho \, \d \mu, \]
which gives upon integration
\begin{align}
\label{Phi_spacetime_estimate_top_domain}
\begin{split}
\| \partial_\rho^p \mathbf{\Phi} \|^2_{H^m(\mathcal{N}_t)} &\lesssim_{p,m} \int_{\mathcal{S}_{\star}} \sum_{\mho} |D'(\partial_\rho^p \mathbf{\Phi})|^2 \, \d\rho \, \d \mu \\
& \lesssim_{p,m} \int_{\mathcal{S}_\star} \sum_{\mho} \left(|D'(\partial_\rho^p \psi_1)|^2 + |D'(\partial_\rho^p \psi)|^2 \right) \d \rho \, \d \mu .
\end{split}
\end{align}

But also from \eqref{recoverPSI} we have, using \eqref{psi_2_spacetime_estimate_upper},
\begin{align}\label{Psi_spacetime_estimate_top_domain}
\begin{split}
    \lVert \partial_{\rho}^{p}\bm{\Psi} \rVert ^{2}_{H^{m}(\mathcal{N}_t)} &\lesssim_{p,m} \sum_{\mho}\int_{\mathcal{S}_*} \lvert  D'(\partial_{\rho}^{p}\psi_{2}) \rvert^2 \, \d\rho \, \d\mu +\sum_{\mho}\int_{\mathcal{S}_\star} \lvert D'(\partial_{\rho}^{p}\bm{\Psi}) \rvert^2 \, \d\rho \, \d\mu \\
    &+ \sum_{\mho}\int_{\mathcal{N}_t}\lvert D'(\partial_{\rho}^{p}\psi_{2}) \rvert ^2\mathrm{dv} \\
    & \lesssim_{p,m} \sum_{\mho}\int_{\mathcal{S}_*} \lvert  D'(\partial_{\rho}^{p}\psi_{2}) \rvert^2 \, \d\rho \, \d\mu +\sum_{\mho}\int_{\mathcal{S}_\star} \lvert D'(\partial_{\rho}^{p}\bm{\Psi}) \rvert^2 \, \d\rho \, \d\mu \\
    & \lesssim_{p,m} \int_{\mathcal{S}_\star} \sum_{\mho} \left( \sum_{i \in \{1,2\}}|D'(\partial_\rho^p \psi_i)|^2 + |D'(\partial_\rho^p \psi)|^2 \right) \d \rho \, \d \mu .
\end{split}
\end{align}

Combining the estimates for $\partial_\rho^p \mathbf{\Phi}$ and $\partial_\rho^p \mathbf{\Psi}$, estimates \eqref{Phi_spacetime_estimate_top_domain} and \eqref{Psi_spacetime_estimate_top_domain} above, we obtain
\begin{align}\label{psi_1_psi_spacetime_estimate_top_domain}
\begin{split}
    \lVert \partial_{\rho}^{p}\bm{\Phi} \rVert ^{2}_{H^{m}(\mathcal{N}_t)} + \lVert \partial_{\rho}^{p}\bm{\Psi} \rVert ^{2}_{H^{m}(\mathcal{N}_t)} &\equiv 2 \lVert \partial_{\rho}^{p}\psi_1 \rVert ^{2}_{H^{m}(\mathcal{N}_t)} + \frac{1}{2} \lVert \partial_{\rho}^{p}\psi \rVert ^{2}_{H^{m}(\mathcal{N}_t)} \\
    &\lesssim_{p,m} \int_{\mathcal{S}_\star} \sum_{\mho} \left( \sum_{i \in \{1,2\}}|D'(\partial_\rho^p \psi_i)|^2 + |D'(\partial_\rho^p \psi)|^2 \right) \d \rho \, \d \mu.
\end{split}
\end{align}

It remains to estimate $\partial_\rho^p \psi_0$. Using the fact that $p> m+1$, from \eqref{forPsi2} we obtain, using \eqref{Phi_spacetime_estimate_top_domain},
\begin{align}
    \label{psi_0_spacetime_estimate_top_domain}
    \begin{split}
    \| \partial_\rho^p \psi_0 \|^2_{H^m(\mathcal{N}_t)} & \lesssim_{p,m} \int_{\mathcal{S}_\star} \sum_{\mho} \left( \sum_{i\in \{0,1\}} |D'(\partial_\rho^p \psi_i )|^2 + |D'(\partial_\rho^p \psi)|^2 \right) \d \rho \, \d \mu.
    \end{split}
\end{align}

Combining estimates \eqref{psi_2_spacetime_estimate_upper}, \eqref{psi_1_psi_spacetime_estimate_top_domain} and \eqref{psi_0_spacetime_estimate_top_domain} completes the proof of \Cref{Proposition1}.

\end{proof}

\subsection{Asymptotic expansions near $\scri^+$}\label{Section3.3}

\Cref{Proposition1} controls the components $\psi$, $\psi_k$, $k \in \{0, 1, 2\}$, in $H^m(\mathcal{N}_t)$ uniformly in $t \in [\tau_\star, 1]$ in terms of norms of Cauchy initial data on $\mathcal{S}_\star$. Since the boundary of the $5$-dimensional domain $\mathcal{N}_t$ is Lipschitz, by Sobolev embedding we have, for $k \in \{ 0, 1, 2\}$ and $p>m+1$,
\begin{align*}
    \partial_\rho^p \psi, \ \partial_\rho^p \psi_k \in H^{m}(\mathcal{N}_1) \hookrightarrow C^{r,\alpha}(\mathcal{N}_1),
\end{align*}
for $r$ a positive integer and $\alpha \in (0,1)$ satisfying $r+\alpha = m - \frac{5}{2}$ and $m \geq 3$. Equivalently, $m \geq r+\alpha+\frac{5}{2}$. Restricting to $\alpha \leq \frac{1}{2}$, we have 
\begin{align*}
    \partial_\rho^p \psi, \ \partial_\rho^p \psi_k \in H^{r+3}(\mathcal{N}_1) \hookrightarrow C^{r,\alpha}(\mathcal{N}_1)
\end{align*}
whenever $p > m+ 1 \geq r + 4$.

Now if $f$ is any of the components $\psi$, $\psi_k$, $k \in \{0,1,2\}$, by successively integrating in $\rho$ one obtains an expansion of the form
\begin{align}
    f = \sum_{p'=0}^{p-1}\frac{1}{p'!}f^{(p')}\rho^{p'} + J^p(\partial_{\rho}^pf),
\label{Expansionf}
\end{align}
where $J$ is the operator
\begin{align*}
    J: g \longmapsto J(g) = \int_{0}^{\rho}g\left(\tau,\rho',t\indices{^\bmA_\bmB}\right)\d\rho'.
\end{align*}
But since the cylinder $\mathcal{I} = \{ \rho = 0 \}$ is a total characteristic of the system \eqref{transport1}--\eqref{transport4}, the coefficients $f^{(p')}(\tau, \tensor{t}{^{\bmA}_{\bmB}}) = \partial_{\rho}^{p'}f \lvert_{\mathcal{I}}$ are explicitly computable, and it follows from the discussion above that the remainder has the regularity
\begin{align*}
    J^p(\partial_{\rho}^pf) = f - \sum_{p'=0}^{p-1}\frac{1}{p'!}f^{(p')}\rho^{p'} \in C^{r,\alpha}(\mathcal{N}_1)
\end{align*}
for $p > r + 4$, $\alpha \leq \frac{1}{2}$, and any $r \in \mathbb{N}$.

\section{Estimates near $\scri^-$}
\label{Section:EstimatesLowerDomain}

In this section we construct estimates on the lower domain $\mathcal{\underline{N}}_\varepsilon$ in terms of initial data on $\scri^-$. (see \Cref{fig:FDOE}). The local existence of the solution with such data on $\scri^-$ (and $\underline{\mathcal{B}}_\varepsilon$) follows from an application of the theorem of Rendall \cite{Ren90} (cf. \cite{Luk12}), which can then be extended to all of $\underline{\mathcal{N}}_\varepsilon \cup \mathcal{N}_1$ using our estimates and a last slice argument; we refer to \S5.4 in \cite{TauVal23} for details.

Given $\rho_\star >0$, we consider the following hypersurfaces:
\begin{align*}
\scri^-_{\rho_\star} &\equiv \scri^- \cap \{ 0 \leq \rho \leq \rho_\star \} , \\
\underline{\mathcal{B}}_\varepsilon &\equiv  \left\{ (\tau, \rho, t^\bmA{}_{\bmB}) \, | \, -1 \leq \tau \leq -1 + \varepsilon, ~ \rho = \frac{2\rho_\star}{1-\tau}, ~ t^\bmA{}_{\bmB} \in \mathrm{SU}(2) \right\}, \\
\mathcal{S}_{-1+\varepsilon} &\equiv  \left\{ (\tau, \rho, t^\bmA{}_{\bmB}) \, | \, \tau = -1 + \varepsilon, ~ 0 \leq \rho \leq \frac{2\rho_\star}{2-\varepsilon}, ~ t^\bmA{}_\bmB \in \mathrm{SU}(2) \right\},
\\
\mathcal{I}_\varepsilon &\equiv \bigg\{ (\tau, \rho, t^\bmA{}_{\bmB}) \, | \, -1 \leq \tau \leq -1 + \varepsilon, ~ \rho = 0, ~ t^\bmA{}_\bmB \in \mathrm{SU}(2) \bigg\}.
\end{align*}

The set $\underline{\mathcal{B}}_\varepsilon$ is a short incoming null hypersurface intersecting $\scri^-$ at $\rho = \rho_\star$. Moreover, let $\underline{\mathcal{N}}_\varepsilon$ denote the spacetime slab bounded by the hypersurfaces $\underline{\mathcal{B}}_\varepsilon$, $\mathscr{I}^-_{\rho_\star}$, $\mathcal{S}_{-1+\varepsilon}$ and $\mathcal{I}_\varepsilon$, as shown in \Cref{fig:FDOE}. As in the previous section, given $m\geq 0$ we let
\[
\mho \equiv \{ (q', p', \alpha) \in \mathbb{N} \times \mathbb{N} \times \mathbb{N}^3 \, : \, q' + p' + |\alpha| \leq m \}.
\]

\subsection{Estimating $\psi_2$}

The component $\psi_2$ is the most troublesome to estimate. Here, in \Cref{Proposition2} we obtain higher-order estimates analogous to those in the previous section for the upper domain $\mathcal{N}_1$. These estimates require a bootstrap assumption on the remaining components $\psi_0$, $\psi_1$ and $\psi$, which we prove in \Cref{Proposition3}.

\begin{proposition}\label{Proposition2}
Let $\varepsilon > 0$. Suppose that for $m, \, q, \, p \in \mathbb{N}$ and $(q', p', \alpha) \in \mathbb{N} \times \mathbb{N} \times \mathbb{N}^3$ satisfying
\[ q' + p' + |\alpha| \leq m \quad \text{and} \quad m + p < q  \]
we have the following bounds on the characteristic initial data for $f\in\{\psi,\,\psi_0,\,\psi_1,\,\psi_2\}$:
\begin{equation}
    \label{initialDataBound}
    \sum_\mho \int_{\scri^-_{\rho_\star}} |D'( \partial_\tau^q \partial^p_\rho f)|^2 \, \mathrm{d} \rho \, \mathrm{d}\mu + \sum_\mho \int_{\underline{\mathcal{B}}_\varepsilon} |D'(\partial_\tau^q \partial^p_\rho f)|^2 \, \mathrm{d} \underline{\mathcal{B}} \leq \Omega_*
\end{equation}
for some $\Omega_* > 0$, where $\d \underline{\mathcal{B}}$ is the measure on $\underline{\mathcal{B}}_\varepsilon$ induced by $\mathrm{dv}$. Additionally, assume the bootstrap bound
    \begin{align}\label{bootstrap}
        \lVert\partial_{\rho}^{p}\partial_\tau^{q}g \rVert^{2}_{H^{m}(\mathcal{\underline{N}}_\varepsilon)}  \leq \Omega_*
    \end{align}
for $g\in\{\psi,\,\psi_0,\,\psi_1  \}$. Then there exists a constant $C>0$ depending on $p$, $m$ and $q$ such that 
    \[
\lVert\partial_{\rho}^{p}\partial_\tau^{q}\psi_{2}\rVert^{2}_{H^{m}(\mathcal{\underline{N}}_\varepsilon)}  \leq C \Omega_* .
    \]
\end{proposition}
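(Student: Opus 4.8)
The plan is to adapt the $\psi_2$-estimate of \Cref{Proposition1} to $\underline{\mathcal{N}}_\varepsilon$, with the roles of the degenerate and non-degenerate boundaries interchanged and with the bootstrap \eqref{bootstrap} replacing the closure that was available on the upper domain. First I rewrite the transport equations \eqref{transport1} and \eqref{transport2} for the coupled pair $(\psi_2,\mathbf{\Psi})$, where $\mathbf{\Psi} = \psi_1 + \tfrac12\psi$ as in \eqref{PSI}, in the form
\[
A_0 = (1+\tau)\partial_\tau\psi_2 - \rho\,\partial_\rho\psi_2 - \bmX_-\mathbf{\Psi} = 0,\qquad B_0 = (1-\tau)\partial_\tau\mathbf{\Psi} + \rho\,\partial_\rho\mathbf{\Psi} - \bmX_+\psi_2 - \mathbf{\Psi} = 0,
\]
so that $\psi_2$ couples only to $\mathbf{\Psi}$. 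Applying the operator $D \equiv D'\partial_\rho^p\partial_\tau^q = \partial_\rho^{p+p'}\partial_\tau^{q+q'}\bmZ^\alpha$ and forming the combined higher-order current $0 = 2\operatorname{Re}\big(\overline{D\psi_2}\,DA_0 + \overline{D\mathbf{\Psi}}\,DB_0\big)$ produces, after reorganising into divergence form and writing $Q = q+q'$, $P = p+p'$, the identity
\[
\binom{\partial_\tau}{\partial_\rho}\cdot\binom{(1+\tau)|D\psi_2|^2 + (1-\tau)|D\mathbf{\Psi}|^2}{-\rho|D\psi_2|^2 + \rho|D\mathbf{\Psi}|^2} + 2(Q-P)|D\psi_2|^2 - 2(Q-P+1)|D\mathbf{\Psi}|^2 = 2\operatorname{Re}\big(\overline{D\psi_2}\,\bmX_- D\mathbf{\Psi} + \overline{D\mathbf{\Psi}}\,\bmX_+ D\psi_2\big) + R,
\]
where $R$ collects the terms arising from commuting $\bmZ^\alpha$ past $\bmX_\pm$, which involve at most $m$ total derivatives of $\psi_2$ and of $\mathbf{\Psi}$.

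Next I integrate this over $\underline{\mathcal{N}}_\varepsilon$ against $\mathrm{dv} = \d\tau\wedge\d\rho\wedge\d\mu$ and apply the Euclidean divergence theorem together with \Cref{FTL}. The structurally important points are: on $\scri^-$ (where $\tau=-1$) the $\psi_2$-part of the boundary flux carries the vanishing factor $1+\tau$ — this is exactly the mechanism by which the estimate is insensitive to the degeneracy at $\mathcal{I}^-$ — while the $\mathbf{\Psi}$-part contributes $2\int_{\scri^-}|D\mathbf{\Psi}|^2\,\d\rho\,\d\mu$, bounded by $\Omega_*$ via \eqref{initialDataBound} for $\psi$ and $\psi_1$; on $\mathcal{I}_\varepsilon$ the flux vanishes because of the factor $\rho$; on $\mathcal{S}_{-1+\varepsilon}$ the flux equals $\int_{\mathcal{S}_{-1+\varepsilon}}\big(\varepsilon|D\psi_2|^2 + (2-\varepsilon)|D\mathbf{\Psi}|^2\big)\,\d\rho\,\d\mu \geq 0$, which stays on the favourable side; on the incoming null hypersurface $\underline{\mathcal{B}}_\varepsilon$ the flux reduces, after decomposing the current vector along the null frame \eqref{FramesMinkowski}, to a definite-sign multiple of $\int_{\underline{\mathcal{B}}_\varepsilon}|D\psi_2|^2\,\d\underline{\mathcal{B}}$, controlled by \eqref{initialDataBound}; and, summing over $\mho$ and integrating over $\mathrm{SU}(2)$, the angular cross-term vanishes by \Cref{FTL}, since $2\operatorname{Re}\big(\overline{D\psi_2}\bmX_- D\mathbf{\Psi} + \overline{D\mathbf{\Psi}}\bmX_+ D\psi_2\big) = 2\operatorname{Re}\,\bmX_-\big(\overline{D\psi_2}\,D\mathbf{\Psi}\big)$ modulo the commutators already in $R$.

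The decisive sign observation is that the hypothesis $m + p < q$ forces $Q - P = q + q' - p - p' \geq q - p - m \geq 1$ for every $(q',p',\alpha)\in\mho$, so the bulk coefficient $2(Q-P)$ of $|D\psi_2|^2$ is uniformly positive and stays on the left; in particular, in contrast to the analogous estimate in \Cref{Proposition1}, no Gr\"onwall argument is needed for $\psi_2$ itself. Conversely, the bulk term $2(Q-P+1)|D\mathbf{\Psi}|^2$ has the opposite sign and must be moved to the right, where, after summing over $\mho$, it is bounded by $2(q+m+1)\|\partial_\rho^p\partial_\tau^q\mathbf{\Psi}\|^2_{H^m(\underline{\mathcal{N}}_\varepsilon)} \lesssim (q+m+1)\Omega_*$ using the bootstrap \eqref{bootstrap} for $\psi$ and $\psi_1$ — this is precisely where the bootstrap enters. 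The terms in $R$ are handled by Cauchy--Schwarz with a small parameter: their $\mathbf{\Psi}$-factors are again absorbed by the bootstrap, while the $\psi_2$-factors (of order at most $m$) are absorbed into the coercive left-hand bulk term using $Q - P \geq 1$, up to a self-referential contribution with arbitrarily small coefficient which is moved back to the left. Summing over $\mho$, discarding the non-negative boundary terms on $\mathcal{S}_{-1+\varepsilon}$ and $\underline{\mathcal{B}}_\varepsilon$, and recalling the definition of the $H^m(\underline{\mathcal{N}}_\varepsilon)$ norm then yields $\|\partial_\rho^p\partial_\tau^q\psi_2\|^2_{H^m(\underline{\mathcal{N}}_\varepsilon)} \leq C\Omega_*$ with $C = C(p,m,q)$.

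I expect the main obstacle to be the borderline angular regularity: the source $\bmX_-\mathbf{\Psi}$ in the $\psi_2$-equation is one angular derivative rougher than what the $H^m$-bootstrap on $\mathbf{\Psi}$ controls, so $\psi_2$ cannot be estimated from its own transport equation in isolation; one is forced to pair it with the $B_0$-equation for $\mathbf{\Psi}$ so that the top-order angular terms cancel in the combined current via Friedrich's lemma, and then to keep careful track of which surviving bulk and boundary terms land on the coercive side — which works precisely because of the index condition $m + p < q$ and the vanishing of the $\psi_2$-flux on $\scri^-$.
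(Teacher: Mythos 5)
Your proposal is correct and follows essentially the same route as the paper: the same combined energy current for the pair $(\psi_2,\mathbf{\Psi})$ built from $A_0$ and $B_0$, the same boundary-flux analysis (degenerate $\psi_2$-flux on $\scri^-$, vanishing flux on $\mathcal{I}_\varepsilon$, sign-definite $\psi_2$-flux on $\underline{\mathcal{B}}_\varepsilon$), and the same use of the index condition $m+p<q$ together with the bootstrap on $\mathbf{\Psi}$. The only cosmetic differences are that the paper closes the estimate by keeping the (negative) $\psi_2$ bulk term on the right and running an integrating-factor argument in $\varepsilon$ rather than reading off the coercivity of $2(Q-P)\geq 2$ directly as you do, and that it applies \Cref{FTL} to annihilate the angular terms exactly --- no commutator remainder $R$ and no Cauchy--Schwarz absorption are needed, since that lemma is stated for operators in the order $\bmZ^\alpha\bmX_\pm$, which is precisely the order in which they arise when $D$ is applied to the equations.
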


\begin{proof}
    As before, we consider the differential operators $D \equiv D^{p,q,\alpha} \equiv \partial_\rho^p \partial_\tau^q \bmZ^\alpha$
    and commute them into equations \eqref{transport1}--\eqref{transport4} to construct identity \eqref{expandedEstimate0}. Integrating \eqref{expandedEstimate0} over  $\mathcal{\underline{N}}_\varepsilon$ yields
    \begin{align*}
        0 &= \int_{\mathcal{\underline{N}}_\varepsilon} \binom{\partial_\tau}{\partial_\rho}\cdot\binom{(1+\tau)\lvert D\psi_{2} \rvert ^2+(1-\tau) \lvert D\psi_{1} \rvert^2+\frac{1}{4}(1-\tau)\lvert D\psi \rvert^2+(1-\tau)\operatorname{Re}\left( \overline{D\psi} D\psi_1 \right)}{-\rho \lvert D\psi_{2} \vert^2+\rho \lvert D\psi_{1}\rvert^2+\frac{1}{4}\rho\lvert D\psi \rvert ^2+\rho\operatorname{Re}\left( \overline{D\psi} D\psi_1 \right)} \, \mathrm{dv}  \\
        &-2(p-q)\int_{\mathcal{\underline{N}}_\varepsilon}\lvert D\psi_{2} \rvert^2 \, \mathrm{dv}  +2(p-q-1)\int_{\mathcal{\underline{N}}_\varepsilon}\lvert D\psi_{1} \rvert ^2\mathrm{dv}  \\
        &+\frac{1}{2}(p-q-1)\int_{\mathcal{\underline{N}}_\varepsilon}\lvert D\psi \rvert^2 \, \mathrm{dv}  +(p-q-1)\int_{\mathcal{\underline{N}}_\varepsilon}2\operatorname{Re}\left( \overline{D\psi} D\psi_1 \right)\mathrm{dv}  \\
        &+\text{angular}(\alpha),
    \end{align*}
    where we have again written the terms with angular dependence collectively. Using the divergence theorem, 
    \begin{align*}
        &\int_{\mathcal{S}_{-1+\varepsilon}} \left( \varepsilon \lvert D\psi_{2} \rvert ^2 +(2-\varepsilon)\lvert D\bm{\Psi} \rvert ^2  \right)\d\rho \, \d\mu-\int_{\scri^{-}_{\rho_*}} 2 \lvert D\bm{\Psi} \rvert ^2 \, \d\rho \, \d\mu -\int_{\mathcal{I}_{\varepsilon}} \rho\left( - \lvert D\psi_{2} \vert^2+ \lvert D\bm{\Psi} \rvert ^2 \right) \d\tau \, \d\mu \\
        &+\int_{\mathcal{\underline{B}}_\varepsilon} \binom{(1+\tau)\lvert D\psi_{2} \rvert ^2+(1-\tau) \lvert D\psi_{1} \rvert^2+\frac{1}{4}(1-\tau)\lvert D\psi \rvert^2+(1-\tau)\operatorname{Re}\left( \overline{D\psi} D\psi_1 \right)}{-\rho \lvert D\psi_{2} \vert^2+\rho \lvert D\psi_{1}\rvert^2+\frac{1}{4}\rho\lvert D\psi \rvert ^2+\rho\operatorname{Re}\left( \overline{D\psi} D\psi_1 \right)}\cdot\nu\binom{-\rho}{1-\tau} \d\mathcal{\underline{B}} \\
        &-2(p-q)\int_{\mathcal{\underline{N}}_\varepsilon}\lvert D\psi_{2} \rvert^2 \, \mathrm{dv}  +2(p-q-1)\int_{\mathcal{\underline{N}}_\varepsilon}\lvert D\psi_{1} \rvert^2 \, \mathrm{dv}  \\
        &+\frac{1}{2}(p-q-1)\int_{\mathcal{\underline{N}}_\varepsilon}\lvert D\psi \rvert^2 \, \mathrm{dv}  +(p-q-1)\int_{\mathcal{\underline{N}}_\varepsilon}2\operatorname{Re}\left( \overline{D\psi} D\psi_1 \right)\mathrm{dv} \\
        &+\text{angular}(\alpha) = 0,
    \end{align*}
 where $\nu \equiv (\rho^2+(1-\tau)^2)^{-\frac{1}{2}}$ is a normalisation factor for the outward pointing unit normal to $\mathcal{\underline{B}}_\varepsilon$ and $\d\mathcal{\underline{B}}$ is the induced measure on $\mathcal{\underline{B}}_\varepsilon$. Observe that the integral over $\mathcal{I}_{\varepsilon}$ vanishes due to the factor of $\rho$ and the integral over $\mathcal{\underline{B}}_\varepsilon$ simplifies to $\int_{\mathcal{\underline{B}}_\varepsilon} -2\nu\rho\lvert D\psi_{2} \rvert ^2\d\mathcal{\underline{B}}$. Hence
    \begin{align}\label{twoUses}
    \begin{split}
        \int_{\mathcal{S}_{-1+\varepsilon}} &\left( \varepsilon \lvert D\psi_{2} \rvert ^2 + (2-\varepsilon)\lvert D\bm{\Psi} \rvert ^2  \right) \d\rho \, \d\mu + \text{angular}(\alpha) \\
        &= \int_{\mathcal{\underline{B}}_\varepsilon} 2\nu\rho\lvert D\psi_{2} \rvert^2 \, \d\mathcal{\underline{B}} + \int_{\scri^{-}_{\rho_*}} 2 \lvert D\bm{\Psi} \rvert^2 \, \d\rho \, \d\mu \\
        &+2(p-q)\int_{\mathcal{\underline{N}}_\varepsilon}\lvert D\psi_{2} \rvert^2 \, \mathrm{dv}  +2(q-p+1)\int_{\mathcal{\underline{N}}_\varepsilon} \lvert D\bm{\Psi} \rvert^2 \, \mathrm{dv} . 
    \end{split}
    \end{align}
We now perform the relabelling
    \begin{align*}
        p \longrightarrow p' + p, \qquad q \longrightarrow q' + q,
    \end{align*}
    so that $D \longrightarrow D'\partial_\rho^{p} \partial_\tau^{q} \equiv \partial_\rho^{p'+p} \partial_\tau^{q'+q}\bmZ^\alpha$ and sum \eqref{twoUses} over $\mho$. The integrals over the angular terms vanish using \Cref{FTL}, and we obtain the estimate
    \begin{align*}
        \varepsilon\sum_{\mho} &\int_{\mathcal{S}_{-1+\varepsilon}} \lvert D'(\partial_\rho^{p} \partial_\tau^{q}\psi_{2}) \rvert^2 \, \d\rho \, \d\mu \leq 2\sum_{\mho}\int_{\mathcal{\underline{B}}_\varepsilon} \nu\rho\lvert D'(\partial_\rho^{p} \partial_\tau^{q}\psi_{2}) \rvert^2 \, \d\mathcal{\underline{B}} \\
        &+2\sum_{\mho}\int_{\scri^{-}_{\rho_*}} \lvert D'(\partial_\rho^{p} \partial_\tau^{q}\bm{\Psi}) \rvert^2 \, \d\rho \, \d\mu \\
        &+2\sum_{\mho}(p'+p-q'-q)\int_{\mathcal{\underline{N}}_\varepsilon}\lvert D'(\partial_\rho^{p} \partial_\tau^{q}\psi_{2}) \rvert^2 \, \mathrm{dv}  \\
        &+2\sum_{\mho}(q'+q-p'-p+1)\int_{\mathcal{\underline{N}}_\varepsilon} \lvert D'(\partial_\rho^{p} \partial_\tau^{q}\bm{\Psi}) \rvert^2 \, \mathrm{dv} ,
    \end{align*}
from where it follows, observing that $  m + p < q$ and using the bound \eqref{initialDataBound},  that 
    \begin{align*}
        \varepsilon\sum_{\mho} \int_{\mathcal{S}_{-1+\varepsilon}}& \lvert D'(\partial_\rho^{p} \partial_\tau^{q}\psi_{2}) \rvert^2 \, \d\rho \, \d\mu \\
        &\leq C^{(0)} \Omega_* + C^{(1)}_{p,m, q}\lVert\partial_{\rho}^{p}\partial_\tau^{q}\psi_{2}\rVert^{2}_{H^{m}(\mathcal{\underline{N}}_\varepsilon)} +2C^{(2)}_{p,m,q}\lVert\partial_{\rho}^{p}\partial_\tau^{q}\bm{\Psi}\rVert^{2}_{H^{m}(\mathcal{\underline{N}}_\varepsilon)}, 
    \end{align*}
    where $C^{(0)}$ is a numerical constant independent of $p$, $m$, $q$, and $\varepsilon$, $C^{(1)}_{p,m, q} \equiv 2( m-q+p) <0$, and $C^{(2)}_{p,m,q} = 2(q-p+1+m) > 2(2m+1) \geq 2$. Using the bootstrap bound (\ref{bootstrap}) then yields the estimate
    \begin{align*}
        \varepsilon\sum_{\mho} \int_{\mathcal{S}_{-1+\varepsilon}}& \lvert D'(\partial_\rho^{p} \partial_\tau^{q}\psi_{2}) \rvert^2 \, \d\rho \, \d\mu \leq C \left( \Omega_* +C^{(1)}_{p,m, q}\lVert\partial_{\rho}^{p}\partial_\tau^{q}\psi_{2}\rVert^{2}_{H^{m}(\mathcal{\underline{N}}_\varepsilon)} \right) ,
    \end{align*}
    where $C$ depends on $p$, $m$ and $q$, or equivalently
    \begin{align*}
        \varepsilon f'(\varepsilon) \leq C \left( \Omega_* +C^{(1)}_{p,m, q}f(\varepsilon) \right) ,
    \end{align*}
    where $f(\varepsilon)\equiv\lVert\partial_{\rho}^{p}\partial_\tau^{q}\psi_{2}\rVert^{2}_{H^{m}(\mathcal{\underline{N}}_\varepsilon)}$, since the hypersurfaces $\mathcal{S}_{-1+t}$ with $ t \in (0,\varepsilon)$ foliate the domain $\mathcal{\underline{N}}_\varepsilon$. Using an integrating factor, this inequality can be rewritten as
    \begin{align*}
        \frac{\d}{\d \varepsilon}(\varepsilon^{-CC^{(1)}_{p,m, q}} f(\varepsilon)) \leq C \Omega_* \varepsilon^{-CC^{(1)}_{p,m,q}-1},
    \end{align*}
    which is integrable near $\varepsilon=0$ provided that $C^{(1)}_{p,m,q}<0$. This is true by the choice $q > m + p$, and upon integration, we obtain
    \begin{align*}
        f(\varepsilon) \leq \frac{\Omega_*}{-C^{(1)}_{p,m,q}} \lesssim_{p,m,q} \Omega_*.
    \end{align*}
    
\end{proof}

\subsection{The bootstrap bound}
To show that the bootstrap bound \eqref{bootstrap} holds, we make use of the identity \eqref{twoUses} derived above. Observe that it is sufficient to prove that the bootstrap bound holds up to some constant (possibly dependent on $p$, $m$ and $q$).
\begin{proposition}\label{Proposition3}
    Let $0 < \varepsilon < 1$. Suppose that for $m,p,q \in \mathbb{N}$ and $(p',q',\alpha) \in \mathbb{N} \times \mathbb{N} \times \mathbb{N}^3$ satisfying
    \begin{align*}
        p'+q'+ \lvert \alpha \rvert \leq m \quad \text{and} \quad m + p < q
    \end{align*}
    we have the bound \eqref{initialDataBound} on the characteristic initial data for the components $f \in \{ \psi, \psi_0, \psi_1, \psi_2\}$. Then there exists a constant $C>0$ depending on $p$, $m$, and $q$, but independent of $\varepsilon$, such that
    \begin{align*}
    \sum_{k \in \{0,1\}} \lVert\partial_{\rho}^{p}\partial_\tau^{q}\psi_{k}\rVert^{2}_{H^{m}(\mathcal{\underline{N}}_\varepsilon)}+ \lVert\partial_{\rho}^{p}\partial_\tau^{q}\psi\rVert^{2}_{H^{m}(\mathcal{\underline{N}}_\varepsilon)} \leq C\Omega_*.
    \end{align*}
\end{proposition}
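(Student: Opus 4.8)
The strategy is to run the same energy-current argument as in \Cref{Proposition2}, but now extracting $\psi_0$, $\psi_1$ and $\psi$ rather than $\psi_2$. The point that makes this close \emph{without} circularly invoking control of $\psi_2$ is that, in the relevant currents, the $\psi_2$-contributions enter only (i) through boundary integrals over $\scri^-_{\rho_\star}$ and $\underline{\mathcal{B}}_\varepsilon$, which are bounded by the data via \eqref{initialDataBound}, and (ii) through a bulk integral whose coefficient is negative under the hypothesis $m+p<q$, so that it may simply be discarded. As in the proof of \Cref{Proposition1}, it is convenient to work with the combinations $\bm{\Psi} = \psi_1 + \tfrac12\psi$ (see \eqref{PSI}) and $\bm{\Phi} = \psi_1 - \tfrac12\psi$, from which $\psi_1$ and $\psi$ are recovered at the end.

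\emph{Step 1: the estimate for $\bm{\Psi}$.} I start from the identity \eqref{twoUses}, which was obtained by integrating \eqref{expandedEstimate0} over $\underline{\mathcal{N}}_\varepsilon$ and which holds verbatim with $\varepsilon$ replaced by any $\varepsilon' \in (0,\varepsilon]$ (the past boundary being the fixed set $\scri^-_{\rho_\star}$ and the incoming null boundary $\underline{\mathcal{B}}_{\varepsilon'} \subset \underline{\mathcal{B}}_\varepsilon$). Since $m+p<q$ forces $p-q<0$, the bulk term $2(p-q)\int_{\underline{\mathcal{N}}_{\varepsilon'}}|D\psi_2|^2$ on the right may be dropped; the remaining $\psi_2$-term $\int_{\underline{\mathcal{B}}_{\varepsilon'}}2\nu\rho|D\psi_2|^2$ is bounded by $\Omega_*$ using \eqref{initialDataBound} and $\nu\rho\leq 1$ on $\underline{\mathcal{B}}_\varepsilon$, while $\int_{\scri^-_{\rho_\star}}2|D\bm{\Psi}|^2 \lesssim \Omega_*$ since $|D\bm{\Psi}|^2 \lesssim |D\psi_1|^2 + |D\psi|^2$. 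Relabelling $p \mapsto p'+p$, $q \mapsto q'+q$, summing over $\mho$, using \Cref{FTL} to annihilate the angular terms, dropping the non-negative $\psi_2$-slice term and using $2-\varepsilon'\geq 1$, one is left with
\[ h_{\bm{\Psi}}(\varepsilon') \leq C\,\Omega_* + C_{p,m,q}\int_0^{\varepsilon'} h_{\bm{\Psi}}(s)\,\d s, \qquad h_{\bm{\Psi}}(\varepsilon') \equiv \sum_\mho \int_{\mathcal{S}_{-1+\varepsilon'}} |D'(\partial_\rho^p \partial_\tau^q \bm{\Psi})|^2\,\d\rho\,\d\mu, \]
since the slices $\mathcal{S}_{-1+s}$, $s\in(0,\varepsilon)$, foliate $\underline{\mathcal{N}}_\varepsilon$. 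Grönwall's inequality gives $h_{\bm{\Psi}}(\varepsilon') \lesssim_{p,m,q}\Omega_*$ uniformly in $\varepsilon'<1$, and integration in $\varepsilon'$ yields $\|\partial_\rho^p\partial_\tau^q\bm{\Psi}\|^2_{H^m(\underline{\mathcal{N}}_\varepsilon)} \lesssim_{p,m,q}\Omega_*$.

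\emph{Step 2: the estimates for $\psi_0$ and $\bm{\Phi}$, and conclusion.} I repeat the construction with the current \eqref{current2}: commuting $D = \partial_\rho^p\partial_\tau^q\bmZ^\alpha$ into \eqref{transport3}--\eqref{transport4} produces the expanded identity \eqref{expandedEstimate1}, which I integrate over $\underline{\mathcal{N}}_\varepsilon$ and process with the divergence theorem exactly as in \eqref{intermediateEstimate1}. The term on $\mathcal{I}_{\varepsilon'}$ vanishes (factor $\rho$); the slice term on $\mathcal{S}_{-1+\varepsilon'}$ is $\int (\varepsilon'|D\bm{\Phi}|^2 + (2-\varepsilon')|D\psi_0|^2)$; the past term is $-2\int_{\scri^-_{\rho_\star}}|D\psi_0|^2$; and the $\underline{\mathcal{B}}_{\varepsilon'}$-term collapses, after the $\bm{\Phi}$-grouping, to $-2\int_{\underline{\mathcal{B}}_{\varepsilon'}}\nu\rho|D\bm{\Phi}|^2$. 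The bulk terms combine into $2(q-p+1)\int|D\bm{\Phi}|^2 + 2(p-q)\int|D\psi_0|^2$; moving the (non-positive) $\psi_0$-bulk term to the right, relabelling, summing over $\mho$, annihilating the angular terms by \Cref{FTL}, and bounding the $\scri^-_{\rho_\star}$- and $\underline{\mathcal{B}}_{\varepsilon'}$-integrals by $\Omega_*$ via \eqref{initialDataBound}, one obtains after dropping the non-negative $\bm{\Phi}$-contributions on the left an inequality $h_{\psi_0}(\varepsilon') \leq C\Omega_* + C_{p,m,q}\int_0^{\varepsilon'}h_{\psi_0}(s)\,\d s$. Grönwall then gives $\|\partial_\rho^p\partial_\tau^q\psi_0\|^2_{H^m(\underline{\mathcal{N}}_\varepsilon)} \lesssim_{p,m,q}\Omega_*$; feeding this back into the summed identity, whose bulk $\bm{\Phi}$-term carries the coefficient $2(q+q'-p-p'+1)\geq 2$ (because $q\geq m+p+1$ and $p'\leq m$ on $\mho$), gives $\|\partial_\rho^p\partial_\tau^q\bm{\Phi}\|^2_{H^m(\underline{\mathcal{N}}_\varepsilon)} \lesssim_{p,m,q}\Omega_*$. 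Since $\psi_1 = \tfrac12(\bm{\Psi}+\bm{\Phi})$ and $\psi = \bm{\Psi}-\bm{\Phi}$, the bounds from Steps 1--2 control $\|\partial_\rho^p\partial_\tau^q\psi_1\|^2_{H^m(\underline{\mathcal{N}}_\varepsilon)}$ and $\|\partial_\rho^p\partial_\tau^q\psi\|^2_{H^m(\underline{\mathcal{N}}_\varepsilon)}$, which together with the $\psi_0$-bound is the asserted estimate; in particular \eqref{bootstrap} holds.

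\emph{Main obstacle.} The delicate part is not any single computation but tracking signs so that every Grönwall argument is \emph{non-degenerate} as $\tau\to-1$ and so that the chain genuinely decouples from $\psi_2$. This is exactly where $m+p<q$ is used: it makes the $\psi_2$-bulk term in the $\bm{\Psi}$-estimate a discardable term, makes the transferred $\psi_0$-bulk term good for all $(p',q',\alpha)\in\mho$ (since $q-p-p'\geq q-p-m\geq 1$), and leaves the coefficient $q-p-p'+1\geq 2$ needed to recover $\bm{\Phi}$ with a constant independent of $\varepsilon$. One must also check that the non-degenerate coefficient $2-\varepsilon'$ on the final slice $\mathcal{S}_{-1+\varepsilon'}$ stays $\geq 1$ for $\varepsilon'\in(0,1)$, and that the incoming-null boundary contribution in the $0$--$1$ current collapses to a single term of definite sign, just as the $\psi_2$-term did in \eqref{twoUses}.
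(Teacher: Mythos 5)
Your proposal is correct and follows essentially the same route as the paper: the $\bm{\Psi}$ estimate from the summed identity \eqref{twoUses} with the negative $\psi_2$-bulk term discarded and the $\underline{\mathcal{B}}_\varepsilon$-contribution of $\psi_2$ absorbed into the data bound, then the $\psi_0$ estimate from the current \eqref{current2}, then the recovery of $\bm{\Phi}$ by moving its bulk term (with coefficient bounded below by a positive constant thanks to $q>m+p$) to the left, and finally $\psi_1=\tfrac12(\bm{\Psi}+\bm{\Phi})$, $\psi=\bm{\Psi}-\bm{\Phi}$. Your explicit remark that the argument never circularly requires control of $\psi_2$ is a point the paper leaves implicit but is exactly how its proof closes.
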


\begin{proof}
    Performing the same relabelling $p \longrightarrow p'+p$ and $q \longrightarrow q'+q$ as in \Cref{Proposition2} and summing both sides of (\ref{twoUses}) over $\mho$ gives
    \begin{align*}
        \sum_{\mho}\int_{\mathcal{S}_{-1+\varepsilon}} \lvert D'(\partial_\rho^{p} \partial_\tau^{q}\bm{\Psi}) \rvert^2 \,\d\rho \, \d\mu &\leq C^{(0)}\Omega_* +C^{(1)}_{p,m,q}\lVert\partial_{\rho}^{p}\partial_\tau^{q}\psi_{2}\rVert^{2}_{H^{m}(\mathcal{\underline{N}}_\varepsilon)} \\
        &+C^{(2)}_{p,m,q}\sum_{\mho}\int_{\mathcal{\underline{N}}_\varepsilon} \lvert D'(\partial_\rho^{p} \partial_\tau^{q}\bm{\Psi}) \rvert^2 \,\mathrm{dv} ,
    \end{align*}
    where $C^{(0)}$ is a numerical constant, $C^{(1)}_{p,m,q} = 2( m-q+p )<0$, and $C^{(2)}_{p,m,q} = 2(q-p+1 +m)$. Writing
    \begin{align*}
        g(t) \equiv \sum_{\mho}\int_{\mathcal{S}_{-1+t}} \lvert D'(\partial_\rho^{p} \partial_\tau^{q}\bm{\Psi}) \rvert^2 \, \d\rho \, \d\mu,
    \end{align*}
    the inequality implies, dropping the negative term on the right-hand side, that 
    \begin{align*}
        g(t) \leq C^{(0)} \Omega_* +C^{(2)}_{p,m,q}\int_{0}^{\varepsilon} g(t) \, \d t.
    \end{align*}
    Applying Gr\"onwall's lemma yields, since $ 0 \leq t \leq \varepsilon \leq 1$,
    \[ g(t) \leq C^{(0)} \Omega_* e^{C^{(2)}_{p,m,q}t} \lesssim_{p,m,q} \Omega_*, \]
    and integrating gives
    \begin{align} \label{towardsRecovery}
        \lVert\partial_{\rho}^{p}\partial_\tau^{q}\bm{\Psi}\rVert^{2}_{H^{m}(\mathcal{\underline{N}}_\varepsilon)} = \int_{0}^{\varepsilon} g(t) \, \d t \lesssim_{p,m,q} \Omega_* \varepsilon \lesssim_{p,m,q} \Omega_* .
    \end{align}

We now use this to show that $\| \partial_\rho^p \partial_\tau^q \psi_0 \|^2_{H^m(\underline{\mathcal{N}}_\varepsilon)}$ is also bounded by $\Omega_*$. This follows from a similar calculation using the current (\ref{current2}) involving $\psi_0$. Following similar steps as for the other components, one arrives at
    \begin{align*}
        &\int_{\mathcal{S}_{-1+\varepsilon}} \left( \varepsilon \lvert D\bm{\Phi} \rvert^2 +(2-\varepsilon)\lvert D\psi_0\rvert^2 \right) \d\rho \, \d\mu +\text{angular}(\alpha) \\
        &= \int_{\scri^{-}_{\rho_*}} 2\lvert D\psi_0\rvert^2 \, \d\rho \, \d\mu + \int_{\mathcal{\underline{B}}_\varepsilon} 2\nu\rho \lvert D\bm{\Phi} \rvert^2 \, \d\mathcal{\underline{B}} \\
        &+2(p-q-1)\int_{\mathcal{\underline{N}}_\varepsilon}\lvert D\bm{\Phi} \rvert^2 \, \mathrm{dv} +2(q-p)\int_{\mathcal{\underline{N}}_\varepsilon}\lvert D\psi_0\rvert^2 \, \mathrm{dv} ,
    \end{align*}
    where $\nu \equiv (\rho^2 +(1-\tau)^2)^{-\frac{1}{2}}$.  As with the other components, we shift $p \longrightarrow p' + p$, $q \longrightarrow q' + q$ and sum over $\mho$ to obtain the estimate
    \begin{align}\label{recoveryForComponents}
        \begin{split}
        (2-\varepsilon) & \sum_{\mho}\int_{\mathcal{S}_{-1+\varepsilon}} \lvert D'(\partial_\rho^{p} \partial_\tau^{q}\psi_0)\rvert^2 \, \d\rho \, \d\mu \\
        &\leq \sum_{\mho}\int_{\scri^{-}_{\rho_*}} 2\lvert D'(\partial_\rho^{p} \partial_\tau^{q}\psi_0)\rvert^2 \, \d\rho \, \d\mu + \sum_{\mho}\int_{\mathcal{\underline{B}}_\varepsilon} 2\nu\rho \lvert D'(\partial_\rho^{p} \partial_\tau^{q}\bm{\Phi}) \rvert^2 \, \d\mathcal{\underline{B}} \\
        &-C^{(2)}_{p,m, q}\sum_{\mho}\int_{\mathcal{\underline{N}}_\varepsilon}\lvert D'(\partial_\rho^{p} \partial_\tau^{q}\bm{\Phi}) \rvert^2 \, \mathrm{dv} -C^{(1)}_{p,m, q}\sum_{\mho}\int_{\mathcal{\underline{N}}_\varepsilon}\lvert D'(\partial_\rho^{p} \partial_\tau^{q}\psi_0)\rvert^2 \, \mathrm{dv} ,
    \end{split}
    \end{align}
where the constants $C^{(1)}_{p,m, q} < 0$ and $C^{(2)}_{p,m,q} > 1$ take the same form as in \Cref{Proposition2}. Hence, using the bound \eqref{initialDataBound} on the characteristic data and dropping the negative term on the right-hand side, we obtain
\begin{align*}  &\int_{\mathcal{S}_{-1+\varepsilon}} \sum_{\mho} \lvert D'(\partial_\rho^{p} \partial_\tau^{q}\psi_0)\rvert^2 \, \d\rho \, \d\mu \lesssim C^{(0)} \Omega_* + |C^{(1)}_{p,m,q}| \int_{0}^{\varepsilon}\left(\int_{\mathcal{S}_{-1+t}} \sum_{\mho} \lvert D'(\partial_\rho^{p} \partial_\tau^{q}\psi_0)\rvert^2 \, \d\rho \, \d\mu\right) \! \d t,
\end{align*}
where $C^{(0)}$ is a numerical constant independent of $p$, $q$, $m$, or $\varepsilon$. Gr\"onwall's lemma and integration in $t$ then gives
    \begin{equation}
    \label{psi_0_spacetime_bound_bootstrap}
        \lVert \partial_{\rho}^p\partial_{\tau}^q\psi_0 \rVert ^2_{H^{m}(\mathcal{\underline{N}}_\varepsilon)} \lesssim_{p,m,q} \Omega_*.
    \end{equation}
Returning to \eqref{recoveryForComponents}, we are now in a position to also estimate $\mathbf{\Phi}$. Moving the bulk $\mathbf{\Phi}$ term to the left-hand side and using \eqref{initialDataBound} and \eqref{psi_0_spacetime_bound_bootstrap}, we have
\begin{align*}
    \lVert \partial_{\rho}^p\partial_{\tau}^q\bm{\Phi} \rVert ^2_{H^{m}(\mathcal{\underline{N}}_\varepsilon)}
        &\leq \sum_{\mho}\int_{\scri^{-}_{\rho_*}} 2\lvert D'(\partial_\rho^{p} \partial_\tau^{q}\psi_0)\rvert^2 \, \d\rho \, \d\mu + \sum_{\mho}\int_{\mathcal{\underline{B}}_\varepsilon} 2\nu\rho \lvert D'(\partial_\rho^{p} \partial_\tau^{q}\bm{\Phi}) \rvert^2 \, \d\mathcal{\underline{B}} \\
        &+|C^{(1)}_{p,m,q}| \lVert \partial_{\rho}^p\partial_{\tau}^q\psi_0 \rVert ^2_{H^{m}(\mathcal{\underline{N}}_\varepsilon)} \\
        & \lesssim_{p,m,q} \Omega_*.
\end{align*}
Recalling that $\psi = \mathbf{\Psi} - \mathbf{\Phi}$ and $\psi_1 = \frac{1}{2} (\mathbf{\Psi} + \mathbf{\Phi})$, together with \eqref{psi_0_spacetime_bound_bootstrap} and \eqref{towardsRecovery} this yields the claimed bootstrap estimate.
\end{proof}

\subsection{Asymptotic expansions near $\scri^-$}
\label{Section:expansions_near_past_infinity}

The estimates obtained in \Cref{Section:EstimatesLowerDomain} control the solution to the wave equation in the domain $\underline{\mathcal{N}}_\varepsilon$, and in particular permit the construction of expansions analogous to those outlined in \Cref{Section3.3}. Near $\scri^+$, we saw that better regularity was obtained for higher $\rho$-derivatives, while near $\scri^-$, on the other hand, better regularity is obtained for higher $\tau$-derivatives. Indeed, setting $p=0$ under the assumptions of \Cref{Proposition2,Proposition3}, one observes that for $q > m + 3$, $k \in \{ 0, 1, 2\}$,
\begin{align*}
\partial_{\tau}^q\psi, \ \partial_{\tau}^q\psi_k \in H^{m+3}(\mathcal{\underline{N}}_\varepsilon) \hookrightarrow C^{m,\alpha}(\underline{\mathcal{N}}_\varepsilon)
\end{align*}
for $\alpha \in (0,\frac{1}{2}]$. Integrating $q$ times in $\tau$, each $f\in \{ \psi, \psi_0, \psi_1, \psi_2 \}$ then admits the expansion
\begin{align*}
    f = \sum_{q'=0}^{q-1}\frac{1}{q'!}(\partial_{\tau}^{q'}f)\rvert_{\scri^-}(\tau+1)^{q'} + I^{q}(\partial_{\tau}^{q}f),
\end{align*}
where $I^{q}(\partial_{\tau}^{q}f)\in C^{m,\alpha}(\mathcal{\underline{N}}_\varepsilon)$ and $I$ is the operator defined by
\begin{align*}
    f \longmapsto I(f) = \int_{-1}^{\tau}f(\tau',\rho,t\indices{^{\bmA}_{\bmB}}) \, \d\tau'.
\end{align*}

\section{Characteristic initial data}\label{Section:AsymptoticCharactreistic}

In this subsection we quickly review the construction of the initial data for the components $\psi$, $\psi_0$, $\psi_1$ and $\psi_2$ on $\mathscr{I}^-\cup \underline{\mathcal{B}}_\varepsilon$ from a reduced set of data. Recall that
\begin{align*} 
    \psi & = \sqrt{2} \partial_\tau \phi, \qquad \qquad \qquad \quad~~ \psi_0 = \frac{1}{\sqrt{2}}\bmX_- \phi, \\
    \psi_1 &= - \frac{1}{\sqrt{2}} (\tau \partial_\tau \phi + \rho \partial_\rho \phi), \qquad \psi_2 = -\frac{1}{\sqrt{2}} \bmX_+ \phi.
\end{align*}
Given the past radiation field $\phi|_{\scri^-}$, $\psi_0|_{\scri^-}$ and $\psi_2|_{\scri^-}$ can be determined directly as $\bmX_\pm$ are tangent to $\scri^-$. To compute $\psi|_{\scri^-}$ and $\psi_1|_{\scri^-}$, one needs to compute the time derivative $\partial_\tau \phi|_{\scri^-}$. Restricting the wave equation \eqref{spatInfWaveEqn} to $\scri^-$, one obtains
\begin{align*}
    -2\rho\partial_\rho \partial_\tau\phi|_{\scri^-} +2\partial_\tau \phi|_{\scri^-} -\rho^2\partial^2_\rho \phi |_{\scri^-}-\slashed{\Delta}_{\mathbb{S}^2}\phi|_{\scri^-} = 0,
\end{align*}
or equivalently
\begin{align*}
    \rho\partial_\rho \psi|_{\scri^-} -\psi|_{\scri^-} = -\frac{1}{2}(\rho^2\partial^2_\rho \phi |_{\scri^-}+\slashed{\Delta}_{\mathbb{S}^2}\phi|_{\scri^-}).
\end{align*}
The right-hand side involves only tangential derivatives to $\scri^-$, so is directly computable from $\phi|_{\scri^-}$. This is therefore a transport equation along the generators of $\scri^-$ for the component $\psi|_{\scri^-}$, and can be solved for $\psi|_{\scri^-}$ given the initial value $\psi_\star$ at a cut $\mathcal{C}_\star$. Let us choose $\mathcal{C}_\star = \{ \rho = \rho_\star \} \cap \mathcal{\scri^-} = \scri^- \cap \underline{\mathcal{B}}_\varepsilon$. The data on the incoming null hypersurface $\underline{\mathcal{B}}_\varepsilon$ is constructed similarly by considering intrinsic transport equations along $\bme_{\bmzero\bmzero'}$. We therefore have the following lemma.

\begin{lemma}
\label{reducedInitialData}
The initial data on $\scri^- \cup \underline{\mathcal{B}}_\varepsilon$ for the components $\psi$, $\psi_0$, $\psi_1$ and $\psi_2$ can be computed given the radiation fields $\phi|_{\scri^-}$ and $\phi|_{\underline{\mathcal{B}}_\varepsilon}$, and $\psi_{\star} \equiv \sqrt{2} \partial_\tau \phi|_{\mathcal{C}_\star}$ at the cut $\mathcal{C}_\star = \scri^- \cap \underline{\mathcal{B}}_\varepsilon$. We call the triple $(\phi|_{\scri^-}, \phi|_{\underline{\mathcal{B}}_\varepsilon}, \psi_\star)$ the \emph{reduced set of data}.
\end{lemma}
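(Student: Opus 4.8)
The plan is to construct the four components $\psi,\psi_0,\psi_1,\psi_2$ separately on the two characteristic pieces $\scri^-_{\rho_\star}$ and $\underline{\mathcal{B}}_\varepsilon$, exploiting on each surface the fact that two of the four are purely tangential derivatives of the radiation field (hence immediate), while the other two are fixed once the single transverse derivative $\partial_\tau\phi$ is known, and that $\partial_\tau\phi$ itself obeys an intrinsic transport equation along the null generators. I would begin by recalling the relations $\psi=\sqrt2\,\partial_\tau\phi$, $\psi_0=\tfrac1{\sqrt2}\bmX_-\phi$, $\psi_2=-\tfrac1{\sqrt2}\bmX_+\phi$, $\psi_1=-\tfrac1{\sqrt2}(\tau\partial_\tau\phi+\rho\partial_\rho\phi)$, and noting that $\bmX_+,\bmX_-$ are angular and hence tangent to both $\scri^-$ and $\underline{\mathcal{B}}_\varepsilon$; thus $\psi_0$ and $\psi_2$ are computed directly from $\phi|_{\scri^-}$, respectively $\phi|_{\underline{\mathcal{B}}_\varepsilon}$, with no further input.

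On $\scri^-$ I would restrict the conformal wave equation \eqref{spatInfWaveEqn} to $\tau=-1$. The factor $(1-\tau^2)$ kills the $\partial_\tau^2\phi$ term, leaving the relation $\rho\partial_\rho\psi|_{\scri^-}-\psi|_{\scri^-}=-\tfrac12\bigl(\rho^2\partial_\rho^2\phi+\slashed{\Delta}_{\mathbb{S}^2}\phi\bigr)\big|_{\scri^-}$ recorded in the discussion above, whose right-hand side involves only derivatives tangential to $\scri^-$. This is a linear ordinary differential equation along each generator $\{t\indices{^{\bmA}_{\bmB}}=\mathrm{const}\}$, parametrised by $\rho$, and is regular for $\rho>0$; prescribing $\psi_\star$ at the cut $\mathcal{C}_\star=\{\rho=\rho_\star\}$ determines $\psi|_{\scri^-}$ uniquely on $\scri^-_{\rho_\star}$, and hence, since $\rho\partial_\rho\phi$ is again tangential, also $\psi_1|_{\scri^-}=\tfrac12\psi|_{\scri^-}-\tfrac1{\sqrt2}\rho\partial_\rho\phi|_{\scri^-}$.

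On $\underline{\mathcal{B}}_\varepsilon=\{\rho(1-\tau)=2\rho_\star\}$ I would first verify $\bme_{\bmzero\bmzero'}\bigl(\rho(1-\tau)\bigr)=0$, so that $\bme_{\bmzero\bmzero'}=\tfrac1{\sqrt2}\bigl((1-\tau)\partial_\tau+\rho\partial_\rho\bigr)$ is the intrinsic null generator and $\{\bme_{\bmzero\bmzero'},\bmX_+,\bmX_-,\bmZ_3\}$ spans the tangent space. Restricting \eqref{transport2} to $\underline{\mathcal{B}}_\varepsilon$ and using $(1-\tau)\partial_\tau+\rho\partial_\rho=\sqrt2\,\bme_{\bmzero\bmzero'}$, it collapses to the intrinsic transport equation $\sqrt2\,\bme_{\bmzero\bmzero'}\mathbf{\Psi}-\mathbf{\Psi}=\bmX_+\psi_2$ for $\mathbf{\Psi}=\psi_1+\tfrac12\psi$ (see \eqref{PSI}) along the generators, the source $\bmX_+\psi_2$ being tangentially computable from $\phi|_{\underline{\mathcal{B}}_\varepsilon}$. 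Seeding with $\mathbf{\Psi}|_{\mathcal{C}_\star}=(\psi_1+\tfrac12\psi)|_{\mathcal{C}_\star}$, already produced by the $\scri^-$ construction, gives $\mathbf{\Psi}$ on all of $\underline{\mathcal{B}}_\varepsilon$; since $1-\tau\geq 2-\varepsilon>0$ there, one then recovers $\psi|_{\underline{\mathcal{B}}_\varepsilon}=(1-\tau)^{-1}(\mathbf{\Psi}+\bme_{\bmzero\bmzero'}\phi)$ and $\psi_1|_{\underline{\mathcal{B}}_\varepsilon}=-\tfrac12\bigl(\tau\psi+\bme_{\bmzero\bmzero'}\phi-\mathbf{\Psi}\bigr)$ algebraically from $\mathbf{\Psi}$ and the tangential quantity $\bme_{\bmzero\bmzero'}\phi$. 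Consistency at $\mathcal{C}_\star$ is automatic since the $\underline{\mathcal{B}}_\varepsilon$-transport is seeded with the $\scri^-$-values there, and $\psi_\star$ is visibly the only free constant of integration.

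The genuinely delicate points, which I would flag as the main obstacle, are (i) identifying the correct intrinsic transport equations, i.e.\ recognising that the conformal wave equation on $\scri^-$ and the component \eqref{transport2} of the first-order system on $\underline{\mathcal{B}}_\varepsilon$ each degenerate to ordinary differential equations along the respective null generators, with sources built solely from tangential derivatives of the radiation field; and (ii) the behaviour of the $\scri^-$ transport equation at the critical set $\mathcal{I}^-=\{\rho=0\}$, where the coefficient $\rho$ of $\partial_\rho\psi$ vanishes, so $\rho=0$ is a singular point of the ODE. There one checks that the inhomogeneity is compatible: at $\rho=0$ the equation reduces to an algebraic relation fixing $\psi|_{\mathcal{I}^-}$ in terms of $\slashed{\Delta}_{\mathbb{S}^2}\phi|_{\mathcal{I}^-}$, and the solution built from the seed at $\mathcal{C}_\star$ extends continuously to this value, so no extra data on $\mathcal{I}^-$ is required. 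Everything else is the linear algebra converting null-frame components into combinations of $\partial_\tau\phi$, $\partial_\rho\phi$ and $\bmX_\pm\phi$.
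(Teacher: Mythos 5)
Your proposal is correct and follows essentially the same route as the paper: $\psi_0$, $\psi_2$ are read off tangentially, $\psi|_{\scri^-}$ is obtained from the transport equation $\rho\partial_\rho\psi - \psi = -\tfrac12(\rho^2\partial_\rho^2\phi + \slashed{\Delta}_{\mathbb{S}^2}\phi)$ seeded by $\psi_\star$ at $\mathcal{C}_\star$, and the data on $\underline{\mathcal{B}}_\varepsilon$ comes from an intrinsic transport equation along $\bme_{\bmzero\bmzero'}$. Your explicit reduction of \eqref{transport2} to $\sqrt2\,\bme_{\bmzero\bmzero'}\mathbf{\Psi}-\mathbf{\Psi}=\bmX_+\psi_2$ and the algebraic recovery of $\psi$, $\psi_1$ from $\mathbf{\Psi}$ and $\bme_{\bmzero\bmzero'}\phi$ correctly instantiate the step the paper only sketches with ``constructed similarly''.
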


For later use, we also observe that all higher order derivatives of $\phi$ on $\mathscr{I}^-$ can be computed from the reduced set of data, provided the reduced set is supplemented with the values of the $\tau$-derivatives of $\phi$ at the cut $\mathcal{C}_\star$. Indeed, from the wave equation \eqref{spatInfWaveEqn} it follows after applying $(q-1)$ $\tau$-derivatives and evaluating at $\tau=-1$ that
\begin{equation}\label{radiationFieldRecovery}
    \rho \partial_\rho \partial^q_\tau \phi|_{\mathscr{I}^-}- q \partial_\tau^{q}\phi|_{\mathscr{I}^-}=S_{q-1}
\end{equation}
where
\[
S_{q-1}\equiv -\frac{1}{2}\rho^2 \partial^2_\rho \partial^{q-1}_\tau\phi|_{\mathscr{I}^-} -\frac{1}{2}q(q-1)\partial^{q-1}_\tau \phi|_{\mathscr{I}^-} -(q-1)\rho \partial_\rho \partial^{q-1}_\tau\phi|_{\mathscr{I}^-}-\frac{1}{2}\slashed{\Delta}_{\mathbb{S}^2}\partial^{q-1}_\tau \phi|_{\mathscr{I}^-}.
\]
Thus the derivative $\partial^q_\tau \phi|_{\mathscr{I}^-}$ can be obtained by solving a transport equation along $\mathscr{I}^-$ if the restriction $\partial_\tau^q \phi|_{\mathcal{C}_\star}$ at the cut, as well as the lower order derivatives $\partial^j_\tau \phi|_{\mathscr{I}^-}$, $j=0,\, 1,\ldots, q-1$, are known. Writing, for simplicity, $\varphi^{(q)}\equiv \partial^q_\tau \phi|_{\mathscr{I}^-}$, the solution to this equation is given by
\[
\varphi^{(q)}(\rho, t\indices{^{\bmA}_{\bmB}}) = \left(\frac{\rho}{\rho_\star} \right)^q \varphi^{(q)}(\rho_\star, t\indices{^{\bmA}_{\bmB}}) +\rho^q \int_{\rho_\star}^\rho \frac{1}{(\rho')^{q+1}}S_{q-1}(\rho', t\indices{^{\bmA}_{\bmB}}) \, \mathrm{d}\rho',
\]
where we note, heuristically, that the order in $\rho$ is inherited by $\varphi^{(q)}$ from $\varphi^{(q-1)}$, and hence from $\phi|_{\scri^-}$.

\section{Controlling the solution from $\scri^-$ to $\scri^+$}
\label{Section:Stitching}

We now turn to combining the estimates in the lower and upper domains $\underline{\mathcal{N}}_\varepsilon$ and $\mathcal{N}_1$ in order to control the solution near $\scri^+$ in terms of asymptotic characteristic data on $\scri^-$. Below we use the subscript $-$ to indicate that a particular quantity is associated to the lower domain, and the subscript $+$ for quantities associated to the upper domain. We denote by $f$ any element of $\{ \psi, \psi_0, \psi_1, \psi_2 \}$. We have obtained the following:

\begin{enumerate}[(i)]
    \item Control of the solution in H\"older spaces on $\underline{\mathcal{N}}_\varepsilon$ in terms of the characteristic data in Sobolev spaces on $\scri^-$ (\Cref{Proposition2,Proposition3}, \Cref{Section:expansions_near_past_infinity}). By the continuity of the solution obtained, we therefore have quantitative control on the spacelike hypersurface $\mathcal{S}_{-1+\varepsilon}$ by restriction.
    \item Control of H\"older norms, and in particular asymptotic expansions, of the solution up to $\scri^+$ in terms of Sobolev norms of the initial data on the spacelike hypersurface $\mathcal{S}_{-1+\varepsilon}$ (\Cref{Proposition1}, \Cref{Section3.3}).
\end{enumerate}
We point out that the obtained control of the solution is quantitative, however it is certainly not sharp\footnote{By \emph{sharp} here we mean no loss of derivatives in evolution from $\scri^-$ to $\scri^+$.}, not least because Sobolev embeddings necessarily lose a small amount of regularity. More severely, in order to make contact across $\mathcal{S}_{-1+\varepsilon}$ our method requires an overshoot of regularity at $\scri^-$.

\subsection{From $\scri^-$ to $\mathcal{S}_{-1+\varepsilon}$}
\label{sec:from_past_to_Cauchy}

In order to translate the requirements of \Cref{Proposition2,Proposition3} in terms of intrinsically defined Sobolev spaces on $\scri^-$ and $\underline{\mathcal{B}}_\varepsilon$, we must treat the two null surfaces separately. The requirement on past null infinity is immediate since $\bmpartial_\rho$ is tangent to $\scri^-$ and $\bmpartial_\tau$ is uniformly transverse to $\scri^-$; setting $p_- = 0$ and $m_- + 3 < q_-$ as discussed above, we deduce that the data $f \in \{ \psi, \psi_0, \psi_1, \psi_2 \}$ on $\scri^-_{\rho_\star}$ having the regularity
\begin{align*}
    \left(  f, \partial_\tau f, \dots ,\partial^{q_-+m_-+3}_\tau f \right) \in H^{m_- +q_- +3}(\scri^-_{\rho_\star}) \times H^{m_- +q_- +2}(\scri^-_{\rho_\star}) \times \dotsc \times L^2(\scri^-_{\rho_\star})
\end{align*}
is sufficient to satisfy the conditions of \Cref{Proposition2,Proposition3} on $\scri^-$. For $\underline{\mathcal{B}}_\varepsilon$, we claim that similarly
\[ (f, \partial_\rho f, \ldots, \partial_\rho^{q_- + m_- + 3} f ) \in H^{m_- + q_- + 3}(\underline{\mathcal{B}}_\varepsilon) \times H^{m_- + q_- +2}(\underline{\mathcal{B}}_\varepsilon) \times \ldots \times L^2(\underline{\mathcal{B}}_\varepsilon) \]
is sufficient. Indeed, recalling that $\bme_{\bmzero \bmzero'} = (1-\tau) \bmpartial_\tau + \rho \bmpartial_\rho$ is a null generator of $\underline{\mathcal{B}}_\varepsilon$ and that the coefficients $(1-\tau)$ and $\rho$ are uniformly equivalent to $1$ on $\underline{\mathcal{B}}_\varepsilon$, the vector $\bmpartial_\rho$ is uniformly transverse to $\underline{\mathcal{B}}_\varepsilon$, and we have for any $q$ and $m$
\begin{align*}
    \int_{\underline{\mathcal{B}}_\varepsilon} \sum_{q' + p' + |\alpha| \leq m} | \partial_\tau^{q+q'} \bmZ^\alpha \partial_\rho^{p'} f|^2 \, \d \underline{\mathcal{B}} & \lesssim \int_{\underline{\mathcal{B}}_\varepsilon} \sum_{q' + p' + |\alpha| \leq m} | (\bme_{\bmzero \bmzero'})^{q+q'} \bmZ^\alpha \partial_\rho^{p'} f|^2 \, \d \underline{\mathcal{B}} \\
    & + \int_{\underline{\mathcal{B}}_\varepsilon} \sum_{q' + p' + |\alpha| \leq m} | \bmZ^\alpha \partial_\rho^{q+ q' + p'} f|^2 \, \d \underline{\mathcal{B}} \\
    & \lesssim \sum_{j=0}^m \| \partial_\rho^j f \|^2_{H^{m+q-j}(\underline{\mathcal{B}}_\varepsilon)} + \sum_{j=q}^{q+m} \| \partial_\rho^j f \|^2_{H^{q+m-j}(\underline{\mathcal{B}}_\varepsilon)} \\
    & \lesssim \sum_{j=0}^{m+q} \| \partial_\rho^j f \|^2_{H^{m+q-j}(\underline{\mathcal{B}}_\varepsilon)},
\end{align*}
where in the first inequality we used the fact that $(\bme_{\bmzero \bmzero'})^s f$ is uniformly equivalent in $L^2(\underline{\mathcal{B}}_\varepsilon)$ to the sum of norms of $\partial_\tau^s f $ and $\partial_\rho^s f$ for any $s \in \mathbb{N}$.

Then, by \Cref{Proposition2,Proposition3}, the solution in particular satisfies $\partial^{q_-}_\tau f \in H^{m_-+3}(\mathcal{\underline{N}}_\varepsilon)$ and, for $0<\alpha\leq \frac{1}{2}$, admits the expansion 
\begin{equation}
\label{Taylor_expansion_in_lower_domain}
    f = \sum_{q'=0}^{q_--1}\frac{1}{q'!}(\partial_{\tau}^{q'}f)\rvert_{\scri^-}(\tau+1)^{q'} + C^{m_-,\alpha}(\mathcal{\underline{N}}_\varepsilon).
\end{equation}

\subsection{From $\mathcal{S}_{-1+\varepsilon}$ to $\scri^+$}
An application of \Cref{Proposition1} with $p \longrightarrow p_+$ and $m\longrightarrow m_+ + 3$ shows\footnote{It is easily seen that the norm on the initial data $\int_{\mathcal{S}_\star} \sum_{p'+q'+|\alpha| \leq m} |\partial_\tau^{q'} \partial_\rho^{p'+p} \bmZ^\alpha f|^2 \, \d \rho \, \d \mu$ is controlled by $\|\partial^m_\tau f\|^2_{H^p(\mathcal{S}_\star)} + \ldots + \|f \|^2_{H^{p+m}({\mathcal{S}_\star})}$.} that if the data for $f \in \{\psi, \psi_0, \psi_1, \psi_2\}$ on $\mathcal{S}_{-1+\varepsilon}$ has the regularity
\[ ( f, \partial_\tau f, \ldots, \partial_\tau^{m_+ + 3} f ) \in H^{p_+ + m_+ + 3}(\mathcal{S}_{-1+\varepsilon}) \times H^{p_+ + m_+ +2}(\mathcal{S}_{-1+\varepsilon}) \times \ldots \times H^{p_+}(\mathcal{S}_{-1+\varepsilon}) \]

for 
\begin{align*}
    p_+ > m_+ + 4,
\end{align*}
then the solution has the regularity $\partial^{p_+}_{\rho} f \in H^{m_++3}(\mathcal{N}_1)$, and in particular admits the expansion
\begin{equation}
    \label{Taylor_expansion_upper_domain}
    f = \sum_{p'=0}^{p_+-1}\frac{1}{p'!}f^{(p')}\rho^{p'} + C^{m_+,\alpha}(\mathcal{N}_1).
\end{equation}

\subsection{Obtaining the desired regularity at $\scri^+$}

Noting the regularity prescribed at $\scri^-$ in \Cref{sec:from_past_to_Cauchy} and the form of the expansion \eqref{Taylor_expansion_in_lower_domain}, we have that on $\mathcal{S}_{-1+\varepsilon}$ the $\tau$-derivatives of the components $f$ have the regularity
\begin{align*}
\partial_\tau^s f|_{\mathcal{S}_{-1+\varepsilon}} &= \sum_{q' = s}^{q_- -1} \frac{1}{(q' - s)!}(\partial_\tau^{q'} f)|_{\scri^-} \varepsilon^{q'-s} + C^{m_- -s}(\mathcal{S}_{-1+\varepsilon}) \\
& \in H^{m_- + 2}(\mathcal{S}_{-1+\varepsilon}) + H^{m_- -s}(\mathcal{S}_{-1+\varepsilon}) \\
& \subset H^{m_- -s}(\mathcal{S}_{-1+\varepsilon})
\end{align*}
for $0 \leq s \leq m_-$ (for $s \geq q_-$ the sum being empty), where in the second line we used the fact that a $C^k(\mathcal{S}_\tau)$ function is in $H^k(\mathcal{S}_\tau)$ as the domain of integration is compact\footnote{Note that the metric distance with respect to $\bmeta$ to $\mathcal{I} = \{ \rho = 0 \}$ is infinite, however.} in $\rho$. Hence to satisfy the above requirements for propagation up to $\scri^+$ with expansion \eqref{Taylor_expansion_upper_domain} we need $m_- - (m_+ + 3) \geq p_+ > m_+ + 4$, as well as $m_- +3 < q_- $, i.e. altogether we require
\[ q_- \geq 2m_+ +12. \]

\subsection{Main results}
\label{Subsection:MainResult}

By setting $p_+ = m_+ + 5$, $q_- = m_- + 4$, and $m_- = p_+ + m_+ + 3 = 2m_+ + 8$, we obtain our main theorem in the following form.

\begin{theorem}\label{mainResult}
    Let $\rho_{*}>0$, $0<\varepsilon \ll 1$ be real numbers and $m_+ \in \mathbb{N}$ be an integer.  Suppose the asymptotic characteristic data for $\square  \phi = 0$ for the components $f \in \{\psi, \psi_0, \psi_1, \psi_2 \}$ on $\scri^-_{\rho_\star} \cup \underline{\mathcal{B}}_\varepsilon$ (cf. \Cref{reducedInitialData}) has the regularity
    
    \begin{equation}
    \label{main_thm_regularity_requirement}
    (f, \partial_\nparallel f, \ldots, \partial_\nparallel^{4m_+ + 23} f) \in H^{4m_+ + 23} \times H^{4m_+ + 22} \times \ldots \times L^2,
    \end{equation}
    where $\partial_\nparallel$ denotes a transverse derivative to $\scri^-$ or $\underline{\mathcal{B}}_\varepsilon$, i.e. $\partial_\nparallel = \partial_\tau$ on $\scri^-$ and $\partial_\nparallel = \partial_\rho$ on $\underline{\mathcal{B}}_\varepsilon$. Additionally, suppose that
    \begin{equation}
    \phi|_{\scri^-} \in H^{4m_+ + 24}(\scri^-_{\rho_\star}) \quad {\text{and}} \quad \phi|_{\underline{\mathcal{B}}_\varepsilon} \in H^{4m_+ + 24}(\underline{\mathcal{B}}_\varepsilon).
    \label{Condition:RadiationField}
    \end{equation}
   Then in the domain $\mathscr{D}\equiv \mathcal{\underline{N}}_\varepsilon \cup \mathcal{N}_1$ this data gives rise to a unique solution to the wave equation which near $\scri^+$ admits the expansion
    \begin{equation}
        \phi = \sum^{m_+ +4}_{p'=0}\frac{1}{p'!}\phi^{(p')}\left(\tau,\tensor{t}{^{\bmA}_{\bmB}}\right)\rho^{p'}+C^{m_+,\alpha}(\mathcal{N}_1),
        \label{Theorem:Expansions}
    \end{equation}
    where $0<\alpha\leq\frac{1}{2}$. The $\tau$-dependence of the coefficients of the expansion can be computed explicitly in terms of solutions to Jacobi ODEs. 
\end{theorem}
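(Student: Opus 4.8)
The plan is to stitch together the lower- and upper-domain estimates of \Cref{Section:EstimatesLowerDomain} and \Cref{Section3} along the spacelike slice $\mathcal{S}_{-1+\varepsilon}$, keeping careful track of the loss of derivatives at each step. First I would use \Cref{reducedInitialData} to reconstruct the full characteristic data for $f \in \{\psi, \psi_0, \psi_1, \psi_2\}$ on $\scri^-_{\rho_\star} \cup \underline{\mathcal{B}}_\varepsilon$ from the reduced triple $(\phi|_{\scri^-}, \phi|_{\underline{\mathcal{B}}_\varepsilon}, \psi_\star)$, together with the transport equations \eqref{radiationFieldRecovery} (and their higher-order $\tau$-differentiated analogues, using the values of $\partial_\tau^j \phi$ at the cut $\mathcal{C}_\star$) to generate all $\tau$-derivatives of $\phi$ on $\scri^-$. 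Since the right-hand sides of these transport equations involve only tangential derivatives, the Sobolev hypotheses \eqref{Condition:RadiationField}--\eqref{main_thm_regularity_requirement} on the reduced data transfer, with no loss in the relevant directions, to the bounds \eqref{initialDataBound} required by \Cref{Proposition2,Proposition3} with the choices $p_- = 0$ and $q_- = m_- + 4$; this uses the conversion worked out in \Cref{sec:from_past_to_Cauchy} between the intrinsic Sobolev norms on the null hypersurfaces and the coordinate norms appearing in those propositions. Local existence of the solution from this data is supplied by Rendall's theorem together with a last-slice argument as in \S5.4 of \cite{TauVal23}.

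Next I would run the lower-domain machinery. \Cref{Proposition3} establishes the bootstrap bound \eqref{bootstrap} for the components $\psi, \psi_0, \psi_1$ unconditionally, via the non-negative combinations $\mathbf{\Psi} = \psi_1 + \tfrac12 \psi$ and $\mathbf{\Phi} = \psi_1 - \tfrac12 \psi$ and a Gr\"onwall argument in $\tau$, and this feeds \Cref{Proposition2}, which then closes the delicate estimate for $\psi_2$ down to $\scri^-$. The outcome is $\partial_\tau^{q_-} f \in H^{m_- + 3}(\underline{\mathcal{N}}_\varepsilon)$ for all four components, hence by Sobolev embedding the expansion \eqref{Taylor_expansion_in_lower_domain} with a $C^{m_-,\alpha}$ remainder. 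Restricting to $\mathcal{S}_{-1+\varepsilon}$ and differentiating the expansion termwise yields $\partial_\tau^s f|_{\mathcal{S}_{-1+\varepsilon}} \in H^{m_- - s}(\mathcal{S}_{-1+\varepsilon})$ for $0 \le s \le m_-$. I would then feed this into \Cref{Proposition1} with $p \to p_+$ and $m \to m_+ + 3$: the structural condition $p_+ > m_+ + 4$ holds for $p_+ = m_+ + 5$, and the data on $\mathcal{S}_{-1+\varepsilon}$ has the required regularity precisely when $m_- - (m_+ + 3) \ge p_+$, which with $m_- = 2m_+ + 8$ is saturated. The conclusion is $\partial_\rho^{p_+} f \in H^{m_+ + 3}(\mathcal{N}_1)$, and Sobolev embedding as in \Cref{Section3.3} gives the expansion \eqref{Theorem:Expansions} with remainder in $C^{m_+,\alpha}(\mathcal{N}_1)$.

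For the last sentence of the theorem I would observe that, since $\mathcal{I} = \{\rho = 0\}$ is a total characteristic of the transport system \eqref{transport1}--\eqref{transport4}, applying $\partial_\rho^{p'}$ and restricting to $\rho = 0$ produces a closed hierarchy of linear ODEs in $\tau$ for the coefficients $\phi^{(p')}$, each coupled only to lower orders; expanding in spherical harmonics reduces every such ODE to a Jacobi differential equation in the variables $(1 \pm \tau)$, solvable explicitly. This is also the step at which potential logarithmic terms would appear, and at which the $\ell \ge p$ restriction on the harmonics enters. Uniqueness of the solution on $\mathscr{D}$ follows from the energy estimates together with the last-slice argument.

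The main obstacle is the regularity bookkeeping across $\mathcal{S}_{-1+\varepsilon}$: the lower-domain estimates control high $\tau$-derivatives while the upper-domain estimates require high $\rho$-derivatives, so the two function-space scales must be reconciled, and an overshoot of regularity at $\scri^-$---encoded in $q_- \ge 2m_+ + 12$---must be arranged generous enough to survive both the Sobolev embeddings in each domain and the termwise differentiation of the lower-domain Taylor expansion before restriction to $\mathcal{S}_{-1+\varepsilon}$. A secondary technical point is verifying that the bootstrap hypothesis of \Cref{Proposition2} is genuinely closed by \Cref{Proposition3} rather than circular, i.e. that the $\psi_2$-estimate is never invoked in proving the bounds for $\psi, \psi_0, \psi_1$.
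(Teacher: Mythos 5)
Your plan follows the paper's argument essentially step for step: the same reconstruction of the full characteristic data from the reduced set via \Cref{reducedInitialData}, the same chaining of \Cref{Proposition2,Proposition3} (high $\tau$-derivatives in $\underline{\mathcal{N}}_\varepsilon$) into \Cref{Proposition1} (high $\rho$-derivatives in $\mathcal{N}_1$) across $\mathcal{S}_{-1+\varepsilon}$, the same exponent bookkeeping $p_+ = m_+ + 5$, $m_- = 2m_+ + 8$, $q_- = m_- + 4$, and the same reduction of the expansion coefficients to Jacobi ODEs on the cylinder. Your remarks on the non-circularity of the bootstrap and on the overshoot $q_- \geq 2m_+ + 12$ are both accurate.

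The one step you omit is the one the paper actually writes out as the proof of \Cref{mainResult}: your argument terminates with the expansion \eqref{Taylor_expansion_upper_domain} for the components $f \in \{\psi, \psi_0, \psi_1, \psi_2\}$, whereas the conclusion \eqref{Theorem:Expansions} is an expansion for $\phi$ itself. To pass from one to the other, one integrates the expansion of $\psi = \sqrt{2}\,\partial_\tau \phi$ in $\tau$ from $-1$ to $\tau$, which produces the extra terms $\phi|_{\scri^-}$ and $\sum_{p'=0}^{m_++4} \tfrac{1}{p'!}\phi^{(p')}(-1, t^{\bmA}{}_{\bmB})\rho^{p'}$; these must be absorbed into the $C^{m_+,\alpha}(\mathcal{N}_1)$ remainder, and it is precisely here --- not only in the data construction --- that the hypothesis \eqref{Condition:RadiationField} on the radiation field is used, via $H^{4m_++24}(\scri^-_{\rho_\star}) \hookrightarrow C^{4m_++21,\alpha}(\scri^-_{\rho_\star})$ and a trace on $\{\rho = 0\}$ costing one derivative for the coefficients $\phi^{(p')}(-1,\cdot)$. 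This is a short computation, but without it the stated expansion for $\phi$ does not follow from the component estimates alone, and the role of \eqref{Condition:RadiationField} in the theorem is left unexplained.
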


\begin{proof}
    It remains to show that the expansion \eqref{Taylor_expansion_upper_domain} implies the expansion for $\phi$ claimed in the theorem. Using \eqref{Taylor_expansion_upper_domain} for the component $f = \psi = \sqrt{2} \partial_\tau \phi$ and integrating in $\tau$ from $-1$ to $\tau$, we obtain, recalling that $p_+ = m_+ + 5$,
    \[ \phi - \phi|_{\scri^-} = \sum_{p'=0}^{m_+ + 4} \frac{1}{p'!} \phi^{(p')}(\tau,t^\bmA{}_\bmB) \rho^{p'} - \sum_{p'=0}^{m_+ + 4} \frac{1}{p'!} \phi^{(p')}(-1,t^\bmA{}_\bmB) \rho^{p'} + C^{m_+, \alpha}(\mathcal{N}_1). \]
    By assumption, $\phi|_{\scri^-} \in H^{4m_+ + 24}(\scri^-_{\rho_\star}) \hookrightarrow C^{4m_+ + 21, \alpha}(\scri^-_{\rho_\star})$, so, considered as a function on $\mathcal{N}_1$ which is constant in $\tau$, can be absorbed into the $C^{m_+, \alpha}(\mathcal{N}_1)$ remainder. {Similarly, taking the trace on $\{ \rho = 0 \}$ at the cost of losing one derivative,
    \[ \phi^{(p')}(-1,t^\bmA{}_\bmB) = (\partial_\rho^{p'} \phi|_{\scri^-})|_{\rho=0} \in H^{4m_+ + 24 - p'-1}(\mathbb{S}^3) \hookrightarrow C^{3m_+ + 17, \alpha}(\mathbb{S}^3) \]
    for all $ 0 \leq p' \leq m_+ + 4$ and $\alpha \leq \frac{1}{2}$. So the second term on the right-hand side in the above expansion can be absorbed into the $C^{m_+,\alpha}(\mathcal{N}_1)$ remainder (observe that polynomials in $\rho$ are in $C^\infty(\mathcal{N}_1)$ since $\rho$ is bounded on $\mathcal{N}_1$).} 
\end{proof}

\begin{remark}
{\em The previous discussion is carried out in the manifold $ \overline{\mathcal{N}}$ which is 5-dimensional ---see equation \eqref{Definition:OverlineN}. The latter is related to the usual 4-dimensional picture by means of the Hopf map $\pi$. The latter provides a smooth embedding of $\mathbb{S}^2$ into $\mathbb{S}^3$. Recall, further, that $\mathbb{S}^3\simeq \text{SU}(2)$ and that it admits a local trivialisation of the form $\mathbb{S}^2\times U(1)$ with $U(1)$ corresponding to the fibres. Spinor fields  on $ \overline{\mathcal{N}}$ are  obtained as the lift of spinor fields on the 4-dimensional base space in such a way that they transform homogeneously along fibres of $\mathbb{S}^3$ ---the resulting lifts are called invariant spinor fields. In this spirit, given an invariant spinor field $\chi_{\bmA\bmB\cdots \bmD}\in C^k(\mathbb{S}^3)$ for some $k\geq 0$, its  restriction to $\mathbb{S}^2$ under $\pi$ (to be denoted in an abuse of notation again by $\chi_{\bmA\bmB\cdots \bmD}$) is a function of class $C^k(\mathbb{S}^2)$. For further details see \cite{Fri98a}, Section 3. An analogous statement can be made if $\chi_{\bmA\bmB\cdots \bmD}\in C^{k,\alpha}(\mathbb{S}^3)$ with $0< \alpha \leq \tfrac{1}{2}$ in which case one has that the restriction is of class $ C^{k,\alpha}(\mathbb{S}^2)$. 
}
\end{remark}

Finally, we supplement Theorem \ref{mainResult} with the following statement concerning the absence of logarithms in the asymptotic expansions:

\begin{theorem}
\label{Theorem:NoLogs}
Under the assumptions of Theorem \ref{mainResult}, the  asymptotic expansion 
 \[
    \sum^{m+4}_{p'=0}\frac{1}{p'!}\phi^{(p')}\left(\tau,\tensor{t}{^{\bmA}_{\bmB}}\right)\rho^{p'}
\]
of the solution \eqref{Theorem:Expansions} does not contain logarithmic divergences at $\tau=\pm 1$.  In fact, these terms are analytic in $\tau$ at $\tau=\pm 1$.
\end{theorem}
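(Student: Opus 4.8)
The plan is to reduce the statement to the analysis of the \emph{interior equations} governing the $\tau$-dependence of the expansion coefficients $\phi^{(p')}\equiv\partial_\rho^{p'}\phi|_{\mathcal{I}}$ on the cylinder, and to show that, once decomposed in spherical harmonics, the solution selected by the characteristic data is a \emph{polynomial} (Jacobi/Gegenbauer-type) solution of a second-order ODE rather than the logarithmic one. Since $\mathcal{I}=\{\rho=0\}$ is a total characteristic of \eqref{transport1}--\eqref{transport4}, the coefficients $\phi^{(p')}$ are obtained by applying $\partial_\rho^{p'}$ to \eqref{spatInfWaveEqn} and restricting to $\rho=0$; using $\partial_\rho^{p'}(\rho\,\partial_\rho g)|_{\rho=0}=p'\,\partial_\rho^{p'}g|_{\rho=0}$ and $\partial_\rho^{p'}(\rho^2\partial_\rho^2 g)|_{\rho=0}=p'(p'-1)\,\partial_\rho^{p'}g|_{\rho=0}$ one finds
\[
(1-\tau^2)\,\partial_\tau^2\phi^{(p')}+2(p'-1)\,\tau\,\partial_\tau\phi^{(p')}-p'(p'-1)\,\phi^{(p')}-\slashed{\Delta}_{\mathbb{S}^2}\phi^{(p')}=0 .
\]
Writing $a=a(\tau)$ for the coefficient of a fixed $Y_{\ell m}$, this becomes $(1-\tau^2)\ddot a+2(p'-1)\tau\dot a+(\ell-p'+1)(\ell+p')\,a=0$: an ODE invariant under $\tau\mapsto-\tau$, with regular singular points at $\tau=\pm1$ whose indicial exponents are $\{0,p'\}$, and which possesses a polynomial solution of degree $\ell+p'$---of the form $(1-\tau^2)^{p'}Q_{\ell-p'}(\tau)$ with $Q_{\ell-p'}(\pm1)\neq0$ when $\ell\geq p'$---together with, when $\ell<p'$, a second polynomial solution of degree $p'-\ell-1$. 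The same equations, and their systematic treatment with no restriction on $\ell$, follow equally from the hierarchy \eqref{transport_hierarchy} and are worked out in \Cref{Appendix:InteriorEquations}.

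The next step is a Frobenius dichotomy. Since the exponents differ by the integer $p'\geq0$, the exponent-$p'$ solution at a critical set is always analytic there, while the exponent-$0$ solution is either analytic or of the form $(\text{analytic})+c\,(1\mp\tau)^{p'}\log(1\mp\tau)\,(\text{analytic})$ near $\tau=\pm1$ (for $p'=0$ a genuine logarithm, as for $Q_\ell$), hence not $C^{p'}$ there; by the reflection symmetry a logarithmic solution occurs at $\tau=+1$ if and only if one occurs at $\tau=-1$. Thus for each $(\ell,p')$ either there is no logarithmic solution at all---every solution of the mode equation is then analytic on $[-1,1]$ and there is nothing to prove---or $\ell\geq p'$, in which case the space of solutions that are $C^{p'}$ at $\tau=-1$ is one-dimensional and spanned by $(1-\tau^2)^{p'}Q_{\ell-p'}$, which has exponent $p'$ at \emph{both} critical sets; so any solution that is $C^{p'}$ at $\tau=-1$ is a multiple of this polynomial, and in particular analytic at $\tau=+1$.

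It therefore remains to show $\phi^{(p')}$ is $C^{p'}$ at $\tau=-1$. By the expansion \eqref{Taylor_expansion_in_lower_domain} and the choices made in \Cref{mainResult} (where $m_-=2m_++8$ and $q_-=2m_++12$ far exceed every $p'\leq m_++4$ appearing in \eqref{Theorem:Expansions}), each $\phi^{(p')}$ admits near $\tau=-1$ a Taylor expansion in $(\tau+1)$ to order $\gg p'$, with coefficients $\partial_\tau^q\phi^{(p')}|_{\tau=-1}=\partial_\rho^{p'}\big(\partial_\tau^q\phi|_{\scri^-}\big)|_{\rho=0}$ computed recursively from the reduced data via the $\scri^-$ transport equation \eqref{radiationFieldRecovery}. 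Hence $\phi^{(p')}$ is $C^{p'}$ (indeed much smoother) at $\tau=-1$; by the previous paragraph the logarithmic part of every $\ell$-mode then vanishes at $\tau=\pm1$, and reassembling the modes gives analyticity of each $\phi^{(p')}$ in $\tau$ on $[-1,1]$---consistently with the control of the solution up to $\scri^+$ already provided by \Cref{Proposition1} and \eqref{Taylor_expansion_upper_domain}.

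The step I expect to be the main obstacle is this passage to the \emph{far} critical set $\mathcal{I}^+$: since the solution is propagated forward from $\scri^-$, the absence of logarithms at $\mathcal{I}^-$ is essentially inherited from the smoothness of the data, but a priori the transport along $\mathcal{I}$ could excite the logarithmic mode of the mode equation at $\tau=+1$. It is precisely the factorisation $(1-\tau^2)^{p'}Q_{\ell-p'}$ of the polynomial solutions---an instance of the ``exact symmetry of the wave equation'' referred to in the introduction, and the reason the low modes $\ell<p'$ carry no logarithm either---that excludes this, and establishing it for all $\ell$ and $p'$ is the core computation, carried out in \Cref{Appendix:InteriorEquations}.
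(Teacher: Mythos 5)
Your proposal is correct and follows essentially the same route as the paper: reduce to the interior Jacobi ODEs on the cylinder (\Cref{Appendix:InteriorEquations}), observe that the logarithmic branch fails to be $C^{p'}$ at $\tau=-1$ while the data regularity of \Cref{mainResult} forces exactly that regularity there, and then use the symmetric factor $(1-\tau^2)^{p'}$ of the surviving polynomial solution to transfer analyticity to $\tau=+1$. Your Frobenius/indicial-exponent packaging of the dichotomy is a slightly cleaner way of stating what the paper extracts from the explicit Jacobi-function representations, but it is the same argument in substance.
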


\begin{proof}
The general expressions for the coefficients $\phi^{(p')}$ are discussed in Appendix \ref{Appendix:InteriorEquations}. The general solution to the interior transport equations, \eqref{JacobiLogarithmicSolution}, contains, for each angular mode and order in the expansion, two constants $\mathfrak{c}_{p;\ell,j}$ and $\mathfrak{d}_{p;\ell,j}$. The constant $\mathfrak{d}_{p;\ell,j}$ is associated to terms with logarithmic divergence and a polynomial which is non-vanishing at $\mathscr{I}^-$. As already discussed, the regularity conditions \eqref{main_thm_regularity_requirement} give the existence of an expansion in powers of $(1+\tau)$ near $\mathscr{I}^-$. These two observations are only consistent if the constant $\mathfrak{d}_{p;k,j}$ vanishes.

We further notice form the discussion in Appendix \ref{Appendix:InteriorEquations} that the regularity of the expansions is symmetric in $\tau$. In other words, if for a given $p'$, the coefficient  $\phi^{(p')}$ does not have a logarithmic divergence towards $\mathcal{I}^-$ (i.e. $\tau=-1$) then it will also not have a logarithmic divergence towards $\mathcal{I}^+$. Since the regularity requirement \eqref{main_thm_regularity_requirement} precludes the presence of logarithmic divergences up to, at least, order $m+4$, it follows that the coefficients in the expansion \eqref{Theorem:Expansions} do not contain logarithmic divergences.
\end{proof}

\begin{remark}
   {\em  The no incoming radiation condition for the wave equation \cite{Sommerfeld1912,Som92,Mad70},
\begin{equation}
\partial_v (r \tilde{\phi})|_{\mathscr{I}^-}=0,
\label{NoIncomingRadiationCondition}
\end{equation}
encodes the absence of incoming energy at past null infinity. In conformal coordinates \eqref{NoIncomingRadiationCondition} takes the form
\[ \partial_\rho \phi|_{\scri^-} = 0. 
\]
As a result, the field $\phi|_{\mathscr{I}^-}$ is independent of $\rho$ and therefore {smooth  in $\rho$ at the critical set $\mathcal{I}^-$.} In other words, the field $\phi|_{\mathscr{I}^-}$ has only angular dependence and it satisfies condition \eqref{Condition:RadiationField} in Theorem \ref{mainResult}. Now, if the other hypotheses in Theorem \ref{mainResult} are satisfied, it follows then that no logarithmic divergences arise in this setting ---consistent with Theorem \ref{Theorem:NoLogs}. }
\end{remark}

\section{Physical Coordinates}
\label{Section:Interpretation}

In this section we provide a characterisation of the regularity conditions in the statement of \Cref{mainResult} in terms of physical coordinates, and rephrase the asymptotic expansions obtained in \Cref{Theorem:NoLogs} in physical coordinates.

\subsection{{Regularity in terms of physical coordinates}}

\subsubsection{Past null infinity}

{Using the expressions \eqref{Friedrich_vectors_past_scri} and \eqref{Friedrich_form_past_scri}, the condition that
\[ \phi|_{\scri^-} \in H^l(\scri^-_{\rho_*}) \iff \sum_{p + |\alpha| \leq l} \int_{\mathrm{SU}(2)} \int_0^{\rho_*} | \partial_\rho^p \bmZ^\alpha \phi|_{\scri^-}|^2 \, \d \rho \, \d \mu < \infty \]
implies (for simplicity, for $\phi|_{\scri^-}$ smooth in the angular coordinates),
\[ \int_{v_*}^\infty \left| (v^2 \partial_v)^p (r \tilde{\phi})|_{\scri^-}\right|^2 \frac{\d v}{v^2} < \infty \quad \text{for all } ~ 0\leq p \leq l, \]
where $v_* = 1/2\rho_*$, and $(r\tilde{\phi})|_{\scri^-}(v,\omega) = \lim_{r \to \infty} r\tilde{\phi}$, where $\omega$ denotes the angular coordinates and the limit is taken along lines of constant $v$.
}

\subsubsection{The incoming hypersurface $\underline{\mathcal{B}}_\varepsilon$}

We compute, in $(v,R)$ coordinates,
\[ \bme_{\bmzero \bmzero'} = \frac{1}{\sqrt{2}}\left(\frac{2-Rv}{v} \right)\partial_R \quad \text{and} \quad \bme_{\bmone \bmone'} = \frac{1}{\sqrt{2}} \left( \frac{v}{2-Rv} \right) \left( 2 \partial_v - R^2 \partial_R \right), \]
so that
\[ \d \underline{\mathcal{B}}|_{\underline{\mathcal{B}}_\varepsilon} = \frac{\sqrt{2}}{(2-Rv)^2} \d R \wedge \d \mu. \]
Note that in $(v,R)$ coordinates the rescaled metric is given by 
\[
\bmeta = R^2 \d v^2 +  (\d v \otimes\d R+ \d R\otimes\d v)  - \bmsigma,
\]
so that $\bme_{\bmzero \bmzero'}$ and $\bme_{\bmone \bmone'} $ are clearly null, and $\bmeta(\bme_{\bmzero \bmzero'}, \bme_{\bmone \bmone'}) = 1$, as expected. Write $\underline{\phi} = \phi|_{\underline{\mathcal{B}}_\varepsilon}$. The condition
\[ \underline{\phi} \in H^l(\underline{\mathcal{B}}_\varepsilon) \iff \sum_{q+|\alpha| \leq l} \int_{\mathrm{SU}(2)}\int_0^{R_\varepsilon} \left| (\bme_{\bmzero \bmzero'})^q \bmZ^\alpha \underline{\phi} \right|^2 \d R \, \d \mu <\infty, \]
where $R_\varepsilon = \varepsilon/v_\star$ then implies, again for $\underline{\phi}$ smooth in the angular coordinates for simplicity, and using that $(2-Rv)$ is uniformly equivalent to $1$ on $\underline{\mathcal{B}}_\varepsilon$,
\[ \int^\infty_{r_\star} \left| (r^2 \partial_r )^q (r \tilde{\underline{\phi}}) \right|^2 \frac{\d r}{r^2} < \infty \quad \text{for all} ~ 0 \leq q \leq l, \]
where $r_\star = {v_\star}/{\varepsilon}$ and $\underline{\tilde{\phi}} = r^{-1} \underline{\phi}$.

\subsection{Asymptotic expansions in physical coordinates}
\label{Subsection:Peeling}

From the analysis in \Cref{Appendix:InteriorEquations} it follows that the asymptotic expansion (in the conformally rescaled unphysical spacetime) of the solutions given by Theorem \ref{mainResult} can be rewritten as
\begin{equation}
\phi = \sum_{p=0}^{m+4} \frac{1}{p!}\varphi^{(p)}(\tau,t^\bmA{}_\bmB)\rho^{p}+ \sum^{m+4}_{p=0}\frac{1}{p!}\tilde{\varphi}^{(p)}(\tau,t^\bmA{}_\bmB) (1-\tau)^{p}\rho^{p} + \mathcal{R}_m,
\label{ExpansionSplit}
\end{equation}
with $\mathcal{R}_m=\mathcal{O}(\rho^{m+5})$ a remainder of class $C^{m+1,\alpha}$,  $\varphi^{(p)}(\tau,t^\bmA{}_\bmB)$ and $\tilde{\varphi}^{(p)}(\tau,t^\bmA{}_\bmB)$ smooth functions of $\tau$ such that $\varphi^{(p)}(1,t^\bmA{}_\bmB)\neq 0$, $\tilde{\varphi}^{(p)}(1,t^\bmA{}_\bmB)\neq 0$. Moreover,
\begin{eqnarray*}
    && \varphi^{(p)}(\tau,t^\bmA{}_\bmB) \quad \mbox{consists only of }\quad \ell <p \quad \mbox{harmonics}, \\
    && \tilde{\varphi}^{(p)}(\tau,t^\bmA{}_\bmB) \quad \mbox{consists only of }\quad \ell \geq p \quad \mbox{harmonics}. 
\end{eqnarray*}
The angular dependence of the functions $\varphi^{(p')}(\tau,t^\bmA{}_\bmB)$ and $\tilde{\varphi}^{(p)}(\tau,t^\bmA{}_\bmB)$ is completely computable from from the radiation field at $\mathscr{I}^-$ ---however, explicit expressions are messy and not particularly illuminating. From the discussion in Subsection \ref{Section:TransformationCoordinates} it follows that 
\[
\tau = 1+\frac{u}{r}, \qquad \rho = -\frac{1}{u\left(2+\displaystyle \frac{u}{r} \right)}
\]
so that, in particular, one has 
\[
\rho (1-\tau) = \frac{1}{u+2r}.
\]
The above transformation formulae can be used to rewrite the expansion \eqref{ExpansionSplit} in terms of the coordinates $(u,r)$. Now, for fixed $u$ one has that
\[
\varphi^{(p)}(\tau,t^\bmA{}_\bmB)=\varphi^{(p)}(1,t^\bmA{}_\bmB)+\mathcal{O}\left(\frac{1}{r}\right), \qquad \tilde{\varphi}^{(p)}(\tau,t^\bmA{}_\bmB)=\tilde{\varphi}^{(p)}(1,t^\bmA{}_\bmB)+\mathcal{O}\left(\frac{1}{r}\right).
\]
Thus, expanding the right hand side of equation \eqref{ExpansionSplit} to leading order in $1/r$ one concludes that 
\begin{equation}
\tilde{\phi} = \frac{1}{r}\left( \tilde{\varphi}^{(0)}(1,t^\bmA{}_\bmB) + \sum_{p=0}^{m+4} \frac{1}{p!} \left(\frac{-1}{2u} \right)^{p}\varphi^{(p)}(1,t^\bmA{}_\bmB) +\mathcal{O}\left( \frac{1}{u^{m+5}}\right)\right) + \mathcal{O}\left( \frac{1}{r^2}\right).
\label{ExpansionFinal}
\end{equation}
\emph{This is a statement of peeling for the solutions provided by Theorem \ref{mainResult}}. Further terms in the expansion (in $1/r$) can be computed as required.

\begin{remark}
{\em A more general version of the expansion \eqref{ExpansionFinal} which includes logarithmic contributions can be found in \cite{FueHen24}. It is stressed that such polyhomogeneous solutions to the wave equation are not accessible via Theorem \ref{mainResult}.} 
\end{remark}

\begin{remark}
{\em From \eqref{ExpansionFinal} it follows that, generically, one has that $\partial_u (r\tilde\phi)|_{\scri^+}=\partial_u\phi|_{\scri^+}\neq 0$. Thus, one has a flow of energy through $\scri^+$.}
\end{remark}

\begin{remark}
    {\em The $\mathcal{O}(u^{-m-5})$ term in the expansion \eqref{ExpansionFinal} is due to the remainder $\mathcal{R}_m=\mathcal{O}(\rho^{m+5})$.}
\end{remark}

\section{Conclusion}
\label{Section:Conclusions}

In this article we have extended the methods originally introduced in \cite{TauVal23} to control solutions to the massless spin-$s$ equations from past to future null infinity in a neighbourhood of spatial infinity of the Minkowski spacetime to the case of the wave equation. As the methods originally devised in \cite{TauVal23} were originally designed for first order symmetric hyperbolic systems, in the present analysis we have considered a first order reduction of the wave equation which is obtained via the space-spinor formalism.

The method presented in this article allows to control the regularity of the solution at future null infinity in terms of the properties and, in particular, regularity, of asymptotic characteristic initial data. It is important to recall that, as we are working in a conformally rescaled setting,  a specific level of regularity of the solutions to the wave equation at null infinity translates into a particular type of decay in the physical spacetime. As in \cite{TauVal23}, in this article we have assumed that the solutions to the wave equation admit a Taylor expansion of a specific order near past null infinity ---this assumption, in turn, implies a certain level of regularity of the scalar field at $\mathscr{I}^-$. By construction, our method provides solutions to the wave equation in a neighbourhood of spatial infinity with a high level of regularity at both past and future null infinity. It is an interesting question whether the methods can be adapted to allow the discussion of solutions to the wave equation with explicit logarithmic terms in their asymptotic expansions. In a similar vein, it would be interesting to understand whether the present methods can be adapted to the case of a background geometry described by the Schwarzschild spacetime ---such an analysis would allow to make contact with the work by Kehrberger \cite{Kehrberger2021a,Kehrberger2021b,Kehrberger2021c,Kehrberger2024}; also see more recently \cite{KadKeh25}.

\medskip

\appendix

\section{Friedrich's Lemma}
\label{Appendix:FLemma}

For completeness here we record a technical lemma due to Friedrich \cite{Fri03b} to elegantly deal with angular derivatives when computing the estimates in the main text.

\begin{lemma}\label{FTL}
    Given a multi-index $\alpha = (\alpha_1,\alpha_2,\alpha_3)$ of non-negative integers, set
    \begin{align*}
        \bmZ^\alpha = \bmZ_{1}^{\alpha_1}\bmZ_{2}^{\alpha_2}\bmZ_{3}^{\alpha_3},
    \end{align*}
    where $\bmZ_k$ denotes the left-invariant vector fields on $\mathrm{SU}(2)$, with the convention that $\bmZ^\alpha$ reduces to the identity operator if $\lvert\alpha\rvert = \alpha_1 + \alpha_2 + \alpha_3 = 0$. Then for any smooth $\mathbb{C}$-valued functions $f$, $g$ on $\mathrm{SU}(2)$, any $k\in\{1,2,3\}$ and $m \in \mathbb{N}$, one has
    \begin{align*}
        \sum_{\lvert\alpha\rvert=m}\left( \bmZ^\alpha\bmZ_k f \bmZ^\alpha g+\bmZ^\alpha f \bmZ^\alpha\bmZ_k g\right) = \sum_{\lvert\alpha\rvert=m}\bmZ_k\left( \bmZ^\alpha f \bmZ^\alpha g \right),
    \end{align*}
    from which it follows immediately that
    \begin{align*}
        \sum_{\lvert\alpha\rvert=m}\int_{\mathrm{SU}(2)}\left( \bmZ^\alpha\bmX_{\pm} f \bmZ^\alpha g+\bmZ^\alpha f \bmZ^\alpha\bmX_{\pm} g \right) \mathrm{d} \mu = 0.
    \end{align*}
\end{lemma}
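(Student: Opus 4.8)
The plan is to prove the pointwise identity first and then read off the integrated one as an immediate corollary.

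\emph{Reduction.} Applying the ordinary Leibniz rule to the single vector field $\bmZ_k$, the right-hand side equals $\sum_{|\alpha|=m}\bigl(\bmZ_k\bmZ^\alpha f\cdot\bmZ^\alpha g+\bmZ^\alpha f\cdot\bmZ_k\bmZ^\alpha g\bigr)$, so the first identity is equivalent to
\[
\sum_{|\alpha|=m}\Bigl([\bmZ_k,\bmZ^\alpha]\,f\cdot\bmZ^\alpha g+\bmZ^\alpha f\cdot[\bmZ_k,\bmZ^\alpha]\,g\Bigr)=0 .
\]
In other words, the content is that the commutator corrections generated by pulling $\bmZ_k$ past $\bmZ^\alpha$, summed over all monomials of fixed degree $m$, cancel in pairs between the two slots. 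The case $m=0$ is trivial, since $\bmZ^\alpha$ is then the identity.

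\emph{The cancellation.} Expanding $[\bmZ_k,\bmZ^\alpha]$ by Leibniz, it is a sum over the letters of $\bmZ^\alpha$ of the word obtained by replacing one letter $\bmZ_j$ with $[\bmZ_k,\bmZ_j]=\epsilon_{kjl}\bmZ_l$. For $m=1$ this gives $\sum_{j,l}\epsilon_{kjl}\bigl(\bmZ_l f\cdot\bmZ_j g+\bmZ_j f\cdot\bmZ_l g\bigr)$, which vanishes because the bracketed term is symmetric under $j\leftrightarrow l$ while $\epsilon_{kjl}=-\epsilon_{klj}$. I would then push the same mechanism through for general $m$: collect terms according to the slot at which the replacement is made, and pair the term coming from "$\bmZ_j\mapsto\bmZ_l$" against its companion with $j$ and $l$ interchanged, so that the total antisymmetry of the $\mathfrak{su}(2)$ structure constants kills each pair. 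This bookkeeping is cleanest when $\sum_{|\alpha|=m}$ is organised as a sum over words of length $m$ (equivalently, symmetrised products), so I would either set it up that way from the start or carry out an induction on $m$ by writing $\bmZ^\alpha=\bmZ_i\bmZ^{\alpha'}$, applying the $m=1$ cancellation to the outermost letter and the inductive hypothesis to the pair $(\bmZ_i f,\bmZ_i g)$.

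\emph{The integrated identity.} Integrating the first identity over $\mathrm{SU}(2)$ against Haar measure, the right-hand side integrates to $\sum_{|\alpha|=m}\int_{\mathrm{SU}(2)}\bmZ_k\bigl(\bmZ^\alpha f\cdot\bmZ^\alpha g\bigr)\,\mathrm d\mu=0$, since a left-invariant vector field on the unimodular group $\mathrm{SU}(2)$ is divergence-free for Haar measure, hence $\int_{\mathrm{SU}(2)}\bmZ_k h\,\mathrm d\mu=0$ for every $h$. The statement for $\bmX_\pm=-(\bmZ_2\pm i\bmZ_1)$ then follows by $\mathbb C$-linearity in $k$.

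\emph{Main obstacle.} The only genuine difficulty is the combinatorial step for general $m$: keeping the multiplicities of the monomials straight while shuffling $\bmZ_k$ through them, so that the pairing exploiting the antisymmetry of $\epsilon_{kjl}$ is correctly aligned. The reduction and the integrated corollary are routine once this is in place.
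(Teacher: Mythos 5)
Your reduction to the vanishing of the commutator corrections, the $m=1$ computation, and the integrated corollary (via $\int_{\mathrm{SU}(2)}\bmZ_k h\,\d\mu=0$ for a left-invariant field on a unimodular group) are all correct; note also that the paper records this lemma without proof, citing Friedrich, so there is no argument of the authors' to compare against. However, the step you defer as ``bookkeeping'' is the actual crux, and it cannot be completed for the statement as literally written. Your pairing mechanism---cancelling the replacement ``$\bmZ_j\mapsto\bmZ_l$'' against its $j\leftrightarrow l$ companion---requires that every word obtained from a monomial in the sum by altering one letter again belong to the index set. This holds if the sum ranges over all $3^m$ ordered words $w$ of length $m$, in which case $\sum_w \bmZ^w\otimes\bmZ^w=(\sum_j\bmZ_j\otimes\bmZ_j)^m$ is a power of the ad-invariant Casimir-type element and your induction closes (using $\sum_{|w|=m}(\cdot)=\sum_j\sum_{|w'|=m-1}(\cdot)$ applied to the pairs $(\bmZ_jf,\bmZ_jg)$). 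It fails for the sum over the $\binom{m+2}{2}$ canonically ordered monomials $\bmZ_1^{\alpha_1}\bmZ_2^{\alpha_2}\bmZ_3^{\alpha_3}$, which is what the lemma defines: the reordered words produced by the commutators simply do not occur in the index set, your parenthetical ``(equivalently, symmetrised products)'' is not an equivalence (the multi-index sum lacks the multinomial weights $m!/\alpha!$), and the proposed induction via $\bmZ^\alpha=\bmZ_i\bmZ^{\alpha'}$ breaks down because the multi-index sum does not decompose as a free sum over the outermost letter.

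In fact the pointwise identity is false for $m\geq 2$ under the literal reading. Quickest check: validity for a fixed $k$ forces the principal symbol $\sum_{|\alpha|=m}\zeta^\alpha\eta^\alpha$ of the bidifferential operator $\sum_{|\alpha|=m}\bmZ^\alpha\otimes\bmZ^\alpha$ to be invariant under simultaneous rotation of $\zeta,\eta\in\mathbb{R}^3$ about the $k$-axis, but for $m=2$ this polynomial takes the value $1$ at $\zeta=\eta=(1,0,0)$ and $3/4$ after rotating both by $\pi/4$ in the $1$--$2$ plane. Explicitly, with $[\bmZ_i,\bmZ_j]=\epsilon_{ijk}\bmZ_k$ and $k=3$, the two sides of the claimed identity differ by
\begin{equation*}
-\,\bmZ_2\bmZ_1 f\cdot\bigl(\bmZ_1^2-\bmZ_2^2\bigr)g\;-\;\bigl(\bmZ_1^2-\bmZ_2^2\bigr)f\cdot\bmZ_2\bmZ_1 g,
\end{equation*}
which is precisely the contribution of the missing word $\bmZ_2\bmZ_1$; an integration by parts shows the integrated version fails as well. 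So the lemma must be read (as in Friedrich's usage) with $\sum_{|\alpha|=m}$ running over ordered words of length $m$, and with that reading your strategy is the correct proof; as it stands, your proposal establishes the word-sum version while silently conflating it with the multi-index version actually stated.
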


\section{Interior equations on \texorpdfstring{$\mathcal{I}$}{I}}
\label{Appendix:InteriorEquations}

The total characteristic nature of the cylinder at spatial infinity for the wave equation is reflected in the fact that the wave operator $\square$ reduces, upon evaluation on $\mathcal{I}$, to the interior operator $\mathring{\square}\equiv\square|_{\mathcal{I}}$, which acts on a scalar $\phi$ by
\[
\mathring{\square} \phi \equiv (1-\tau^2) \ddot\phi
-2\tau \dot\phi -\frac{1}{2}\{ \bmX_+, \, \bmX_- \} \phi,
\]
where $\dot{\phi} = \partial_\tau \phi$ and $\{\bmX_+, \, \bmX_- \}$, denoting the anticommutator of $\bmX_+$ and $\bmX_-$, is proportional to the Laplacian on the unit $2$-sphere. Computing the commutator
\[ 
[ \partial_\rho^p, \, \square] \phi =
2\tau p \partial_\rho^p\dot{\phi} -2 p\rho \partial_\rho^p \phi' - p(p-1)\partial_\rho^p\phi \]
 we obtain
\[
\square \partial^p_\rho \phi + 2\tau p \partial^p_\rho\dot{\phi}-2 \rho p \partial^p_\rho \phi' -p(p-1)\partial_\rho^p\phi =0. 
\]
 Evaluating on $\mathcal{I}$, one therefore finds the following set of interior equations on $\mathcal{I}$ for all $p \geq 0$,
\begin{equation}
\label{InteriorEquationsOnCylinder}
\mathring{\square} \phi^{(p)} + 2\tau p \dot{\phi}^{(p)} -p(p-1)\phi^{(p)} =0. 
\end{equation}
The equations \eqref{InteriorEquationsOnCylinder} are solvable explicitly in terms of a basis of functions on $\mathrm{SU}(2)$, as we explain below.

\subsection{Expansions in terms of Friedrich's basis $T_m{}^j{}_k$}
\label{Section:Expansions_In_Harmonics}

According to the Peter--Weyl theorem, the functions $\sqrt{m+1} T_m{}^j{}_k$, $m \in \mathbb{N} \cup \{ 0 \}$, $j, \, k \in \{0, 1, \ldots, m\}$, form an orthonormal basis for $L^2(\mathrm{SU}(2),\d\mu)$ (see e.g. Appendix A.3 of \cite{TauVal23}, \cite{BeyDouFraWha12} or \cite{Fri98a}), so any $L^2$ function on $\mathrm{SU}(2)$ can be expanded in terms of this basis. Using the Hopf map, one can relate the functions $T_m{}^j{}_k$ to the more usual spin-weighted spherical harmonics ${}_s Y_{\ell m}$:
\begin{equation} \label{spin_weighted_harmonics}
{}_sY_{\ell m} = (-1)^{s+m} \sqrt{\frac{2\ell+1}{4\pi}} T_{2\ell}{}^{\ell-m}{}_{\ell-s}
\end{equation}
for $\ell \in \mathbb{N} \cup \{0 \}$ and $m, \,s \in \{ -\ell, -\ell+1, \ldots,\ell-1, \ell \}$. Further details can be found in  \cite{FriKan00,Val04d} (the reader unfamiliar with the functions $T_m{}^j{}_k$ may take \eqref{spin_weighted_harmonics} as a definition). Since the spin weight of $\phi$ is $s =0$, we expand the functions $\phi^{(p)}$ in \eqref{InteriorEquationsOnCylinder} as $\sum_{\ell,m} \phi^{(p)}_{\ell m} {}_0 Y_{\ell m}$, or equivalently
\begin{equation}
\phi^{(p)}= \sum_{\ell=0}^\infty \sum_{j=0}^{2\ell} a_{p;\ell,j} T_{2\ell}{}^j{}_{\ell},
\label{ExpansionHarmonics}
\end{equation}
where the coefficients $a_{p;\ell,j}$ are functions of $\tau$. 

From the expansion \eqref{ExpansionHarmonics} and equation \eqref{InteriorEquationsOnCylinder}, it follows then that the coefficient $a_{p;\ell,j}$ satisfies the ODE
\begin{equation}
    (1-\tau^2)\ddot{a}_{p;\ell,j} +2 (p-1)\tau  \dot{a}_{p;\ell,j} + \big( \ell^2+ \ell -p^2+p  \big)a_{p;\ell,j} =0,
\label{JacobiEqn:Raw}
\end{equation}
where the integers $(p,\ell,j)$ are such that  $0\leq j\leq 2\ell$. Equation \eqref{JacobiEqn:Raw} is an example of a \emph{Jacobi ordinary differential equation}. Jacobi equations are usually parametrised in the form
\begin{equation}
    D_{(n,\alpha,\beta)}a \equiv (1-\tau^2) \ddot{a} +\big(\beta-\alpha -(\alpha+\beta+2)\tau \big)\dot{a} + n(n+\alpha+\beta+1) a =0.
    \label{JacobiEqn:Model}
\end{equation}
A direct comparison between equations \eqref{JacobiEqn:Raw} and \eqref{JacobiEqn:Model} gives
\begin{subequations}
\begin{eqnarray}
&& \alpha = -p, \label{Jacobi:alpha}\\
&& \beta = -p, \label{Jacobi:beta} \\
&& n= n_1\equiv p+ \ell, \qquad \mbox{or} \qquad n=n_2\equiv p- \ell -1. \label{Jacobi:n}
\end{eqnarray}
\end{subequations}

The qualitative nature of the solutions to equation \eqref{JacobiEqn:Raw} differs depending on whether $\ell <p$ or $\ell \geq p$, as we explain below.

\subsubsection{Properties of the Jacobi ODE}
\label{Section:JacobiODE}

An extensive discussion of the solutions of the Jacobi equation can be found in the monograph \cite{Sze78} from which we borrow a number of identities. The solutions to  \eqref{JacobiEqn:Model} are given by the \emph{Jacobi polynomials} $P_n^{(\alpha, \beta)}(\tau)$, of degree $n$, defined by
\[
P_n^{(\alpha,\beta)}(\tau) \equiv \sum_{l=0}^n
\binom{n+\alpha}{l}\binom{n+\beta}{n-l}\left( \frac{\tau-1}{2}\right)^{n-l}\left(\frac{\tau+1}{2}\right)^l.
\]
In particular, one has $P_0^{(\alpha,\beta)}(\tau)=1$, and $P_n^{(\alpha,\beta)}(-\tau) =(-1)^n P_n^{(\beta,\alpha)}(\tau)$. The differential operator defined by \eqref{JacobiEqn:Model} exhibits the symmetries
\begin{subequations}
\begin{eqnarray}
&&\hspace{-1.5cm} D_{(n,\alpha,\beta)} \left( \left( \frac{1-\tau}{2}
   \right)^{-\alpha}a(\tau)  \right) = \left( \frac{1-\tau}{2}
   \right)^{-\alpha} D_{(n+\alpha,-\alpha,\beta)} a(\tau), \label{JacobiEqnIdentity1}\\
&&\hspace{-1.5cm} D_{(n,\alpha,\beta)} \left( \left( \frac{1+\tau}{2}
    \right)^{-\beta}a(\tau)  \right) = \left( \frac{1+\tau}{2}
    \right)^{-\beta} D_{(n+\beta,\alpha,-\beta)} a(\tau), \label{JacobiEqnIdentity2}\\
&&\hspace{-1.5cm} D_{(n,\alpha,\beta)} \left( \left( \frac{1-\tau}{2}
   \right)^{-\alpha}\left( \frac{1+\tau}{2}
    \right)^{-\beta}a(\tau)  \right) = \left( \frac{1-\tau}{2}
   \right)^{-\alpha}\left( \frac{1+\tau}{2}
    \right)^{-\beta} D_{(n+\alpha+\beta,-\alpha,-\beta)} a(\tau), \label{JacobiEqnIdentity3}
\end{eqnarray}
\end{subequations}
which hold for $|\tau|<1$, arbitrary $C^2$-functions $a(\tau)$, and
arbitrary real values of the parameters $\alpha$, $\beta$, $n$. An alternative definition of the Jacobi polynomials, convenient for
verifying when the functions vanish identically, is given by
\[
P_n^{(\alpha,\beta)}(\tau) = \frac{1}{n!}\sum_{k=0}^n c_k \left( \frac{\tau-1}{2} \right)^k,
\]
with
\begin{eqnarray*}
&& \hspace{-1cm}c_0\equiv (\alpha+1)(\alpha+2)\cdots(\alpha +n), \\
&& \hspace{2cm}\vdots \\
&& \hspace{-1cm} c_k \equiv \frac{n!}{k!(n-k)!}(\alpha+k+1)(\alpha+k+2)\cdots
   (\alpha+n) (n+1 +\alpha +\beta) (n+2+\alpha+\beta)\cdots
   (n+k+\alpha+\beta),\\
&& \hspace{2cm}\vdots \\
&& \hspace{-1cm}c_n\equiv (n+1+\alpha+\beta)(n+2+\alpha+\beta) \cdots (2n+\alpha+\beta).
\end{eqnarray*}

Of particular interest is the case when the two solutions are polynomial. This situation is described in the following classical lemma.

\begin{lemma}[\cite{Sze78}, Theorem 4.23.2]
\label{LemmaSzegoA3}
For integers $n, \alpha, \beta$ with $n\geq 0$ the only cases for which the general solution to \eqref{JacobiEqn:Model} is polynomial are:
\begin{itemize}
    \item [(a)] $\alpha, \beta <0$ with $\alpha \geq -n$, $\beta\geq -n$, $\alpha+\beta \leq -n-1$, $n\geq 1$;
    \item[(b)] $\alpha, \beta <0$, $\alpha<-n$, $\beta<-n$, $n\geq 0$.
\end{itemize}
\end{lemma}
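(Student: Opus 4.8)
The plan is to recognise \eqref{JacobiEqn:Model} as a disguised form of the hypergeometric equation and to read off the classification from a local analysis at its three singular points. First I would substitute $x=(1-\tau)/2$, turning \eqref{JacobiEqn:Model} into the equation satisfied by ${}_2F_1(-n,\,n+\alpha+\beta+1;\,\alpha+1;\,x)$; this is Fuchsian with singular points $\tau=1,-1,\infty$ and local exponent pairs $\{0,-\alpha\}$, $\{0,-\beta\}$ and $\{-n,\,n+\alpha+\beta+1\}$ respectively, using the hypothesis that $n,\alpha,\beta$ are integers. The whole argument then rests on the elementary observation that a polynomial solution is holomorphic at $\tau=\pm1$ and has a non-negative polynomial order at $\tau=\infty$, so that the general solution is polynomial precisely when the equation admits, near each singular point, a full basis of solutions with those properties --- in particular with no logarithmic terms.

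For \emph{necessity} I would work locally. Holomorphy of the entire solution space at $\tau=1$ forces $-\alpha$ to be a non-negative integer; since the two exponents at $\tau=1$ then differ by a non-negative integer, I must also exclude the logarithm that such a resonance generically produces. The case $\alpha=0$ (coincident exponents $\{0,0\}$) always carries a $\log$-solution, so $\alpha<0$, and symmetrically $\beta<0$. For $\alpha\leq-1$ the standard no-logarithm criterion for ${}_2F_1$ at $\tau=1$ reads: $\alpha\leq-n-1$, or else $\beta\geq-n$ and $\alpha+\beta\leq-n-1$; the mirrored statement at $\tau=-1$ reads $\beta\leq-n-1$, or else $\alpha\geq-n$ and $\alpha+\beta\leq-n-1$. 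Finally, if $n+\alpha+\beta+1>0$ the Frobenius solution at $\tau=\infty$ with that exponent decays and cannot be a polynomial, so $\alpha+\beta\leq-n-1$. Feeding $\alpha+\beta\leq-n-1$ into the two resonance dichotomies collapses them to exactly two consistent possibilities: $\alpha\leq-n-1$ and $\beta\leq-n-1$ (case (b)), or $\alpha\geq-n$ and $\beta\geq-n$ with $\alpha+\beta\leq-n-1$ (which, since $\alpha,\beta<0$, forces $n\geq1$ and is case (a)); the remaining mixed Boolean combinations are vacuous.

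For \emph{sufficiency} I would exhibit two independent polynomial solutions in each case. In case (a) the symmetry identities \eqref{JacobiEqnIdentity1}--\eqref{JacobiEqnIdentity3} do the job directly: since $n+\alpha\geq0$ and $n+\beta\geq0$, the functions $(1-\tau)^{-\alpha}P_{n+\alpha}^{(-\alpha,\beta)}(\tau)$ and $(1+\tau)^{-\beta}P_{n+\beta}^{(\alpha,-\beta)}(\tau)$ are polynomial solutions of \eqref{JacobiEqn:Model}, and they are linearly independent because the first vanishes to order $-\alpha\geq1$ at $\tau=1$ while the second does not (both non-vanishings being visible from the explicit coefficient formula for the Jacobi polynomials displayed above). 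In case (b) the symmetry identities are unavailable (the relevant shifted index is negative), but $P_n^{(\alpha,\beta)}$ itself now has degree exactly $n$ and is non-zero at $\tau=\pm1$ --- again read off from the coefficients $c_0,\dots,c_n$, none of which vanishes when $\alpha,\beta\leq-n-1$ --- and a second independent polynomial solution is produced by reduction of order, $Q=P_n^{(\alpha,\beta)}\!\int^{\tau}\!(1-\sigma)^{-\alpha-1}(1+\sigma)^{-\beta-1}\big(P_n^{(\alpha,\beta)}(\sigma)\big)^{-2}\d\sigma$; here the integrand is rational with vanishing residue at every finite pole (the zeros of $P_n^{(\alpha,\beta)}$ are ordinary points of \eqref{JacobiEqn:Model}, and $\tau=\pm1$ are not poles since $-\alpha-1,-\beta-1\geq0$), so no logarithm appears and a degree count shows $Q$ is a polynomial.

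The step I expect to be the main obstacle is the logarithm/resonance bookkeeping at $\tau=\pm1$: once $\alpha,\beta$ are negative integers the local exponents there always differ by an integer, so one must pin down exactly when the resonance forces a genuine $\log$-term (which kills polynomiality) and when it cancels. This is precisely where cases (a) and (b) bifurcate, and placing the thresholds at $\alpha=-n$ and $\beta=-n$ correctly --- without an off-by-one error --- is the delicate point; everything else is either a soft Fuchsian-theory observation or a finite computation with the explicit formulas for $P_n^{(\alpha,\beta)}$.
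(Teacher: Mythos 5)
Your proposal is correct, but there is nothing in the paper to compare it against: the paper states this lemma as a citation of Szeg\H{o} (\cite{Sze78}, Theorem 4.23.2) and supplies no proof, so you have effectively reconstructed the classical argument from scratch. Your route --- reducing \eqref{JacobiEqn:Model} to the hypergeometric equation with parameters $a=-n$, $b=n+\alpha+\beta+1$, $c=\alpha+1$, reading off the exponent pairs $\{0,-\alpha\}$, $\{0,-\beta\}$, $\{-n,\,n+\alpha+\beta+1\}$, and then doing the resonance bookkeeping --- is sound, and I checked the delicate points. The no-logarithm criterion at $\tau=1$ you quote is exactly the condition $(a)_{-\alpha}(b)_{-\alpha}=0$, i.e.\ $\alpha\leq -n-1$ or ($\beta\geq -n$ and $\alpha+\beta\leq -n-1$), and together with its mirror at $\tau=-1$ and the constraint $\alpha+\beta\leq -n-1$ from infinity the Boolean case analysis does collapse to precisely (a) and (b), with $n\geq 1$ forced in (a) by $\alpha,\beta<0$. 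The sufficiency constructions also check out: in case (a) the leading coefficients and the values $P_{n+\alpha}^{(-\alpha,\beta)}(1)=\binom{n}{n+\alpha}\neq 0$, $P_{n+\beta}^{(\alpha,-\beta)}(1)=\binom{n+\alpha+\beta}{n+\beta}\neq 0$ confirm non-degeneracy and linear independence; in case (b) the reduction-of-order integrand has a double pole at each (necessarily simple, since an ordinary point) zero of $P_n^{(\alpha,\beta)}$ with vanishing residue, so the antiderivative is rational with simple poles there and $P\cdot\!\int$ is a polynomial of degree $-n-\alpha-\beta-1$, matching the second exponent at infinity. Two places are stated tersely but are correct: (i) the claim that $\alpha=0$ always produces a logarithm needs the observation that $\tau=1$ remains a genuine regular singular point there (the residue of the first-order coefficient is $\alpha+1=1\neq 0$); (ii) the vanishing of the residues in case (b) is most cleanly justified by noting that every solution is holomorphic at an ordinary point, so the reduction-of-order solution cannot contain a logarithm. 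As a sanity check, your classification reproduces the paper's usage in Appendix \ref{Appendix:InteriorEquations}: for $\alpha=\beta=-p$ and $n=p+\ell$ one lands in case (a) exactly when $\ell<p$, and outside both cases when $\ell\geq p$, consistent with the logarithms found there.
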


\subsubsection{Solutions for $0\leq \ell <p$}

When $n=n_1$, we are in case (a) of \Cref{LemmaSzegoA3},  and when $n=n_2$, we are in case (b), therefore when $0 \leq \ell< p$ one has two linearly independent polynomial solutions. By direct inspection of the formulae above, the polynomial $P_{n_1}^{(\alpha,\beta)}(\tau)=P_{n_1}^{(-p,-p)}(\tau)$ with $(n_1,\,\alpha,\,\beta)=(p+ \ell,-p,-p)$ vanishes identically, while 
\[
Q_2(\tau) \equiv P_{n_2}^{(-p,-p)}(\tau)
\]
gives a polynomial of degree $n_2=p-\ell-1$. A further non-trivial solution can be written down using the identity \eqref{JacobiEqnIdentity1}; one finds a polynomial of degree $n_1 = p + \ell$ given by
\[
Q_1(\tau) \equiv \left(\frac{1-\tau}{2}  \right)^{p}P_{\ell}^{(p,-p)}(\tau).
\]
Since $n_2<n_1$, the solutions $Q_1$ and $Q_2$ are linearly independent. Yet another solution can be obtained using identity \eqref{JacobiEqnIdentity2}, namely
\[
Q_3(\tau)\equiv \left( \frac{1+\tau}{2} \right)^{p}P_{\ell}^{(-p,p)}(\tau),
\]
which, again, is a polynomial of degree $n_1$. It can be verified that $Q_1$ and $Q_3$ are also linearly independent. Making use of these solutions one can write down the general solution to equation \eqref{JacobiEqn:Raw}, for $0\leq \ell <p$, in the symmetric form
\begin{equation}
\label{a_coefficient_q_less_p}
a_{p;\ell,j}(\tau) = \mathfrak{c}_{p;\ell,j}\left(\frac{1-\tau}{2}  \right)^{p}P_{\ell}^{(p,-p)}(\tau) + \mathfrak{d}_{p;\ell,j}\left( \frac{1+\tau}{2} \right)^{p}P_{\ell}^{(-p,p)}(\tau),
\end{equation}
with $\mathfrak{c}_{_{p;\ell,j}}$ and $\mathfrak{d}_{_{p;\ell,j}}$ constants to be determined from the initial conditions. In particular, we have the following lemma:

\begin{lemma}
For $0\leq \ell <p$, the solutions to the Jacobi equation \eqref{JacobiEqn:Raw} are analytic at $\tau=\pm 1$.
\end{lemma}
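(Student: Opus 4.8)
The plan is to read the statement off directly from the explicit general solution \eqref{a_coefficient_q_less_p} already constructed above, so that almost no new work is required. It is worth first noting \emph{why} the claim is not vacuous: rewriting \eqref{JacobiEqn:Raw} near $\tau=1$ in the variable $x=1-\tau$ one finds a regular singular point with indicial exponents $0$ and $p$, which differ by the positive integer $p$; for such configurations a logarithmic term in one of the Frobenius solutions is the generic possibility, and the content of the lemma is that for $0\le\ell<p$ this logarithm is absent (and likewise at $\tau=-1$). The strategy is to exhibit two genuinely polynomial fundamental solutions, which automatically rules out any logarithmic branch.

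The key steps, in order, are as follows. First I would invoke the two solutions already produced from the symmetries \eqref{JacobiEqnIdentity1}--\eqref{JacobiEqnIdentity2}, namely $Q_1(\tau)=\left(\tfrac{1-\tau}{2}\right)^{p}P_\ell^{(p,-p)}(\tau)$ and $Q_3(\tau)=\left(\tfrac{1+\tau}{2}\right)^{p}P_\ell^{(-p,p)}(\tau)$, which are linearly independent. Since \eqref{JacobiEqn:Raw} is a second-order linear ODE whose leading coefficient $1-\tau^2$ is nonvanishing on $(-1,1)$, its solution space there is exactly two-dimensional, so every solution $a_{p;\ell,j}$ is the linear combination $\mathfrak{c}_{p;\ell,j}Q_1+\mathfrak{d}_{p;\ell,j}Q_3$ of \eqref{a_coefficient_q_less_p}. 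Second, I would check that $Q_1$ and $Q_3$ are honest polynomials in $\tau$ of the expected degree: the prefactors $\left(\tfrac{1\mp\tau}{2}\right)^{p}$ are polynomials of degree $p$, while $P_\ell^{(p,-p)}$ and $P_\ell^{(-p,p)}$ are polynomials of degree exactly $\ell$, as one sees from the representation $P_n^{(\alpha,\beta)}(\tau)=\tfrac{1}{n!}\sum_{k=0}^{n}c_k\left(\tfrac{\tau-1}{2}\right)^{k}$ recalled above, whose top coefficient in both cases is $c_\ell=(\ell+1)(\ell+2)\cdots(2\ell)=(2\ell)!/\ell!\neq 0$. Hence $Q_1$, $Q_3$, and therefore every solution $a_{p;\ell,j}$, are polynomials of degree $p+\ell$.

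The conclusion is then immediate: a polynomial extends to an entire function, hence is real-analytic in a neighbourhood of each of $\tau=1$ and $\tau=-1$. I expect the only delicate point — the \emph{main obstacle}, though a mild one — to be the verification that the Jacobi polynomials carrying the negative integer parameter $-p$ do not degenerate (drop in degree, or vanish identically); this is exactly what the computation $c_\ell\neq 0$ secures, and it matters because the companion solution $P_{n_1}^{(-p,-p)}$ \emph{does} vanish identically, so one cannot take non-degeneracy for granted.
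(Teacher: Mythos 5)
Your proof is correct and follows essentially the same route as the paper: the analyticity is read off from the explicit general solution \eqref{a_coefficient_q_less_p}, whose two summands $Q_1$ and $Q_3$ are genuine polynomials of degree $p+\ell$. Your extra check that the leading coefficient $c_\ell=(\ell+1)\cdots(2\ell)\neq 0$ (so that $P_\ell^{(p,-p)}$ and $P_\ell^{(-p,p)}$ do not degenerate, unlike the identically vanishing $P_{n_1}^{(-p,-p)}$) is a worthwhile verification of what the paper obtains via Lemma \ref{LemmaSzegoA3}, but the argument is otherwise the paper's own.
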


\begin{remark}
It is clear that the formula \eqref{a_coefficient_q_less_p} does not give non-vanishing solutions if $p=\ell$.
\end{remark}

 \subsubsection{Solutions for $p=\ell$}
This case is no longer covered by Lemma \ref{LemmaSzegoA3}, therefore there is at most one polynomial solution. In this case we make use of identity \eqref{JacobiEqnIdentity3} with $n=n_1$, and look for solutions of the form
 \[
 a_{p;p,j}(\tau)= \left(\frac{1-\tau}{2}\right)^{p}\left(\frac{1+\tau}{2} \right)^{p} b(\tau),
 \]
 with $b(\tau)$ satisfying the equation
 \[
 D_{(0,p,p)}b(\tau) = (1-\tau^2) \ddot{b}(\tau) - 2\big( (p+1)\tau \big)\dot{b}(\tau)=0.
 \]
This can be integrated to give
 \begin{equation}
 b(\tau) = \mathfrak{c}_{p;p,j} + \mathfrak{d}_{p;p,j} \int_0^\tau \frac{\mathrm{d}\varsigma}{(1+\varsigma)^{p+1}(1-\varsigma)^{p+1}},
 \label{Solution:b}
 \end{equation}
 with $\mathfrak{c}_{_{p;p,j}}$ and $\mathfrak{d}_{p;p,j}$ constants. 
 Thus, the general solution to \eqref{JacobiEqn:Raw} for $p=\ell$ can be written as 
 \begin{equation}
 a_{p;p,j}(\tau)=\left( \frac{1-\tau}{2}\right)^{p}\left(\frac{1+\tau}{2} \right)^{p} \left(\mathfrak{c}_{p;p,j} + \mathfrak{d}_{p;p,j} \int_0^\tau \frac{\mathrm{d}\varsigma}{(1+\varsigma)^{p+1}(1-\varsigma)^{p+1}}\right).
 \label{JacobiSolution:Logarithmic}
 \end{equation}
Expanding the integrand of \eqref{Solution:b} in partial fractions, one sees that $ a_{p;p,j}(\tau)$ contains terms of the form
 \[
 (1-\tau)^{p}\log(1-\tau) \quad \text{and} \quad (1+\tau)^{p}\log(1+\tau),
 \]
 which are, respectively, of class $C^{p-1}$ and $C^{p-1}$ at $\tau=\pm 1$. These are the only singular terms in the solution \eqref{JacobiSolution:Logarithmic}. The rest of the solution is polynomial in $\tau$, and thus analytic at $\tau=\pm 1$. The solutions in the case $p=\ell$ are therefore not smooth at the critical sets $\mathcal{I}^\pm$ (cf. \cite{MMV22}).

\subsubsection{Solutions for $\ell > p$}
As in the case $p=\ell$, the conditions of \Cref{LemmaSzegoA3} are not satisfied and we have one polynomial solution and one with logarithmic divergences. Here, however, the structure of the singular solution is more complicated than in the case $p= \ell$. Set
\[
\ell=p+m, \qquad m\in \mathbb{N}.
\]
It follows then that
\[
\ell^2+\ell -p^2+p =(p+\ell)(\ell-p+1) = (2p+m)(m+1).
\]
Thus, one obtains the following reduced version of \eqref{JacobiEqn:Raw}:
\begin{equation}
(1-\tau^2)\ddot{a} +2\big(p-1)\tau \dot{a} +(2p+m)(m+1)a =0,
\label{JacobiEqn:Raw2}
\end{equation}
where here and below we have suppressed the indices of the function $a_{p;p+m,j}$ to ease readability. 
In particular, one has that
\[
n_1 =2p+m, \qquad \alpha = \beta=-p.
\]
Using the identity \eqref{JacobiEqnIdentity3} we then look for solutions of the form
\begin{equation}
a(\tau) = \left( \frac{1-\tau}{2}\right)^{p}\left( \frac{1+\tau}{2}\right)^{p}\tilde{a}(\tau),
\label{AnsatzSolution}
\end{equation}
where $\tilde{a}$ satisfies
\[
D_{(n',\alpha',\beta')}\tilde{a}=0,
\]
with
\begin{eqnarray*}
&& n'\equiv n_1+\alpha+\beta = m, \\
&& \alpha' \equiv -\alpha = p, \\
&& \beta' \equiv -\beta = p.
\end{eqnarray*}
Explicitly, one has that
\begin{equation}
D_{(n',\alpha',\beta')}\tilde{a} = (1-\tau^2) \ddot{\tilde{a}}
 -(2p+2)\tau\dot{\tilde{a}}+m(m+2p+1)\tilde{a}.
 \label{JacobiEqn:ReducedLog}
 \end{equation}
It follows then that the polynomial of degree $m$
\[
P_{n'}^{(\alpha',\beta')}(\tau) = P_m^{(p,p)}(\tau)
\]
is the only polynomial solution to equation \eqref{JacobiEqn:ReducedLog}. The second linearly independent solution is given by the Jacobi function of the second kind $Q_m^{(p,p)}(\tau)$. Following equation (4.62.1) in \cite{Sze78}, it admits the representation
\begin{equation}
Q_m^{(p,p)}(\tau) = \frac{1}{2(1-\tau^2)^p}\int_{-1}^1 (1-s^2)^p F(\tau,s)\, \mathrm{d}s + P_m^{(p,p)}(\tau) Q_0^{(p,p)}(\tau),
\label{JacobiSecondKindIntegralRepresentation}
\end{equation}
where
\[
F(\tau,s) \equiv \frac{P_m^{(p,p)}(\tau)-P_m^{(p,p)}(s)}{\tau-s}.
\]
Observe that for fixed $\tau\in[-1,1]$ the function $F(\tau,s)$
is actually a polynomial: both the numerator and denominator have a zero at $s=\tau$ so that the nominally rational expression actually reduces to a polynomial. In particular, the integral in the first term of \eqref{JacobiSecondKindIntegralRepresentation} is well-defined. One then has that 
\[
Q_m^{(p,p)}(\tau) = \frac{\mathcal{P}(\tau)}{(1-\tau^2)^p}+ P_m^{(p,p)}(\tau) Q_0^{(p,p)}(\tau),
\]
where $\mathcal{P}(\tau)$ is a polynomial satisfying, generically, that $\mathcal{P}(\pm 1)\neq 0$. Its specific form will not be relevant for the subsequent discussion. In view of equation \eqref{AnsatzSolution},  the singular behaviour in $a(\tau)$ at $\tau=\pm 1$ is contained in $Q_0^{(p,p)}(\tau)$. Now one has from equation (4.62.2) in \cite{Sze78} that, for $\tau > 0$,
\[
Q_0^{(p,p)}(\tau) = \frac{(-1)^{p+1}}{2}\log(1-\tau) + (1-\tau)^{-p}M( 1-\tau ), 
\]
where $M(s)$ is a power series convergent for $|s|<1$ with $M(0)\neq 0$, and in particular that 
\[
Q_0^{(0,0)}(\tau) =\frac{1}{2}\log\left( \frac{1+\tau}{1-\tau} \right).
\]
A similar behaviour can be deduced for $\tau < 0$ near $\tau=-1$. It follows then that the singular behaviour of $a(\tau)$  near $\tau=1$ is of the 
form
\[
(1-\tau)^{p}\log(1-\tau).
\]
That is, the general solution to the ODE \eqref{JacobiEqn:Raw} is of class $C^{p-1}$ at $\tau=1$. The behaviour at $\tau=-1$ can be obtained \emph{mutatis mutandi}: the singular behaviour at $\tau=-1$ is of the form $(1+\tau)^{p}\log(1+\tau)$. In summary, the generic solution to \eqref{JacobiEqn:Raw} is of class $C^{p-1}$ at $\tau=\pm 1$. 

To conclude, we record that the general solution of the Jacobi ODE \eqref{JacobiEqn:Raw2} with $m\geq 0$ ($\ell \geq p$) can be written (restoring indices) as   
\begin{equation}
a_{p;p+m,j}(\tau) = \left(\frac{1-\tau}{2} \right)^{p}\left( \frac{1+\tau}{2} \right)^{p} P_{m}^{(p,p)}(\tau) \left( \mathfrak{c}_{p;p+m,j}  + \mathfrak{d}_{p;p+m,j} Q_{0}^{(p,p)}(\tau) \right) + \mathfrak{d}_{p;p+m,j}\mathcal{P}_{p;m}(\tau) \label{JacobiLogarithmicSolution}
\end{equation}
with $\mathcal{P}_{p;m}(\tau)=0$ if $p=m=0$ but non-zero otherwise. Crucially, $\mathcal{P}_{p;m}(-1)\neq 0$ as long as $p\neq 0$ and $m\neq 0$. 

\begin{remark}
{\em A property of the solution \eqref{JacobiLogarithmicSolution} which plays a central role in the main text is that in the absence of logarithmic solutions, i.e. when $\mathfrak{d}_{p;p+m,j}=0$ the function $a_{p;p+m,j}(\tau)$ has zeros of order $p$ at both $\tau=\pm 1$. This property, in turn, responsible for the factor of $(1-\tau)^p$ in the summands of the expansion given by equation \eqref{ExpansionSplit} and, ultimately, of the specific form of the physical expansion \eqref{ExpansionFinal}. }  
\end{remark}

\begin{remark}
{\em The discussion in this appendix in particular shows that for the wave equation with our regularity assumptions, there is complete symmetry in the presence of logarithmic divergences at the critical sets $\mathcal{I}^\pm$. This situation is in contrast with the situation for solutions to spin-$s$ fields with $s\neq 0$, where in the analogous analysis the spin $s \in \frac{1}{2} \mathbb{N}$ feeds into the power of $(1 \pm \tau)$ multiplying $\log (1\pm \tau )$ asymmetrically \cite{TauVal23}.}
\end{remark}

\begin{remark}
{\em The above analysis of the intrinsic transport equations on $\mathcal{I}$ can also be performed in terms of so-called ultraspherical Gegenbauer polynomials: see \cite{FueHen24}. Gegenbauer polynomials are special cases of Jacobi polynomials.}
\end{remark}

\printbibliography

\end{document}